\newcommand{\alg}[1]{\mathcal{#1}}
\numberwithin{equation}{section}
\title{The category of anyon sectors for non-abelian quantum double models}
\author[1]{Alex Bols}
\author[2]{Mahdie Hamdan}
\author[2]{Pieter Naaijkens}
\author[3]{Siddharth Vadnerkar}
\affil[1]{Institute for Theoretical Physics, ETH Z{\"u}rich, Switzerland}
\affil[2]{School of Mathematics, Cardiff University, United Kingdom}
\affil[3]{Department of Physics, University of California, Davis, CA, USA}
\date{\today}
\begin{document}

\maketitle

\begin{abstract}
We study Kitaev's quantum double model for arbitrary finite gauge group in infinite volume, using an operator-algebraic approach.
The quantum double model hosts anyonic excitations which can be identified with equivalence classes of `localized and transportable endomorphisms', which produce anyonic excitations from the ground state. 
Following the Doplicher--Haag--Roberts (DHR) sector theory from AQFT, we organize these endomorphisms into a braided monoidal category capturing the fusion and braiding properties of the anyons. We show that this category is equivalent to $\Rep_f \caD(G)$, the representation category of the quantum double of $G$. This establishes for the first time the full DHR structure for a class of 2d quantum lattice models with non-abelian anyons.
\end{abstract}

\setcounter{tocdepth}{2}
\tableofcontents


\section{Introduction}
Kitaev's quantum double model~\cite{kitaev2003fault} is the prototypical example of a topologically ordered quantum spin system with long-range entanglement (see~\cite{MR3929747} for an introduction).
Such models host quasi-particle excitations with non-trivial braid statistics called anyons.
The physical properties of such anyons (such as their behavior under exchange or fusion) can be described algebraically by braided (and often even modular) tensor categories~\cite{Kitaev2006,Wang2010}.
In this paper we show that for the quantum double model for a finite gauge group $G$, defined on the plane, this braided  tensor category can be recovered from the unique frustration-free ground state of the model (under some mild technical assumption), and is given by $\Rep_f \caD(G)$, the category of finite dimensional unitary representations of the quantum double algebra of $G$. 

Our approach is motivated by the Doplicher--Haag--Roberts (DHR) theory of superselection sectors (see~\cite{HaagLQP} for an overview).
Mathematically, we can identify the anyons with certain equivalence classes of irreducible representations of the (quasi-local) observable algebra $\alg{A}$.

The relevant representations are those whose vector states approximately agree with the model's ground state on observables supported far away from some fixed point (which we can take as the origin), and whose support does not encircle this point.
The latter condition is to exclude observables corresponding to braiding other anyons around the fixed point, which are able to distinguish non-trivial anyon states from states in the ground state sector.

This intuition is conveniently captured by the \emph{superselection criterion}. Namely, a representation $\pi$
 satisfies the superselection criterion if
\begin{equation}
    \label{eq:sselect}
    \pi | \alg{A}_{\Lambda^c} \cong \pi_0 | \alg{A}_{\Lambda^c},
\end{equation}
where $\Lambda$ is any cone (a notion which we will make more precise later) and $\Lambda^c$ is its complement,  $\pi_0$ is the GNS representation of the (unique) frustration free ground state of the quantum double model, and $\alg{A}_{\Lambda^c}$ is the ${\rm C}^*$-algebra generated by all local observables localized in $\Lambda^c$.
That is, we consider representations that, outside \emph{any} cone, are unitarily equivalent to the ground state representation.
A superselection sector (or simply anyon sector) is an equivalence class of such representations.

The key insight of Doplicher, Haag and Roberts is that the superselection sectors are naturally endowed with a monoidal product (`fusion') and a symmetry describing the exchange of bosonic/fermionic sectors. This was later extended to describe braiding statistics~\cite{frs1,frs2}, yielding a braided monoidal category.
These categories precisely capture the physical properties of anyon sectors, including their braiding and fusion rules.
The essential technical step is that, using a technical property called Haag duality, one can pass from representations to endomorphisms of the quasi-local algebra which are localized (i.e., they act non-trivially only in the localization region) and transportable (the localization region can be moved around with unitaries).
See~\cite{HalvorsonMueger} for an overview of this construction in the language of $\rm{C}^*$-tensor categories.
This theory was initially developed in the context of relativistic quantum field theories. 
The construction has later been adapted to quantum spin systems, see e.g.~\cite{Naaijkens2011,Fiedler2014,Ogata2021}.
For a recent completely axiomatic approach towards anyon sector theory, see~\cite{Bhardwaj2024}.

In this paper we study the anyon sector theory, including fusion and braiding rules, of the quantum double model for arbitrary finite gauge group $G$~\cite{kitaev2003fault}, extending previous results obtained for abelian $G$~\cite{Fiedler2014}. 
In particular, our main result can be paraphrased as follows:

\begin{theorem*}[Informal]
Let $\pi_0$ be the GNS representation of the frustration free ground state of the quantum double model for a finite group $G$ defined on the plane and assume that it satisfies Haag duality.
Then the category of representations satisfying~\eqref{eq:sselect} is braided monoidally equivalent to $\Rep_f \caD(G)$, the category of finite dimensional unitary representations of the quantum double algebra $\caD(G)$.
\end{theorem*}

We will give a precise statement of our main result (including our assumptions) later when we have introduced the necessary terminology, but remark that Haag duality for cones is a technical property that holds for the abelian quantum double model~\cite{Fiedler2014}, and one can still construct a category of anyon sectors without it (or with a weaker version thereof).
A proof of Haag duality for a large class of models has recently been announced~\cite{HaagDuality}.
See Remark~\ref{rem:Haag 2} below for more details.
We also note that since $\Rep_f \caD(G)$ is a unitary modular tensor category, the category of anyon sectors is as well.

As mentioned earlier, our assumptions imply that there is a braided $\rm{C}^*$-category of superselection sectors~\cite{Fiedler2014,Ogata2021}.
Our main contribution in this paper is to construct this category explicitly for the quantum double model for all finite groups $G$.
The main idea is as follows.
For each irreducible represention of $\caD(G)$, examples of representations $\pi$ satisfying the superselection criterion~\eqref{eq:sselect} were constructed in~\cite{Naaijkens2015}.
It was then shown in~\cite{bols2023classificationanyonsectorskitaevs} that these representations are irreducible, and in fact form a complete set of representatives of irreducible representations satisfying~\eqref{eq:sselect}.
These irreducible anyon sectors correspond to the simple objects (i.e., the anyon types) in our category.
Because we have a concrete description of the simple objects in our category, it is possible to explicitly implement the braiding and fusion operations defined abstractly in~\cite{Ogata2021}, and calculate those explicitly.
We then show that the category we constructed is indeed equivalent to the one defined abstractly in~\cite{Ogata2021}.

The key difference between the present work and the abelian case studied in~\cite{Naaijkens2011,Fiedler2014} is the use of \emph{amplimorphisms}, i.e. $*$-homomorphisms $\chi: \alg{A} \to M_n(\alg{A})$, instead of endomorphisms.\footnote{For technical reasons we will in fact need to consider amplimorphisms of some slightly bigger algebra $\alg{B}$ containing $\alg{A}$.}
This can be understood as follows: recall that in the quantum double models, we can define `ribbon operators' which create a pair of excitations from the ground state.
To obtain single-anyon states, one sends one of the excitations off to infinity.
For each irrep of $\caD(G)$, there is a corresponding multiplet of ribbon operators, transforming according to the irrep, with the total number of operators in the multiplet given by the dimension of the irrep. 
Hence for non-abelian representations, one has more than one ribbon operator, which combine naturally into an amplimorphism.

Although it is possible to pass from amplimorphisms to the endomorphisms used in~\cite{Fiedler2014,Ogata2021}, as we shall see later, doing so requires making some choices, and one loses the explicit description of the map.
Hence to identify the full superselection theory, we work mainly in the amplimorphism picture.
In particular, we show that the amplimorphisms constructed in~\cite{Naaijkens2015} can be endowed with a tensor product and a braiding, analogous to the tensor product and braiding of endomorphisms in the DHR theory.
More precisely, we construct a braided ${\rm C^*}$-tensor category $\Amp$ of localized and transportable amplimorphisms, which includes as objects the amplimorphisms constructed in~\cite{Naaijkens2015}.
We then consider the full subcategory $\Amp_f$ of $\Amp$ whose objects $\chi$ have finite dimensional $\Hom(\chi,\chi)$.
This category can be shown to be semi-simple and closed under the monoidal product on $\Amp$, and we study the fusion rules (how tensor products decompose into irreducible objects) and the braiding. The result is that the category $\Amp_f$ of such amplimorphisms is equivalent to $\Rep_f \caD(G)$ as braided tensor categories.
Using the classification result of anyon sectors in this model obtained by two of the authors~\cite{bols2023classificationanyonsectorskitaevs}, it then follows that the list of constructed anyon sectors is a complete list of representatives of irreducible anyon sectors. This then completes the classification. 

A similar approach using amplimorphisms was taken in~\cite{SzlachanyiV93,NillS97} to analyze topological defects of certain 1D quantum spin systems.
In their setting the anyon sectors are localized in finite intervals, with the corresponding algebra of observables localized in that region being finite dimensional.
This necessitated the use of amplimorphisms instead of endomorphisms.
In our case localization is in infinite cone regions, and the situation is different.
In particular, the unitary operators that can move the localization regions around no longer live in the quasi-local algebra $\alg{A}$.
From a technical point of view this means that we cannot restrict to a purely $\rm{C}^*$-algebraic approach with operators in the quasi-local algebra (or suitable amplifications) only, but have to consider von Neumann algebras as well, in particular the cone algebras $\pi_0(\cstar[\Lambda])''$.\footnote{This is already true for the abelian case, it is not specific to the non-abelian model.}
These cone algebras are ``big enough'' in the sense that they are properly infinite~\cite{Naaijkens2012,Fiedler2014,Tomba2023}.
This allows us to directly relate the localized and transportable amplimorphisms to localized and transportable endomorphisms of some suitably defined auxiliary algebra, making the connection with the usual DHR theory in terms of endomorphisms.

The paper is outlined as follows. In Section \ref{sec:setup and main result} we define the quantum double model and the associated categories of localized and tranportable amplimorphisms $\Amp$ and endomorphisms $\DHR$, as well as their `finite' versions $\Amp_f$ and $\DHR_f$.
We then state our main theorem, namely that the categories $\Amp_f$ and $\DHR_f$ are braided $\rm C^*$-tensor categories, equivalent to $\Rep_f \caD(G)$.
Section \ref{sec:braided tensor structure} is devoted to spelling out the braided $\rm C^*$-tensor structure of $\Amp$ and $\DHR$.
These two categories are then shown to be equivalent in Section \ref{sec:equivalence of Amp and DHR}.
Explicit localized and tranportable amplimorphisms corresonding to representations of $\caD(G)$ are constructed in Section \ref{sec:amplimorphism from ribbon operators} by taking limits of `ribbon multiplets'. These explicit amplimorphisms are organized into full subcategories $\Amp_{\rho}$ of $\Amp$ for a fixed half-infinite ribbon $\rho$, which are later shown to be equivalent to $\Amp_f$. This section also establishes the key properties of these ribbon multiplets that underlie the fusion and braiding structure of $\Amp_f$.
In Section \ref{sec:simples of Amp} we rephrase the main result of \cite{bols2023classificationanyonsectorskitaevs}, namely that the amplimorphisms corresponding to irreducible representations of $\caD(G)$ constructed in the previous section exhaust all simple objects of $\Amp$. Together with semi-simplicity of $\Amp_f$, this implies that the $\Amp_{\rho}$ are full and faithful subcategories of $\Amp_f$.
Finally, Section \ref{sec:proof of main THM} proves the main theorem. The appendices collect well-known facts about ribbon operators and some technical results related to taking their limits.

\textbf{Acknowledgements:}
We would like to thank Corey Jones, Boris Kj\ae r and David Penneys for helpful discussions.
MH was supported by EPSRC Doctoral Training Programme grant EP/T517951/1.
SV was funded by NSF grant number DMS-2108390.

\textbf{Copyright statement:} For the purpose of open access, the authors have applied a CC BY public copyright licence to any Author Accepted Manuscript version arising.

\textbf{Data availability:} We do not analyse or generate any datasets, because our work is entirely within a theoretical and mathematical approach.

\textbf{Conflict of interests:} The authors have no competing interests to declare that are relevant to the content of this article.

\section{Setup and main result} \label{sec:setup and main result}

\subsection{The quantum double model and its ground state}

\begin{figure}[t]
    \centering
    \includegraphics[width=0.5\linewidth]{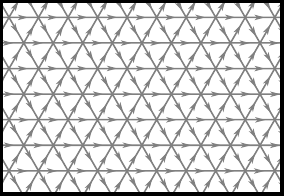}
    \caption{Snapshot of $\Gamma^E$. The edges are all oriented toward the right.}
    \label{fig:lattice snapshot}
\end{figure}

We first recall the definition of the quantum double model~\cite{kitaev2003fault} and introduce our notation.
Throughout the paper, we fix a finite group $G$.
Let $\Gamma$ be the triangular lattice in $\R^2$ and denote by $\Gamma^E$ the collection of oriented edges of $\Gamma$ which are oriented towards the right, see Figure~\ref{fig:lattice snapshot}.\footnote{We use the triangular lattice for simplicity, and to work in the same setting as~\cite{bols2023classificationanyonsectorskitaevs}, but believe the results hold for more general planar graphs as well.}
Denote by $\Gamma^V, \Gamma^F$ the set of vertices and faces of $\Gamma$ respectively. 
To each edge $e \in \Gamma^E$ we associate a degree of freedom $\caH_e \simeq \C[G]$ with basis $\{ |g\rangle_e \, : \, g \in G \}$ and corresponding algebra $\caA_e = \End(\caH_e) \cong M_{|G|}(\mathbb{C})$. We define in the usual way local algebras of observables $\caA^{\loc}_{X}$ supported on any $X \subset \Gamma^E$ and their norm closures $\caA_X := \overline{ \caA_X^{\loc}}^{\norm{\cdot}}$. We write $\caA = \caA_{\Gamma^E}$ and $\caA^{\loc} = \caA_{\Gamma^E}^{\loc}$.

The quantum double Hamiltonian is the commuting projector Hamiltonian given by the following formal sum
\begin{equation}\label{eq:Hamiltonian}
	H = \sum_{v \in \Gamma^V} \, (\1 - A_v) \, + \, \sum_{f \in \Gamma^F} \, ( \1 - B_f),
\end{equation}
where $A_v, B_f \in \cstar$ are the well-known \emph{star} and \emph{plaquette} operators of the quantum double model, which are mutually commuting projectors.
See Section \ref{subsubsec:gauge transformations and flux projections} in the appendix for precise definitions.

We say a state $\omega : \caA \rightarrow \C$ is a frustration free ground state of $H$ if
\begin{equation}
	\omega(A_v) = \omega(B_f) = 1
\end{equation}
for all $v \in \Gamma^V$ and all $f \in \Gamma^F$.
It is straightforward to verify that such a state $\omega$ indeed is a ground state for the dynamics generated by the Hamiltonian~\eqref{eq:Hamiltonian}.

The following theorem is proven in various sources~\cite{Fiedler2014, Cui2020kitaevsquantum, Tomba2023, bols2023classificationanyonsectorskitaevs}.
\begin{theorem}
	The quantum double Hamiltonian $H$ has a unique frustration free ground state which we denote by $\omega_0$. The uniqueness implies in particular that $\omega_0$ is pure.
\end{theorem}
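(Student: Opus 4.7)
The plan is to prove uniqueness by a local projection argument, deduce existence by assembling a consistent family of vacuum functionals, and obtain purity from uniqueness by a standard convexity argument.

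Let $\omega$ be any frustration-free ground state. Since each $A_v, B_f$ is a projection with $\omega$-expectation $1$, the Cauchy--Schwarz inequality for states gives
\[
|\omega((\1 - A_v)a)|^2 \le \omega((\1-A_v)^2)\,\omega(a^*a) = \omega(\1 - A_v)\,\omega(a^*a) = 0
\]
for every $a \in \caA$, so $\omega(A_v a) = \omega(a A_v) = \omega(a)$; similarly with $B_f$. Iterating these identities over all (mutually commuting) star and plaquette projectors whose supports lie inside a finite simply-connected region $\tilde\Lambda \subset \Gamma^E$ yields $\omega(a) = \omega(P_{\tilde\Lambda}\, a\, P_{\tilde\Lambda})$ for every $a \in \caA_{\tilde\Lambda}^{\loc}$, where $P_{\tilde\Lambda}$ denotes the product of all these projectors.

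The crux is to show that for $a$ supported in a strictly interior subregion $\Lambda \subset \tilde\Lambda$,
\[
P_{\tilde\Lambda} \, a \, P_{\tilde\Lambda} = \varphi_{\tilde\Lambda}(a)\, P_{\tilde\Lambda}
\]
for a scalar $\varphi_{\tilde\Lambda}(a)$ independent of $\omega$; combined with $\omega(P_{\tilde\Lambda}) = 1$, this yields $\omega(a) = \varphi_{\tilde\Lambda}(a)$, determining $\omega$ on $\caA^{\loc}$ and, by density, on $\caA$. The natural route is to identify the range of $P_{\tilde\Lambda}$ with the gauge-symmetrized space of flat $G$-connections on the simply-connected planar patch carved out by $\tilde\Lambda$: the plaquette projectors select flat connections, which on a simply-connected region form a single gauge orbit of the trivial connection, and the star projectors then implement full gauge averaging. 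For $a$ supported strictly in the interior, the compression $P_{\tilde\Lambda} a P_{\tilde\Lambda}$ decouples from the boundary half-edges at $\partial\tilde\Lambda$ that generate the residual degeneracy of $P_{\tilde\Lambda}$, and hence reduces to a scalar. Carrying out this bookkeeping rigorously -- so that boundary stars/plaquettes do not spoil the factorization -- is the main technical obstacle.

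Granting the scalar identity, existence is automatic: the $\varphi_{\tilde\Lambda}$ are consistent under enlargement of $\tilde\Lambda$ (the same iteration applied in the larger region shows that $\varphi_{\tilde\Lambda'}$ agrees with $\varphi_{\tilde\Lambda}$ on $\caA_\Lambda^{\loc}$), so they assemble into a positive functional $\omega_0$ on $\caA^{\loc}$ extending uniquely to a state on $\caA$ which satisfies $\omega_0(A_v) = \omega_0(B_f) = 1$ by construction. For purity, suppose $\omega_0 = \lambda\, \omega_1 + (1-\lambda)\, \omega_2$ with $\omega_i$ states and $\lambda \in (0,1)$. Since $A_v, B_f \le \1$ and $\omega_0(A_v) = \omega_0(B_f) = 1$, each $\omega_i$ also satisfies the frustration-free conditions and is therefore a frustration-free ground state, so $\omega_1 = \omega_2 = \omega_0$ by uniqueness.
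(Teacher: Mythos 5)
The paper does not give a proof of this theorem; it simply cites \cite{Fiedler2014, Cui2020kitaevsquantum, Tomba2023, bols2023classificationanyonsectorskitaevs}. So I compare your argument against the standard proofs found there. Your outline is the right skeleton: the Cauchy--Schwarz step establishing $\omega = \omega(P_{\tilde\Lambda} \cdot P_{\tilde\Lambda})$ for any frustration-free ground state, the reduction of uniqueness to a scalar-compression identity for interior observables, existence by assembling the resulting local functionals, and purity by the elementary convexity argument. These are all correct and standard moves.

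However, you have not actually proven the theorem, and you say so yourself: the entire content is concentrated in the claim
\[
P_{\tilde\Lambda}\, a\, P_{\tilde\Lambda} = \varphi_{\tilde\Lambda}(a)\, P_{\tilde\Lambda} \quad \text{for } a \text{ supported strictly interior to } \tilde\Lambda,
\]
which is precisely the (exact) local topological quantum order property of the quantum double ground space projectors. This is not a bit of bookkeeping to be filled in; it is \emph{the} theorem. Everything else (Cauchy--Schwarz, consistency, positivity via $\varphi(a^*a) P = P a^* a P \ge 0$, purity from extremality of the frustration-free conditions) is routine once the scalar compression is in hand. Establishing the scalar identity requires a genuine analysis of the ground space of the commuting-projector Hamiltonian restricted to a finite disk: one must show that after projecting onto flat, gauge-invariant configurations, the remaining boundary degrees of freedom are invisible to interior observables, and in the non-abelian case this demands the explicit ribbon/gauge-orbit combinatorics carried out in the cited references (e.g.\ \cite{Cui2020kitaevsquantum} for general $G$, \cite{Fiedler2014} for abelian $G$). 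As it stands, you have labelled this the ``main technical obstacle'' rather than surmounted it, so the proposal is a correct plan with an explicit and central gap, not a proof.

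One smaller point worth tightening: for existence you need $\varphi_{\tilde\Lambda}$ to be independent of $\tilde\Lambda$ on $\caA_\Lambda^{\loc}$, not merely ``consistent under enlargement.'' This does follow from the scalar identity together with the fact that any two finite regions containing $\Lambda$ are contained in a common larger one, but you should say this; otherwise the net $(\varphi_{\tilde\Lambda})$ is not obviously a single well-defined functional on $\caA^{\loc}$.
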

We denote by $(\pi_0, \caH_0, \Omega)$ the GNS triple of the unique frustration free ground state $\omega_0$.
Note that $\pi_0$ is an irreducible representation since $\omega_0$ is pure.

\subsection{Cone algebras, Haag duality, and the allowed algebra}
\label{sec:cone_algebra}

The open cone with apex at $a \in \R^2$, axis $\hat v \in \R^2$, where $\hat v$ is a unit vector, and opening angle $\theta \in (0, 2\pi)$ is the subset of $\R^2$ given by
\begin{equation*}
	\Lambda_{a, \hat v, \theta} := \{ x \in \R^2 \, : \, (x - a) \cdot \hat v < \norm{x - a}_2 \cos(\theta/2)  \}.
\end{equation*}
We similarly define closed cones and call any subset of $\R^2$ that is either an open or a closed cone a \emph{cone}, so that the complement $\Lambda^c$ of any cone $\Lambda$ is again a cone. Note that a cone cannot be empty, nor can it equal the whole of $\R^2$.

For any $S \subset \R^2$ we denote by $\overline S$ the set of edges in $\Gamma^E$ whose midpoints lie in $S$. With slight abuse of notation we will simply write $S$ to mean the set of edges $\overline S$ unless otherwise stated.

To any cone $\Lambda$ we associate its \emph{cone algebra}
\begin{equation}
	\caR(\Lambda) := \pi_0(\caA_{\Lambda})'' \subset \caB(\caH_0).
\end{equation}
We remark that all these cone algebras are properly infinite factors~\cite{Naaijkens2012,Ogata2021}. We will moreover assume that Haag duality holds for cones.
\begin{assumption}[Haag duality for cones] \label{ass:Haag duality}
    For any cone $\Lambda$ we have
    $$\caR(\Lambda^c)' = \caR(\Lambda).$$
\end{assumption}

\begin{remark} \label{rem:Haag 2}
Haag duality for cones is proven in~\cite{Fiedler2014} in the case $G$ is an abelian group.
We believe the proof methods can be extended to the non-abelian case, however the analysis becomes considerably more technical since in the non-abelian case not all irreducible representations of the quantum double $\caD(G)$ are one-dimensional anymore.
In addition, a proof of Haag duality for a wide class of 2D quantum spin systems has been announced recently~\cite{HaagDuality}, including in particular for the non-abelian quantum double models considered here.

Finally, we comment on the role that Haag duality plays.
One can still construct the category of representations of superselection sectors, and show that the (equivalence classes of) irreducible representations are in one-to-one correspondence with the irreducible representations of $\caD(G)$~\cite{bols2023classificationanyonsectorskitaevs}.
By using this equivalence of categories the braided monoidal structure from $\Rep_f \caD(G)$ can be transported to the category of superselection sectors.
Haag duality is used to show that this in fact for example gives the natural braiding obtained from the Doplicher--Haag--Roberts approach.
That is, it has the correct physical interpretation.
Without Haag duality one can only do this for certain explicitly constructed representatives of each sector.\footnote{%
This is the category $\Amp_\rho$ that we will define later.
In this case, one can also explicitly construct the morphisms in the category as weak (or strong) operator limits of observables localized in some cone.
This gives enough control over the localization of these intertwiners, which requires Haag duality in general.
Using the explicit construction of the objects in the category, it can be directly checked that it is closed under the monoidal product of simple objects, and one can take finite direct sums.
However, this analysis only works for the amplimorphisms constructed explicitly, and does not extend to arbitrary ampli (or endo-)morphisms, even if they are in the same superselection sector.
}
For this reason, we prefer to assume (strict) Haag duality for cones to avoid making the analysis more technical than necessary.
\end{remark}

We fix a unit vector $\hat f \in \R^2$ and say a cone with axis $\hat v$ and opening angle $\theta$ is \emph{forbidden} if $\hat f \cdot \hat v < \cos(\theta /2)$. If a cone is not forbidden, then we say it is \emph{allowed}. The \emph{allowed algebra}
\begin{equation*}
	\caB = \caB_{\hat f} := \overline{\bigcup_{\Lambda \, \text{allowed}} \caR(\Lambda)}^{\norm{\cdot}} \subset \caB(\caH_0)
\end{equation*}
is the $\rm C^*$-algebra generated by the cone algebras of allowed cones.
Note that the set of allowed cones is a directed set for the inclusion relation.
Because we assume strict Haag duality, our algebra $\caB$ is the same as what is denoted by $\mathfrak{B}_{(\theta,\phi)}$ in~\cite[Eq. (2.5)]{Ogata2021} for suitable $(\theta,\phi)$.
If only approximate Haag duality holds, it can be replaced with the definition there.

Note that $\pi_0(\caA) \subset \caB$ as for any finite set $S \subset \Gamma^E$, we can find an allowed cone $\Lambda$ containing $S$. In addition, the allowed algebra will be seen to contain the intertwiners between the amplimorphisms we will consider. This will be crucial in defining the tensor product and the braiding.

It can be shown that the category of anyon sectors we define later does not depend on the choice of $\widehat{f}$.

\subsection{Categories of amplimorphisms and endomorphisms}

We largely follow the notation and terminology of~\cite{SzlachanyiV93}.
A *-homomorphism $\chi : \caB \rightarrow M_{n \times n}(\caB)$ is called an \emph{amplimorphism} of degree $n$.\footnote{One can take amplifications with infinite dimensional Hilbert spaces, but for our purposes it is enough to consider only the case where $n$ is finite.} We do not require such amplimorphisms to be unital. Given two amplimorphisms $\chi$ and $\chi'$ of degrees $n$ and $n'$ respectively, we let
\begin{equation} \label{eq:intertwiners of amplis defined}
    (\chi | \chi') := \{  T \in M_{n \times n'}(\caB(\caH_0)) \, : \, T \chi'(O) = \chi(O) T, \,\,\, O \in \caB, \,\,\, \chi(\1) T = T = T \chi'(\1) \}
\end{equation}
be the space of \emph{intertwiners} from $\chi'$ to $\chi$. The amplimorphisms $\chi$ and $\chi'$ are \emph{equivalent} if there is a partial isometry $U \in (\chi | \chi')$ such that $U^* U = \chi'(\1)$ and $U U^* = \chi(\1)$, in which case we write $\chi \sim \chi'$ and call $U$ an \emph{equivalence}.

An amplimorphism $\chi$ of degree $n$ is said to be \emph{localized} in a cone $\Lambda$ if for all $O \in \pi_0(\cstar[\Lambda^c])$, we have $\chi(O) = \chi(\1) (O \otimes \1_n)$. Such an amplimorphism is \emph{transportable} if for any cone $\Lambda'$ there is an amplimorphism $\chi'$ localized in $\Lambda'$ such that $\chi \sim \chi'$.

An amplimorphism $\chi$ is called \emph{finite} if the endomorphism space $(\chi | \chi)$ is finite dimensional. 
Note that $(\chi|\chi)$ is closed under taking adjoints. Hence if $\chi$ is finite, it follows that $(\chi|\chi)$ is isomorphic to a finite direct sum of full matrix algebras.

\begin{definition}
    We define $\Amp$ as the category whose objects are amplimorphisms that are localized in allowed cones, and are transportable.
    The morphisms between objects $\chi'$ and $\chi$ are given by $(\chi | \chi')$.
    The category $\Amp_f$ is the full subcategory of $\Amp$ whose objects are those amplimorphisms in $\Amp$ that are finite.
\end{definition}

In Section \ref{sec:braided tensor structure} we will show how the assumption of Haag duality allows us to endow $\Amp$ with the structure of a braided $\rm C^*$-tensor category. We will later see that the category $\Amp_f$ is closed under the monoidal product of $\Amp$ and therefore inherits the braided $\rm{C}^*$-tensor structure. The reduction to $\Amp_f$ is essential to establish equivalence with the category $\Rep_f \caD(G)$ of finite dimensional representations of the quantum double algebra. Indeed, $\Amp$ contains infinite direct sums, while $\Rep_f \caD(G)$ does not contain infinite direct sums by definition. We do not know if the infinite directs sums of objects of $\Amp_f$ exhaust all non-finite amplimorphisms of $\Amp$.

\begin{remark}
    In the algebraic description of anyons, it is commonly assumed that all anyons have a conjugate (see for example ~\cite[Sect. 6.3]{Wang2010}), meaning that each anyon type can fuse to the vacuum with some conjugate type.

    The assumption that an object in a $\rm C^*$-tensor category has a conjugate implies that it has a finite-dimensional endomorphism space~\cite[Lemma 3.2]{LongoRoberts97}. This is another way to see the necessity of restricting our attention to $\Amp_f$ if we want to show equivalence with $\Rep_f \caD(G)$. Indeed, all finite dimensional representations of $\caD(G)$ have conjugates. 
\end{remark}

\begin{definition}
    We denote by $\DHR$ the full subcategory of $\Amp$ whose objects are unital *-endomorphisms $\nu : \caB \rightarrow \caB$, \ie unital amplimorphisms of degree one. 
    Similarly, $\DHR_f$ is the full subcategory of $\DHR$ whose objects are finite endomorphisms.
\end{definition}
$\DHR$ is a braided $\rm C^*$-tensor subcategory of $\Amp$, see Section \ref{sec:braided tensor structure}. We show in Section \ref{sec:proof of main THM} that $\DHR_f$ is closed under the monoidal product of $\DHR$ and therefore inherits the braided $\rm C^*$-tensor structure.
The category $\DHR$ is equivalent to the category $\mathcal{O}_{\Lambda_0}$ defined in~\cite[Sect. 6]{Ogata2021}.
One can think of $\mathcal{O}_{\Lambda_0}$ as the subcategory of $\DHR$ restricted to endomorphisms localized in a specific cone $\Lambda_0$, however by the transportability requirement, one sees that this is equivalent to $\DHR$ (compare with Sect.~\ref{sec:proof of main THM} here).

\subsection{Main result}

We are now ready to give the main result of this paper, which states that the categories $\Amp_f$ and $\DHR_f$ introduced above are equivalent as braided $\rm C^*$-tensor categories to the category $\Rep_f \caD(G)$ of finite dimensional unitary representations of the quantum double $\caD(G)$ of the group $G$. See Appendix \ref{app:introduction to D(G)} for a brief review of $\caD(G)$ and its representation theory.

\begin{theorem} \label{thm:main result}
    If Haag duality for cones (Assumption \ref{ass:Haag duality}) holds, then the categories $\Amp_f$ and $\DHR_f$ are braided $\rm C^*$-tensor categories with monoidal structure and braiding as described in Section \ref{sec:braided tensor structure}. Moreover, both of these categories are then equivalent to $\Rep_f \caD(G)$ as braided $\rm C^*$-tensor categories.
\end{theorem}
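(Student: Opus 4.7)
The plan is to construct a braided $\rm C^*$-tensor equivalence $F \colon \Rep_f \caD(G) \to \Amp_f$ on the nose, using the explicit amplimorphisms $\chi_\rho$ built in Section \ref{sec:amplimorphism from ribbon operators}, and then transport it along the equivalence $\Amp \simeq \DHR$ of Section \ref{sec:equivalence of Amp and DHR} to obtain the corresponding equivalence for $\DHR_f$. By the time we reach this theorem, the braided $\rm C^*$-tensor structure on $\Amp$ and $\DHR$ is in hand; so the preliminary task is to verify that $\Amp_f$ and $\DHR_f$ are closed under the monoidal product, making them full braided $\rm C^*$-tensor subcategories.

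First I would establish semisimplicity of $\Amp_f$: for each $\chi \in \Amp_f$ the endomorphism $\rm C^*$-algebra $(\chi|\chi)$ is finite dimensional, hence a finite direct sum of matrix algebras, so minimal projections in $(\chi|\chi)$ split $\chi$ into a finite direct sum of simple subobjects, each again in $\Amp_f$. Combined with the classification recalled in Section \ref{sec:simples of Amp}, which asserts that every simple object of $\Amp$ is equivalent to some $\chi_\rho$ indexed by $\rho \in \mathrm{Irr}(\caD(G))$, this identifies the isomorphism classes of simples in $\Amp_f$ with those of $\Rep_f \caD(G)$.

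Next I would define $F$ by $F(\rho) := \chi_\rho$ on irreducibles and extend additively; on morphisms, the ribbon-multiplet construction naturally turns a $\caD(G)$-intertwiner into an intertwiner of amplimorphisms. Faithfulness and fullness follow from Schur's lemma together with the fact that the $\chi_\rho$ are pairwise inequivalent simples with $(\chi_\rho|\chi_\rho) = \C\,\chi_\rho(\1)$, and essential surjectivity follows from semisimplicity plus the classification. To promote $F$ to a braided monoidal functor, I would invoke the fusion and braiding properties of ribbon multiplets established in Section \ref{sec:amplimorphism from ribbon operators} to exhibit natural unitary isomorphisms $\chi_\rho \otimes \chi_{\rho'} \cong \chi_{\rho \otimes \rho'}$ intertwining the abstract tensor product on $\Amp$ with the coproduct of $\caD(G)$, and to match the DHR cone-swap braiding $\varepsilon_{\chi,\chi'}$ with the universal $R$-matrix of $\caD(G)$ on ribbon multiplets. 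Closure of $\Amp_f$ under $\otimes$ is then automatic, since $\chi_\rho \otimes \chi_{\rho'}$ decomposes as a finite direct sum of $\chi_\sigma$'s by finiteness of the fusion rules of $\caD(G)$.

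The main obstacle will be this braided monoidal compatibility on the nose: the tensor product on $\Amp$ is defined abstractly via Haag duality and cone localization (Section \ref{sec:braided tensor structure}), so one must compute its action on the concretely constructed ribbon amplimorphisms and match the outcome with the coproduct of $\caD(G)$, and likewise reduce the DHR braiding of two cone-localized amplimorphisms, in the limit as one cone is transported around the other, to the $R$-matrix action on the corresponding ribbon multiplets. The combinatorial identities collected in Section \ref{sec:amplimorphism from ribbon operators} (and the appendices) are what make this computation feasible. Once established for $\Amp_f$, the analogous statement for $\DHR_f$ follows by transporting through the equivalence of Section \ref{sec:equivalence of Amp and DHR}, which preserves hom-spaces and hence carries $\Amp_f$ onto $\DHR_f$, yielding the full claim.
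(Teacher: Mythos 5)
Your proposal follows essentially the same route as the paper. The paper also establishes semisimplicity of $\Amp_f$, invokes the classification of simple objects from Section~\ref{sec:simples of Amp}, builds the functor $D \mapsto \chi_\rho^D$ on ribbon amplimorphisms, computes the braiding to match the $R$-matrix (Lemma~\ref{lem:explicit braiding}), uses finiteness of the $\caD(G)$ fusion rules to close $\Amp_f$ under the monoidal product, and transports the equivalence to $\DHR_f$ via the embedding of Lemma~\ref{lem:ampli is equivalent to endo}. The only organizational difference is that the paper factors through the intermediate subcategory $\Amp_\rho$ (amplimorphisms attached to a fixed half-infinite ribbon $\rho$, not an irrep as your overloaded notation suggests), which makes the functor $\Rep_f\caD(G) \to \Amp_\rho$ \emph{strictly} monoidal on the nose since $\chi_\rho^{D_1} \times \chi_\rho^{D_2} = \chi_\rho^{D_1 \times D_2}$ exactly; you instead define $F$ on irreducibles and extend additively, obtaining only a coherence isomorphism. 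This is a cosmetic difference and both strategies ultimately rest on the same identities for ribbon multiplets, so your outline is sound.
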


Since $\Rep_f \caD(G)$ is a unitary modular tensor category (UMTC), it follows from this Theorem that $\Amp_f$ and $\DHR_f$ are also UMTCs. In particular, the anyon sectors are endowed with a duality which is inherited from the duality of finite dimensional representations of $\caD(G)$.


\section{Braided \texorpdfstring{$\rm C^*$}{}-tensor structure of \texorpdfstring{$\Amp$}{Amp} and \texorpdfstring{$\DHR$}{DHR}} \label{sec:braided tensor structure}

We spell out the $\rm C^*$-category structure of $\Amp, \Amp_f$, $\DHR$, and $\DHR_f$, as well as their finite direct sums and subobjects in Section~\ref{subsec:directsum_subojects}.

In Section \ref{subsec:braided monoidal} we use the assumption of Haag duality to endow $\Amp$ and $\DHR$ with braided ${\rm C}^*$-tensor structure. Most arguments in this section are straightforward adaptations of well-known constructions in the DHR superselection theory, see for example~\cite{SzlachanyiV93,NillS97,HalvorsonMueger,Ogata2021}. At this stage we do not know if the categories $\Amp_f$ and $\DHR_f$ are closed under the tensor product which we define for $\Amp$ and $\DHR$, a fact which will only be established in Proposition \ref{prop:equivalence of Amp_rho and Amp} and Lemma \ref{lem:DHR_f is braided monoidal} of Section \ref{sec:proof of main THM}.

\subsection{\texorpdfstring{$\rm C^*$}{C*}-structure, direct sums, and subobjects}
\label{subsec:directsum_subojects}

Let us first remark that the categories $\Amp$, $\Amp_f$, $\DHR$, and $\DHR_f$ are $\rm C^*$-categories (see~\cite{GhezLimaRoberts} or ~\cite[Definition 2.1.1]{neshveyev2013compact}). In this subsection we show that all these categories have finite direct sums and subobjects.

\subsubsection{Direct sums and subobjects of amplimorphisms} \label{subsec:direct sums and subobjects for Amp}

The direct sum of $\chi : \caB \rightarrow M_m(\caB)$ and $\psi : \caB \rightarrow M_n(\caB)$ is the amplimorphism $\chi \oplus \psi : \caB \rightarrow M_{m + n}(\caB)$ that maps $O \in \caB$ to the block diagonal matrix with blocks $\chi(O)$ and $\psi(O)$ with obvious projection and inclusion maps. If $\chi$ and $\psi$ are finite, then so is $\chi \oplus \psi$, so $\Amp_f$ is closed under this direct sum.

Before showing the existence of subobjects for $\Amp$ and $\Amp_f$, we state and prove two lemmas which will also be used to later to equip $\Amp$ and $\DHR$ with a tensor product.

\begin{lemma} \label{lem:localized amplimorphisms amplify locally}
    Let $\Lambda$ be an allowed cone. If $\chi$ is a $\Lambda$-localized amplimorphism of degree $n$, then $\chi(\caR(\Lambda)) \subset M_{n}( \caR(\Lambda) )$.
\end{lemma}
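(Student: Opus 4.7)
The goal is to show that if $O \in \caR(\Lambda)$, then every matrix entry of $\chi(O) \in M_n(\caB)$ lies in $\caR(\Lambda)$. My plan is to reduce this to a commutation property with $\pi_0(\cstar[\Lambda^c])$, and then invoke Haag duality (Assumption \ref{ass:Haag duality}) to close the loop.

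First I would unpack what Haag duality gives us here. Taking double commutants, $\caR(\Lambda^c) = \pi_0(\cstar[\Lambda^c])''$, so $\caR(\Lambda^c)' = \pi_0(\cstar[\Lambda^c])'$, and Assumption \ref{ass:Haag duality} then identifies this commutant with $\caR(\Lambda)$. Thus to check that an operator $T \in \caB(\caH_0)$ belongs to $\caR(\Lambda)$, it suffices to verify that $T$ commutes with every $A \in \pi_0(\cstar[\Lambda^c])$. Applied entrywise, membership of a matrix $M \in M_n(\caB(\caH_0))$ in $M_n(\caR(\Lambda))$ is equivalent to $M(A \otimes \1_n) = (A \otimes \1_n) M$ for all $A \in \pi_0(\cstar[\Lambda^c])$.

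Next I would exploit the localization identity $\chi(A) = \chi(\1)(A \otimes \1_n)$ for $A \in \pi_0(\cstar[\Lambda^c])$. Taking adjoints and using that $\chi(\1)$ is self-adjoint together with $\pi_0(\cstar[\Lambda^c])$ being $*$-closed shows that $\chi(\1)$ commutes with $A \otimes \1_n$ for every such $A$; hence also $\chi(A) = (A \otimes \1_n)\chi(\1)$. Now given $O \in \caR(\Lambda) \subset \caB$ and $A \in \pi_0(\cstar[\Lambda^c])$, the operators $O$ and $A$ commute in $\caB(\caH_0)$, so applying the homomorphism $\chi$ yields $\chi(O)\chi(A) = \chi(A)\chi(O)$. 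Substituting both forms of the localization identity and using $\chi(\1)\chi(O) = \chi(O) = \chi(O)\chi(\1)$ collapses this to
\[
\chi(O)(A \otimes \1_n) = (A \otimes \1_n)\chi(O),
\]
which by the previous paragraph places $\chi(O)$ in $M_n(\caR(\Lambda))$.

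I do not anticipate a serious obstacle here; the statement is essentially the amplimorphism analogue of the familiar fact for DHR endomorphisms that a $\Lambda$-localized morphism sends $\caR(\Lambda)$ into itself, and the only novelty is keeping track of the non-unital projection $\chi(\1)$. The one point worth being slightly careful about is that the localization hypothesis is only stated for $O \in \pi_0(\cstar[\Lambda^c])$ rather than for all of $\caR(\Lambda^c)$, so the argument must route through $\pi_0(\cstar[\Lambda^c])$ and then use Haag duality at the end, rather than trying to apply $\chi$ directly to elements of $\caR(\Lambda^c)$.
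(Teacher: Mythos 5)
Your proof is correct and takes essentially the same route as the paper: both establish that $\chi(\1)$ commutes with $A \otimes \1_n$ for $A \in \pi_0(\cstar[\Lambda^c])$, then apply $\chi$ to the identity $OA = AO$, substitute the localization formula, cancel $\chi(\1)$ against $\chi(O)$, and invoke Haag duality to conclude $\chi(O) \in M_n(\caR(\Lambda))$. The only cosmetic difference is that you derive the commutation of $\chi(\1)$ by taking adjoints of the localization identity, whereas the paper states it as a direct consequence of localization; these are the same observation.
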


\begin{proof}
    If $O \in \caR(\Lambda^c)$ then the $\Lambda$-localization of $\chi$ implies that all components of $\chi(\1)$ commute with $O$, so $\chi(\I) \in M_n(\caR(\Lambda^c)') = M_n(\caR(\Lambda))$ by Haag duality.
    Now let $O \in \pi_0( \caA_{\Lambda^c})$ and $A \in \caR(\Lambda)$, then
    \begin{equation}
        \chi(O A) = \chi(O) \chi(A) = \chi(\1) (O \otimes \I_n) \chi(A) = (O \otimes \1_n) \chi(\I) \chi(A) = (O \otimes \I_n) \chi(A),
    \end{equation}
    but $\chi(O A) = \chi(A O)$ and by a similar computation we conclude that $(O \otimes \I_n) \chi(A) = \chi(A) (O \otimes \I_n)$. It follows that $\chi(A) \in M_{n}( \pi_0( \caA_{\Lambda^c})' ) = M_{n}( \pi_0( \caR(\Lambda^c)' ) = M_{n}( \caR(\Lambda) )$ by Haag duality.
\end{proof}

\begin{lemma} \label{lem:localized morphisms for AMP}
    If $\chi_1, \chi_2$ are localized transportable amplimorphisms of degrees $n_1$ and $n_2$, and localized on cones $\Lambda_1, \Lambda_2$ respectively, and $\Lambda$ is a cone that contains $\Lambda_1 \cup \Lambda_2$, then $(\chi_1 | \chi_2) \subset M_{n_1 \times n_2}( \caR(\Lambda) )$.
\end{lemma}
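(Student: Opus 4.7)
The plan is to unpack the intertwining relation on a sufficiently small subalgebra and then use Haag duality (Assumption \ref{ass:Haag duality}) to promote the resulting commutation to membership in $\caR(\Lambda)$. Concretely, fix $T \in (\chi_1 | \chi_2)$; since entries of a matrix lie in $\caR(\Lambda)$ iff the matrix lies in $M_{n_1 \times n_2}(\caR(\Lambda))$, it suffices to show that every entry of $T$ commutes with $\caR(\Lambda^c)$, because $\caR(\Lambda^c)' = \caR(\Lambda)$ by Haag duality.

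The key geometric observation is that $\Lambda \supset \Lambda_1 \cup \Lambda_2$ gives $\Lambda^c \subset \Lambda_1^c \cap \Lambda_2^c$, so any $O \in \pi_0(\caA_{\Lambda^c})$ lies in both $\pi_0(\caA_{\Lambda_1^c})$ and $\pi_0(\caA_{\Lambda_2^c})$. Localization of $\chi_i$ then gives $\chi_i(O) = \chi_i(\1)(O \otimes \1_{n_i})$. Plugging this into the intertwining identity $T \chi_2(O) = \chi_1(O) T$ and using the defining relations $\chi_1(\1) T = T = T \chi_2(\1)$ collapses the equation to
\begin{equation*}
    T\,(O \otimes \1_{n_2}) \;=\; (O \otimes \1_{n_1})\,T,
\end{equation*}
i.e.\ each entry of $T$ commutes with every element of $\pi_0(\caA_{\Lambda^c})$.

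Finally, commutation with a set is preserved under taking the weak-operator closure of that set, so each entry of $T$ commutes with $\pi_0(\caA_{\Lambda^c})'' = \caR(\Lambda^c)$. Haag duality then yields that each entry lies in $\caR(\Lambda^c)' = \caR(\Lambda)$, which is exactly the desired conclusion $T \in M_{n_1 \times n_2}(\caR(\Lambda))$.

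There is no real obstacle here; the proof is a short two-step manipulation (localization reduces the intertwiner relation to a commutation relation with local observables, then Haag duality upgrades this to the cone algebra). The only point worth being careful about is that the localization hypothesis is stated for observables in $\pi_0(\caA_{\Lambda_i^c})$ rather than in the larger von Neumann algebra $\caR(\Lambda_i^c)$, which is why we test the intertwiner against the $\rm C^*$-algebraic observables first and only pass to the weak closure at the end to invoke Haag duality.
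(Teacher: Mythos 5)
Your overall approach is exactly the paper's: reduce the intertwiner relation to commutation with $\pi_0(\caA_{\Lambda^c})$ and then apply Haag duality. However, the middle step does not quite ``collapse'' as stated. After substituting $\chi_i(O) = \chi_i(\1)(O \otimes \1_{n_i})$ and using $T\chi_2(\1) = T$, the intertwining identity becomes
\[
T\,(O \otimes \1_{n_2}) = \chi_1(\1)\,(O \otimes \1_{n_1})\,T,
\]
and the relation $\chi_1(\1) T = T$ cannot be applied directly because $(O \otimes \1_{n_1})$ sits in between. You need one further observation: $\chi_1(\1)$ commutes with $O \otimes \1_{n_1}$ for $O \in \pi_0(\caA_{\Lambda_1^c})$. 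The paper gets this from Lemma~\ref{lem:localized amplimorphisms amplify locally}, which places $\chi_1(\1) \in M_{n_1}(\caR(\Lambda_1))$. Alternatively, and more in keeping with your minimalist route, just take adjoints of the localization identity (with $O$ replaced by $O^*$) to get $\chi_1(O) = (O \otimes \1_{n_1})\chi_1(\1)$; substituting this form on the left-hand side yields $T(O \otimes \1_{n_2}) = (O \otimes \1_{n_1})\chi_1(\1) T = (O \otimes \1_{n_1}) T$ cleanly. With that line added, the proof is complete and matches the paper's argument.
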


\begin{proof}
    If $T \in (\chi_1 | \chi_2)$ then for any $O \in \pi_0(\caA_{\Lambda^c})$ we have $\chi_1(O) T = T \chi_2(O)$. Since $O$ is supported outside of the cones $\Lambda_1, \Lambda_2$ on which $\chi_1$ and $\chi_2$ are localized, this implies $\chi_1(\1) (\pi_0(O) \otimes \I_{n_1}) T = T \chi_2(\1) (O \otimes \I_{n_2})$ for any $O \in \caA_{\Lambda^c}$. Using $\chi_1(\1) \in M_{n_1 \times n_2}( \caR(\Lambda_1) )$ (Lemma \ref{lem:localized amplimorphisms amplify locally}) and $\chi_1(\1) T = T = T \chi_2(\1)$ it follows that each component of $T$ belongs to $\pi_0(\caA_{\Lambda^c})' = \caR(\Lambda)$, where we used Haag duality. This proves the claim.
\end{proof}

We now establish the existence of subobjects.
Since at the moment we allow non-unital amplimorphisms, the construction is somewhat more elementary than the corresponding result for DHR endormorphisms (cf.~\cite[Lemma 5.8]{Ogata2021}).

\begin{proposition} \label{prop:existence of subobjects for Amp}
    Let $\chi \in \Amp$ and $p \in (\chi | \chi)$ an orthogonal projector. Then there are localized and transportable amplimorphisms $\chi_1, \chi_2 \in \Amp$ and partial isometries $v \in (\chi | \chi_1)$, $w \in (\chi | \chi_2)$ such that $v v^* = p, w w^* = \chi(\1) - p$ ands $v v^* + w w^* = \chi(\1)$. In particular, $\chi$ is isomorphic to $\chi_1 \oplus \chi_2$. If $\chi$ is finite, then so are $\chi_1$ and $\chi_2$.
\end{proposition}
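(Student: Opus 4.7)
The plan is to construct $\chi_1$ and $\chi_2$ by compressing $\chi$ with the projections $p$ and $\chi(\1)-p$, taking both of degree $n := \deg\chi$. Fix an allowed cone $\Lambda$ in which $\chi$ is localized. By Lemma~\ref{lem:localized amplimorphisms amplify locally} and Lemma~\ref{lem:localized morphisms for AMP}, both $\chi(\1)$ and $p$ have entries in $\caR(\Lambda)$. I would set
\[
  \chi_1(O) := p\,\chi(O)\,p, \qquad \chi_2(O) := (\chi(\1)-p)\,\chi(O)\,(\chi(\1)-p),
\]
which are *-homomorphisms of degree $n$ with units $\chi_1(\1)=p$ and $\chi_2(\1)=\chi(\1)-p$. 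Taking $v := p$ and $w := \chi(\1)-p$, the partial-isometry identities $v^*v = \chi_1(\1)$, $vv^* = p$ (and analogously for $w$), and the intertwining relations $v\,\chi_1(O) = \chi(O)\,v$, follow immediately from $p \in (\chi|\chi)$, $p^2 = p$, and $p \le \chi(\1)$. The row partial isometry $(v,w) \in (\chi \mid \chi_1 \oplus \chi_2)$ then witnesses $\chi \cong \chi_1 \oplus \chi_2$ by a one-line matrix check.

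Verifying that $\chi_1$ is localized in $\Lambda$ is where Haag duality enters: for $O \in \pi_0(\caA_{\Lambda^c}) \subset \caR(\Lambda^c)$, the entries of $p$ and $\chi(\1)$ lie in $\caR(\Lambda) = \caR(\Lambda^c)'$, so they commute with $O \otimes \1_n$, and
\[
  \chi_1(O) = p\,\chi(\1)\,(O\otimes\1_n)\,p = p\,(O \otimes \1_n)\,p = (O \otimes \1_n)\,p^2 = \chi_1(\1)\,(O \otimes \1_n),
\]
using $p\,\chi(\1) = p$; the argument for $\chi_2$ is identical. For transportability, given any cone $\Lambda'$, transportability of $\chi$ supplies an equivalent $\chi'$ localized in $\Lambda'$ via a partial isometry $U \in (\chi \mid \chi')$ with $UU^* = \chi(\1)$, $U^*U = \chi'(\1)$. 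Then $p' := U^*pU$ is a projector in $(\chi' \mid \chi')$, the compressed amplimorphism $\chi_1' := p'\,\chi'(\cdot)\,p'$ is localized in $\Lambda'$ by the same calculation, and the partial isometry $p\,U\,p' \in (\chi_1 \mid \chi_1')$ gives $\chi_1 \sim \chi_1'$ (using $U p' U^* = U U^* p U U^* = p$ to compute $(pUp')(pUp')^* = p$).

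For the finiteness claim, the map $T \mapsto p\,T\,p$ identifies $(\chi_1 \mid \chi_1)$ with the corner $p\,(\chi \mid \chi)\,p \subset (\chi\mid\chi)$: any $T \in (\chi_1 \mid \chi_1)$ satisfies $T = pTp$ automatically, and because $p$ commutes with $\chi(O)$, the relation $T\chi_1(O) = \chi_1(O)T$ upgrades to $T\chi(O) = \chi(O)T$, so $T \in (\chi\mid\chi)$; the reverse inclusion is clear. Hence $(\chi_i \mid \chi_i)$ inherits finite-dimensionality from $(\chi \mid \chi)$. There is really no serious obstacle in this proposition; the whole argument is formal, and the only non-trivial ingredient is Haag duality (Assumption~\ref{ass:Haag duality}), invoked precisely to ensure that $p$ and $\chi(\1)$ have entries in $M_n(\caR(\Lambda))$, which is what makes the compression procedure preserve localization.
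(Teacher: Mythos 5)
Your proof is correct and follows essentially the same route as the paper: compress by $p$ and $\chi(\1)-p$, invoke Lemma~\ref{lem:localized morphisms for AMP} (hence Haag duality) to place $p$ in $M_n(\caR(\Lambda))$ so the compression stays localized, transport via $p' = U^*pU$, and identify $(\chi_1|\chi_1)$ with the corner $p(\chi|\chi)p$ for finiteness. The only cosmetic deviation is that you write the transporter as $pUp'$ while the paper uses $pU$; these coincide, since $pUp' = pUU^*pU = p\,\chi(\1)\,pU = pU$.
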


\begin{proof}
    Consider the amplimorphism $\chi_1 : \caB \rightarrow M_n(\caB)$ given by $\chi_1(O) := p \chi(O) p$. By Lemma \ref{lem:localized morphisms for AMP} we have $p \in M_{n}( \caR(\Lambda) )$ where $n$ is the degree of $\chi$, so $\chi_{1}$ is localized on $\Lambda$. Moreover, $p \chi_1(O) = p \chi(O) p = \chi(O) p$ and $\chi(\1) p = p = p \chi_1(\1)$ which shows that $p \in (\chi | \chi_1)$.
    
    The amplimorphism $\chi_1$ is also transportable. Indeed, let $\Lambda'$ be some other cone. By transportability of $\chi$ there is an amplimorphism $\chi'$ of degree $n'$ localized on $\Lambda'$ and an equivalence $U \in (\chi|\chi')$. Consider the projection $q = U^* p U \in (\chi' | \chi') \subset M_{n'}( \caR(\Lambda') )$ and corresponding amplimorphism $\chi'_q(O) := q \chi'(O) q$ localized on $\Lambda'$. Then $$pU \chi'_q(O) = p U U^* p U \chi'(O) U^* p U = p \chi(\1) p \chi(O) p U = p \chi(O) p U = \chi_1(O) pU$$ and $pUU^*p = p \chi(\1) p = \chi_1(\1)$ while $$U^* p p U = U^* p \chi(\1) p U = q \chi'(\1) q = \chi'_q(\1),$$ so $pU$ is an equivalence of $\chi_1$ and $\chi'_q$.

    The same construction yields a localized transportable amplimorphism $\chi_{2}$ corresponding to the orthogonal projector $q = \chi(\1) - p \in (\chi | \chi)$. One easily checks that the claim of the proposition is satisfied with $v = p$ and $w = q$.

    Suppose $\chi_1$ were not finite, \ie $(\chi_1 | \chi_1)$ is infinite dimensional. Since $(\chi_1 | \chi_1)$ is isomorphic to $p (\chi | \chi) p$, this implies that $\chi$ is also not finite. With a similar argument for $\chi_2$, this shows that if $\chi$ is finite, then so are $\chi_1$ and $\chi_2$.
\end{proof}

\subsubsection{Direct sums and subobjects in \texorpdfstring{$\DHR$}{DHR}} \label{subsec:direct sums and subobjects for DHR}

The subcategory $\DHR$ is not closed under the direct sum described above, neither does the construction of subobjects stay in the $\DHR$ subcategory. However, $\DHR$ does have finite direct sums and subobjects, see \cite{Ogata2021}. The subcategory $\DHR_f$ is closed under these direct sums, and any subobject of a finite endomorphism must again be finite, so that $\DHR_f$ also has finite direct sums and subobjects.

\subsection{Braided \texorpdfstring{$\rm C^*$}{C*}-tensor structure of \texorpdfstring{$\Amp$}{Amp} and \texorpdfstring{$\DHR$}{DHR}} \label{subsec:braided monoidal}

Using the assumption of Haag duality for cones, we equip $\Amp$ and $\DHR$ with a monoidal product and a braiding, making them into braided $\rm{C}^*$-tensor categories (see Definition 2.1.1 of \cite{neshveyev2013compact}). At this point it is not clear that the tensor product of two finite amplimorphisms, as defined below, is again finite (and in fact one can construct examples of irreducible anyon sectors whose monoidal product decomposes into infinitely many irreducibles, see for example~\cite{Fredenhagen94}).
For this reason we can't yet equip $\Amp_f$ and $\DHR_f$ with the structure of braided $\rm{C}^*$-tensor categories. It will be shown in Proposition \ref{prop:equivalence of Amp_rho and Amp} and Lemma \ref{lem:DHR_f is braided monoidal} that $\Amp_f$ and $\DHR_f$ are in fact closed under the tensor product, and are therefore full braided $\rm C^*$-tensor subcategories of $\Amp$ and of $\DHR$ respectively.

\subsubsection{Monoidal structure}  \label{subsec:monoidal structure}

If $\chi : \caB \rightarrow M_n(\caB)$ is an amplimorphism of degree $n$ we denote by $\chi(O)^{i j}$ for $i, j = 1, \cdots, n$ the $\caB$-valued matrix components of $\chi(O)$. We endow $\Amp$ with a monoidal product $\times$ defined as follows. If $\chi_1$ and $\chi_2$ are amplimorphisms of degrees $n_1$ and $n_2$ respectively, then we define their tensor product $\chi_1 \times \chi_2 : \caB \rightarrow M_{n_1} \big( M_{n_2}(\caB) \big) \simeq M_{n_1 n_2}(\caB)$ to be the amplimorphism of degree $n_1 n_2$ with components
\begin{equation}
\label{eq:monoidal product}
	(\chi_1 \times \chi_2)^{u_1 u_2, v_1 v_2}(O) = \chi_1^{u_1 v_1} \big(  \chi_2^{u_2 v_2}(O)   \big) \quad \text{for all } \,  O \in \caB.
\end{equation}
Note that this is just $(\chi_1 \otimes \I_{n_2}) \circ \chi_2$ after identifying $\caB \otimes M_n(\mathbb{C})$ with $M_n(\caB)$.
For intertwiners $T \in (\chi | \chi')$ and $S \in (\psi | \psi')$ the tensor product $T \times S \in (\chi \times \psi | \chi' \times \psi')$ is defined by
\begin{equation} \label{eq:tensor of intertwiners}
	(T \times S)^{u_1 u_2, v_1 v_2} = \sum_{w_1, w_2} \chi^{u_1 w_1}( S^{u_2 w_2} ) T^{w_1 v_1} \delta^{w_2, v_2}
\end{equation}
which can also be written in matrix notation as $T \times S = \chi(S) (T \otimes I_{\psi'}) = (T \otimes I_{\psi}) \chi'(S)$.

The monoidal unit is the identity amplimorphism which is irreducible because $\caB'' = \caB(\caH_0)$ since $\pi_0$ is irreducible and $\pi_0(\alg{A}) \subset \alg{B}$.
Since the monoidal product is strict, it is trivially compatible with the ${\rm C}^*$-structure.
The subcategory $\DHR$ is closed under this monoidal product and contains the identity, it is therefore a monoidal subcategory of $\Amp$.

The monoidal product of objects is well defined thanks to Lemma \ref{lem:localized amplimorphisms amplify locally} and the monoidal product of intertwiners is well defined thanks to Lemma \ref{lem:localized morphisms for AMP}.
The monoidal product on $\DHR$ coincides with that defined in~\cite{Ogata2021} (see also the remarks around equations~(1.28)--(1.29) there).

\subsubsection{Braiding} \label{subsec:braiding}

It is well known that the category of localized endomorphisms for models in two spatial dimensions can be given a braiding \cite{frs1, frohlich1990braid, frohlich1988statistics}. Here we extend this to localized amplimorphisms.

The braiding on $\Amp$ is given by intertwiners $\ep(\chi, \psi) \in ( \psi \times \chi | \chi \times \psi)$ defined as follows. Since $\chi$ and $\psi$ are localized in allowed cones there is an allowed cone $\Lambda$ such that $\chi$ and $\psi$ are both localized in $\Lambda$. Let $\Lambda_L$ and $\Lambda_R$ be allowed cones `to the left and to the right' of $\Lambda$, \cf~Figure \ref{fig:braiding setup}. Let $\chi_R$ be a transportable amplimorphism localised in $\Lambda_R$ and fix an equivalence $U \in (\chi_R | \chi)$ with $U \in  M_m(\caR(\widetilde \Lambda_R))$ where $\widetilde \Lambda_R$ is an allowed cone that contains $\Lambda$ and $\Lambda_R$, but is disjoint from $\Lambda_L$. Similarly, pick a transportable amplimorphism $\psi_L$ localised in $\Lambda_L$ and a unitary $V \in (\psi_L | \psi)$ with $V \in M_n(\caR( \widetilde \Lambda_L))$. Such $\chi_R, U, \psi_L, V$ exist by transportability of $\chi$ and $\psi$. Now put
\begin{equation} \label{eq:braiding of amplimorphisms defined}
	\ep(\chi, \psi) := (V^* \times U^*) \cdot P_{12} \cdot (U \times V)
\end{equation}
where $P_{12} \in (\chi_R \times \psi_L | \psi_L \times \chi_R)$ is given by its components $P_{12}^{u_1 u_2, v_1 v_2} = \psi_L^{u_2 v_1}\left(\chi_R^{u_1 v_2}(\I)\right)$ (note the transposition of the indices compared to~\eqref{eq:monoidal product}).
That $P_{12}$ indeed is an intertwiner follows from a short calculation using that $\psi_L$ and $\chi_R$ are localized in disjoint cones, and hence $\psi_L^{u_1 u_2}(\chi_R^{v_1 v_2}(A)) = \chi_R^{v_1 v_2}(\psi_L^{u_1 u_2}(A))$ for all $A \in \caA$.
Alternatively,
\begin{equation}
    P_{12} = (\operatorname{id}_\caB \otimes P)((\psi_L \times \chi_R)(\I)), 
\end{equation}
where $P : M_n(\mathbb{C}) \otimes M_m(\mathbb{C}) \to M_m(\mathbb{C}) \otimes M_n(\mathbb{C})$ flips the tensor factors.
Using standard arguments, on can check that indeed $\ep(\chi, \psi) \in (\psi \times \chi | \chi \times \psi)$, that $\ep(\chi, \psi)$ is independent of the choices of $\chi_R, \psi_L, U, V$, and that $\ep$ is indeed a braiding for $\Amp$. See for example~\cite[Prop. 5.2]{SzlachanyiV93} for amplimorphisms, or \cite[Lemma 4.8]{Naaijkens2011},  \cite[Definition 4.10]{Ogata2021}, or \cite[Lemma 2.9]{bols2024double} for proofs of the analogous fact for the braiding of endomorphisms.\footnote{Note that in the case of approximate Haag duality (as in~\cite{Ogata2021}), one has to do some additional limiting procedure to define the braiding. This is because under the weaker localization properties, we do not necessarily have that $\rho \times \sigma = \sigma \times \rho$ if $\rho$ and $\sigma$ are approximately localized in disjoint cones.}
This braiding restricts to the $\rm C^*$-tensor subcategory $\DHR$, so $\DHR$ is a braided $\rm C^*$-tensor subcategory of $\Amp$.

\begin{figure}[!ht]
    \centering
    \includegraphics[width = 0.5\textwidth]{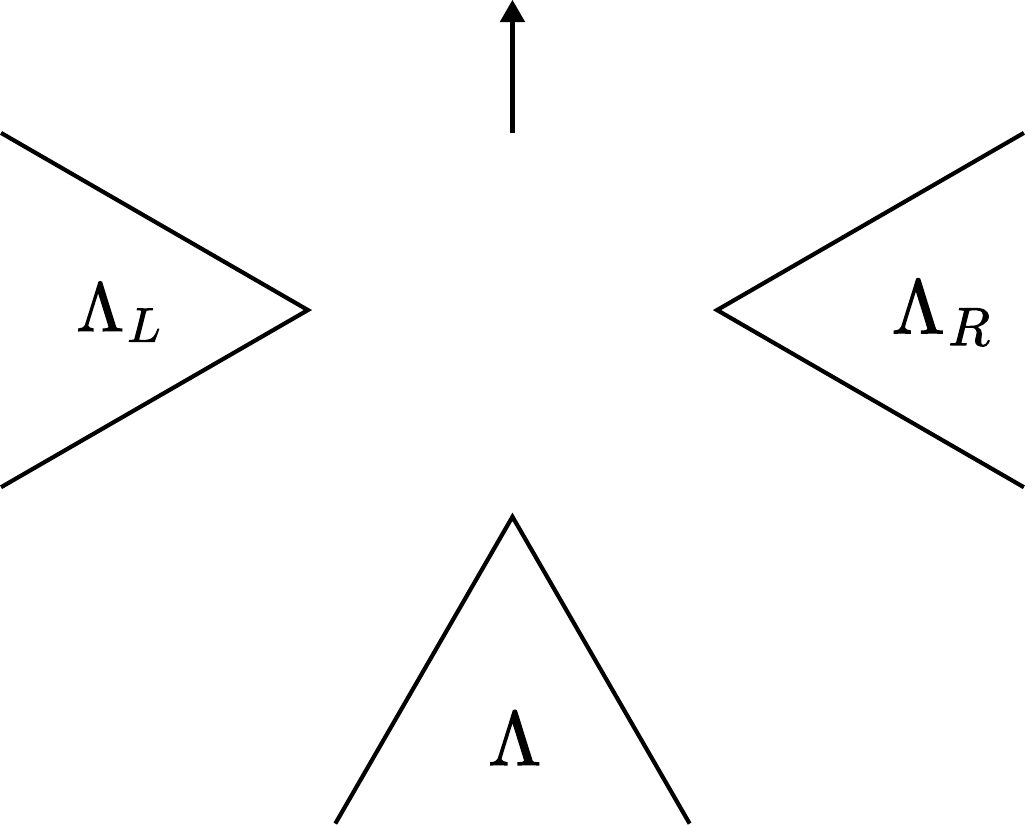}
    \caption{An example of the braiding setup. The arrow represents the forbidden direction.}
    \label{fig:braiding setup}
\end{figure}



\section{Equivalence of \texorpdfstring{$\Amp$}{Amp} and \texorpdfstring{$\DHR$}{DHR}}
\label{sec:equivalence of Amp and DHR}

\subsection{Reduction to unital amplimorphisms}

Our proof of the braided monoidal equivalence will rely on the fact that any amplimorphism of $\Amp$ is equivalent to a unital amplimorphism, a fact which we prove here. This fact will also be useful in Section \ref{sec:simples of Amp}, where the simple objects of $\Amp$ are characterized.

We say $\widetilde \Lambda$ is slightly larger than $\Lambda$, denoted $\Lambda \Subset \widetilde \Lambda$, if there exists another cone $\Lambda' \subset \widetilde \Lambda$ disjoint from $\Lambda$. 
That is, we can fit a cone in $\Lambda^c \cap \widetilde{\Lambda}$.
The following Lemma is proven in exactly the same way as \cite[Lemma 5.11]{Ogata2021}, and noting \cite[Corollary 6.3.5]{KadisonR97}.
We include it here for the convenience of the reader, as we will use this result repeatedly.

\begin{lemma} \label{lem:projectors are infinite}
    Let $\Lambda \Subset \widetilde \Lambda$ and let $p \in M_n( \caR(\Lambda) )$ be an orthogonal projector. Then $p$ is infinite as a projector in $M_n(\caR(\widetilde \Lambda))$, and is Murray-von Neumann equivalent to $\1_n$.
\end{lemma}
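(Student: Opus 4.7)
My plan is to use the assumption $\Lambda \Subset \widetilde\Lambda$ to fit a cone $\Lambda' \subset \widetilde\Lambda$ disjoint from $\Lambda$, and then exploit the fact that $\caR(\Lambda')$ is a properly infinite factor commuting with $\caR(\Lambda)$ (by Haag duality) to manufacture the desired partial isometries from Cuntz-type isometries inside $\caR(\Lambda')$. This single construction will yield both statements at once.

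Concretely, since $\caR(\Lambda')$ is an infinite factor, I would first pick isometries $s_1, s_2 \in \caR(\Lambda')$ satisfying $s_i^* s_j = \delta_{ij}\1$ and $s_1 s_1^* + s_2 s_2^* = \1$. By Haag duality, $\caR(\Lambda') \subset \caR(\Lambda^c) = \caR(\Lambda)'$, so each $s_i$ commutes with every element of $\caR(\Lambda)$. The amplifications $V_i := s_i\,\1_n \in M_n(\caR(\widetilde\Lambda))$ are then isometries ($V_i^* V_i = \1_n$) that commute entry-wise with the matrix $p$. A short calculation shows that the projections $V_1 p V_1^*$ and $V_2 p V_2^*$ are mutually orthogonal (since $V_1^* V_2 = 0$), sum to $p$, and are each Murray--von Neumann equivalent to $p$ via the partial isometry $V_i p$, which satisfies $(V_i p)^*(V_i p) = p$ and $(V_i p)(V_i p)^* = V_i p V_i^*$. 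Provided $p \ne 0$, both summands are nonzero: indeed, $V_i p V_i^* = 0$ would force $V_i p = 0$ and hence $p = V_i^* V_i p = 0$. Thus $V_1 p V_1^*$ is a proper subprojection of $p$ equivalent to $p$, so $p$ is infinite in $M_n(\caR(\widetilde\Lambda))$.

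For the equivalence $p \sim \1_n$, I would then note that $M_n(\caR(\widetilde\Lambda))$ is itself a factor in which $\1_n$ is infinite (inherited from $\caR(\widetilde\Lambda)$ being an infinite factor). Since any two infinite projections in a factor automatically have full central carrier and are therefore Murray--von Neumann equivalent -- this is exactly the content of Kadison--Ringrose Corollary 6.3.5 invoked in the lemma -- the infiniteness of $p$ just established yields $p \sim \1_n$. The only step that warrants care is checking that the scalar-amplified $V_i$ truly commute with the matrix entries of $p$; this is automatic once one writes $V_i = s_i \otimes \1_n$ and recalls that $s_i \in \caR(\Lambda')$ commutes with every element of $\caR(\Lambda)$ by Haag duality.
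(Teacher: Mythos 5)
Your proof is correct, and the core idea — produce isometries commuting with $p$ by working inside the commuting cone algebra $\caR(\Lambda')$ and amplifying them to $M_n(\caR(\widetilde\Lambda))$, then invoke Kadison--Ringrose Corollary 6.3.5 to upgrade infiniteness to equivalence with $\1_n$ — is the same as the paper's. Where the two proofs diverge is in how the proper subprojection of $p$ is exhibited. The paper takes a single isometry $V$ with $VV^* < \1_n$ and appeals to the reduction $*$-isomorphism $x \mapsto xp$ (Kadison, Prop.\ 5.5.5, valid because $\caR(\Lambda')$ is a factor, so $p$ has full central support relative to it) to conclude that $VV^*p \neq p$; you instead take a Cuntz pair $s_1, s_2 \in \caR(\Lambda')$ with $s_1 s_1^* + s_2 s_2^* = \1$, giving the clean decomposition $V_1 p V_1^* + V_2 p V_2^* = p$ into two orthogonal, mutually equivalent pieces, each nonzero by the elementary implication $V_i p V_i^* = 0 \Rightarrow V_i p = 0 \Rightarrow p = V_i^* V_i p = 0$. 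Your route avoids the external citation of the reduction isomorphism entirely, and it has a further small advantage: you are explicit that the isometries must be taken in the scalar-amplified form $s_i \otimes \1_n$ so that they actually commute with the matrix $p$ (an arbitrary isometry of $M_n(\caR(\Lambda'))$ would not); the paper's phrasing ``find an isometry $V \in M_n(\caR(\Lambda'))$'' and ``$V$ commutes with $p$ since they have disjoint supports'' leaves this point slightly implicit. One shared tacit assumption worth flagging: both arguments silently require $p \neq 0$, which holds in all applications in the paper (e.g.\ $p = \chi(\1)$ for a transportable amplimorphism), but is not stated in the lemma.
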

\begin{proof}
     By assumption, there is a cone $\Lambda' \subset \widetilde \Lambda$ that is disjoint from $\Lambda$. Since $\caR(\Lambda')$ is an infinite factor (see Sect.~\ref{sec:cone_algebra}), so is $M_n(\caR(\Lambda'))$ and we can apply the halving lemma~\cite[Lemma 6.3.3]{KadisonR97} to find an isometry $V \in M_n(\caR(\Lambda'))$ such that $V V^* < \1_n$. Note that $V$ and $V^*$ commute with $p$ since they have disjoint supports.
     The map $x \mapsto xp$ for $x \in M_n(\caR(\Lambda'))$ is a $*$-isomorphism from $M_n(\caR(\Lambda'))$ onto $M_n(\caR(\Lambda')p$ by~\cite[Prop. 5.5.5]{Kadison1983}.
     In particular, this implies that $V V^* p \neq p$, and hence is a proper subprojection of $p$.
     Put $\widetilde V = p V$, then
     \begin{equation}
         \widetilde V \widetilde V^* = p V V^* < p, \quad \widetilde V^* \widetilde V = p V^* V p = p.
     \end{equation}
     This shows that $p$ as a projection in $M_n(\caR(\widetilde \Lambda))$ is Murray von Neumann equivalent to its proper subprojection $p VV^*$ and thus $p$ is infinite in $M_n(\caR(\widetilde \Lambda))$. Murray-von Neumann equivalence to $\1_n$ now follows immediately from Corollary 6.3.5 of \cite{KadisonR97}.
\end{proof}

\begin{lemma} \label{lem:localised amplis are equiv to unital amplis}
        Let $\chi$ be an amplimorphism of degree $n$ localized in a cone $\Lambda$, and $\widetilde \Lambda$ be another cone such that $\Lambda \Subset \widetilde \Lambda$. Then there exists a unital amplimorphism localized on  $\widetilde \Lambda$ that is equivalent to $\chi$.
\end{lemma}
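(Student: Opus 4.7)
The plan is to convert the non-unital amplimorphism $\chi$ into a unital one by conjugating with a suitable partial isometry that identifies $\chi(\1)$ with $\1_n$. The key ingredients are already in place: Lemma \ref{lem:localized amplimorphisms amplify locally} tells us $\chi(\1) \in M_n(\caR(\Lambda))$, and Lemma \ref{lem:projectors are infinite}, applied to the pair $\Lambda \Subset \widetilde\Lambda$, tells us $\chi(\1)$ is Murray--von Neumann equivalent to $\1_n$ inside $M_n(\caR(\widetilde\Lambda))$. So there exists a partial isometry $W \in M_n(\caR(\widetilde\Lambda))$ with $W^*W = \1_n$ and $WW^* = \chi(\1)$.

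I would then define $\widetilde\chi(O) := W^*\chi(O)W$ for $O \in \caB$ and verify the four required properties in order. First, that $\widetilde\chi$ is a $*$-homomorphism: multiplicativity reduces to the identity $\chi(O)\chi(O') = \chi(O)\chi(\1)\chi(O') = \chi(O)\,WW^*\,\chi(O')$, which sandwiched between $W^*$ and $W$ gives $\widetilde\chi(O)\widetilde\chi(O')$. Unitality is just $\widetilde\chi(\1) = W^*\chi(\1)W = W^*WW^*W = \1_n$.

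Second, that $\widetilde\chi$ is localized in $\widetilde\Lambda$: for $O \in \pi_0(\caA_{\widetilde\Lambda^c}) \subset \pi_0(\caA_{\Lambda^c})$, the $\Lambda$-localization of $\chi$ gives $\chi(O) = \chi(\1)(O\otimes\1_n) = WW^*(O\otimes\1_n)$, so $\widetilde\chi(O) = W^*(O\otimes\1_n)W$. Haag duality places the matrix entries of $W$ in $\caR(\widetilde\Lambda) = \caR(\widetilde\Lambda^c)'$, so they commute with $O\otimes\1_n$, yielding $\widetilde\chi(O) = (O\otimes\1_n)W^*W = O\otimes\1_n$ as required.

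Third, that $W$ witnesses the equivalence $\chi \sim \widetilde\chi$: the intertwiner relation $W\widetilde\chi(O) = WW^*\chi(O)W = \chi(\1)\chi(O)W = \chi(O)W$ holds, and by construction $W^*W = \widetilde\chi(\1)$ and $WW^* = \chi(\1)$. Transportability of $\widetilde\chi$ is automatic, since for any target cone $\Lambda'$ we can transport $\chi$ there and compose equivalences. There is really no serious obstacle here; the only thing requiring care is making sure the degree-$n$ matrix multiplications and the scalar identity $\chi(O)\chi(\1) = \chi(O)$ are used correctly in the verification that $\widetilde\chi$ is a $*$-homomorphism and in the localization computation.
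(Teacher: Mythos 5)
Your proof is correct and is essentially identical to the paper's: both use Lemma \ref{lem:localized amplimorphisms amplify locally} to place $\chi(\1)$ in $M_n(\caR(\Lambda))$, then Lemma \ref{lem:projectors are infinite} to obtain an isometry $V \in M_n(\caR(\widetilde\Lambda))$ with $VV^* = \chi(\1)$, define the unital amplimorphism by conjugation $O \mapsto V^*\chi(O)V$, and use Haag duality to verify $\widetilde\Lambda$-localization. Your added explicit check of multiplicativity and your note on transportability are fine but not part of the paper's argument; otherwise the two proofs coincide.
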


\begin{proof}
    By Lemma \ref{lem:localized amplimorphisms amplify locally} we have that the projector $\chi(\1)$ belongs to $M_n(\caR(\Lambda))$. By Lemma~\ref{lem:projectors are infinite}, it follows that $\chi(\1)$ is infinite as an element of $M_n(\caR(\widetilde\Lambda))$ and is Murray-von Neumann equivalent to $\1_n \in M_n(\caR(\widetilde{\Lambda}))$. Therefore there exists an isometry $V \in M_n(\caR(\widetilde\Lambda))$ such that $V V^* = \chi(\1)$ and $V^* V = \1_n$. 
    
    Let $\psi$ be given by $\psi(O) = V^* \chi(O) V$ for all $O \in \caB$, then $\psi(\1) = V^* \chi(\1) V = V^* V V^* V = \1_n$ so $\psi$ is indeed unital. In fact, we see that $V \in (\chi | \psi)$ is an equivalence.
    If $O \in \pi_0(\alg{A}_{\widetilde \Lambda^c})$ then 
    \begin{equation}
        \psi(O) = V^* \chi(O) V = V^* \chi(\1) (O \otimes \1_n) V = V^* (O \otimes \1_n) V = (O \otimes \1_n) V^* V = O \otimes \1_n,
    \end{equation}
    so $\psi$ is indeed localized on $\widetilde \Lambda$.
\end{proof}

If $\chi$ is in addition transportable, we can first transport to a smaller cone inside the localization region $\Lambda$, to make room for the `additional cone' needed in the proof.
The construction above does not affect transportability, so we immediately obtain the following corollary.

\begin{corollary} \label{cor:ampli is equivalent to unital ampli in the same cone}
    Any localized and transportable amplimorphism $\chi$ is equivalent to a \emph{unital} transportable amplimorphism $\chi'$ localized in the same cone.
\end{corollary}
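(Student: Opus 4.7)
The plan is to reduce to Lemma \ref{lem:localised amplis are equiv to unital amplis} by first shrinking the localization region. Concretely, I would pick an allowed cone $\Lambda_0$ with $\Lambda_0 \Subset \Lambda$; this is possible because any allowed cone contains a strictly smaller cone with enough room left over for an auxiliary cone inside $\Lambda \setminus \Lambda_0$. Transportability of $\chi$ then yields an amplimorphism $\chi_0$ localized in $\Lambda_0$ together with an equivalence $W \in (\chi \mid \chi_0)$, i.e.\ a partial isometry with $W W^* = \chi(\1)$ and $W^* W = \chi_0(\1)$.

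Next I would apply Lemma \ref{lem:localised amplis are equiv to unital amplis} to $\chi_0$ with $\widetilde{\Lambda_0} := \Lambda$, which is legitimate precisely because $\Lambda_0 \Subset \Lambda$. This produces a unital amplimorphism $\chi'$ localized in $\Lambda$ and an equivalence $V \in (\chi_0 \mid \chi')$ of the form $\chi' = V^* \chi_0(\cdot) V$ with $V$ an isometry in $M_n(\caR(\Lambda))$. Composing partial isometries, $WV \in (\chi \mid \chi')$ is an equivalence between $\chi$ and $\chi'$, so $\chi'$ is a unital amplimorphism localized in the original cone $\Lambda$ and equivalent to $\chi$.

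Finally, I would observe that transportability is invariant under equivalence: for any allowed cone $\Lambda'$, transportability of $\chi$ gives an amplimorphism $\eta$ localized in $\Lambda'$ and equivalent to $\chi$, hence to $\chi'$ by transitivity of $\sim$. Therefore $\chi'$ is transportable as well, completing the argument.

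There is no real obstacle here; the only subtlety is making sure that the shrinking step $\Lambda_0 \Subset \Lambda$ is carried out among \emph{allowed} cones so that $\chi_0$ stays an object of $\Amp$, but this is immediate from the geometry of allowed cones (any allowed cone admits a strictly smaller concentric allowed cone with room for a disjoint auxiliary cone inside $\Lambda$).
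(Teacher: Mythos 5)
Your proof is correct and follows essentially the same route as the paper: shrink the localization region via transportability, apply Lemma~\ref{lem:localised amplis are equiv to unital amplis} with $\widetilde\Lambda = \Lambda$, compose the equivalences, and note that transportability is preserved under equivalence. One small remark: your worry about keeping the shrunken cone $\Lambda_0$ allowed is unnecessary, since transportability holds for \emph{any} target cone and Lemma~\ref{lem:localised amplis are equiv to unital amplis} does not require the localization cone to be allowed; only the final amplimorphism $\chi'$ needs to live in the allowed cone $\Lambda$, which it does.
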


\begin{proof}
    Let $\chi$ be localized in $\Lambda$. We have by transportability of $\chi$ that there exists an amplimorphism $\psi$ localized in a cone $\Lambda' \Subset \Lambda$ such that $\psi \sim \chi$. We have by Lemma \ref{lem:localised amplis are equiv to unital amplis} that there exists a unital amplimorphism $\chi'$ localized in $\Lambda$ such that $\chi' \sim \psi$, so we have $\chi' \sim \chi$. Transportability of $\chi'$ is immediate by the transportability of $\chi$.
\end{proof}

\subsection{Proof of equivalence}

We now show that instead of amplimorphisms, we can equivalently talk about endomorphisms.
For any cone $\Lambda$ and any $n \in \N$, fix a row vector $\Iso(\Lambda, n) := (V_1, \cdots, V_n)$ whose components are isometries $V_i \in \caR(\Lambda)$ satisfying $V_i^* V_j = \delta_{ij} \1$ and $\sum_{i = 1}^n V_i V_i^* = \1_n$.
(Since $\caR(\Lambda)$ is an infinite factor, we can repeatedly apply the halving lemma~\cite[Lemma 6.3.3]{KadisonR97} to obtain such isometries).
For any $\chi \in \Amp$ fix an allowed cone $\Lambda_{\chi}$ such that $\chi$ is localized on $\Lambda_{\chi}$ and write $\Iso_{\chi} = \Iso(\Lambda_{\chi}, n)$, where $n$ is the degree of $\chi$.

Now let $\chi \in \Amp$ be a unital amplimorphism of degree $n$. We define $\nu_{\chi} : \caB \rightarrow \caB$ to be the endomorphism given by
\begin{equation}
    \nu_{\chi}(O) := \Iso_{\chi} \, \chi(O) \Iso_{\chi}^*.
\end{equation}
Here we see $\Iso_{\chi}^*$ as a column vector with entries $V_i^*$.
One easily verifies that this indeed is an endomorphism and that $\nu_{\chi}$ is localized in $\Lambda_{\chi}$.

If $\chi, \chi' \in \Amp$ are unital amplimorphisms and $T \in (\chi | \chi')$, we define $t_T \in \caB(\caH_0)$ by $t_T = \Iso_{\chi} T \Iso_{\chi'}^*$. Then
\begin{equation}
    t_T \nu_{\chi'}(O) = \Iso_{\chi} \, T \, \Iso_{\chi'}^* \, \Iso_{\chi'} \, \chi'(O) \, \Iso_{\chi'}^* = \Iso_{\chi} \, T \, \chi'(O) \, \Iso_{\chi'}^* = \Iso_{\chi} \, \chi(O) \, T \, \Iso_{\chi'}^* = \nu_{\chi}(O) \, t_T
\end{equation}
so $t_T \in (\nu_{\chi} | \nu_{\chi'})$. The map $T \mapsto t_T$ defines a *-isomorphism of intertwiner spaces $(\chi | \chi')$ and $(\nu_{\chi} | \nu_{\chi'})$.

It follows in particular that the $\nu_{\chi}$ obtained in this way are transportable. Indeed, let $\Lambda'$ be some cone. By transportability of $\chi$ and Corollary \ref{cor:ampli is equivalent to unital ampli in the same cone} there is unital $\chi' \in \Amp$ localized on $\Lambda'$ and a unitary $U \in (\chi | \chi')$. Then $t_U \in (\nu_{\chi} | \nu_{\chi'})$ is also unitary.

Since $\Iso_{\chi} \in ( \nu_{\chi} | \chi )$ is an equivalence of amplimorphisms, we conclude in particular that every unital amplimorphsm in $\Amp$ is equivalent to an endomorphism in $\DHR$. Together with Corollary \ref{cor:ampli is equivalent to unital ampli in the same cone} we obtain the following lemma.

\begin{lemma} \label{lem:ampli is equivalent to endo}
Every $\chi \in \Amp$ is equivalent to an endomorphism $\rho_\chi$ in the subcategory $\DHR$. 
\end{lemma}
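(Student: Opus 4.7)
The statement is essentially a packaging of two results already set up in the preceding discussion, so my plan is simply to compose them. Starting from an arbitrary $\chi \in \Amp$, I would first invoke Corollary \ref{cor:ampli is equivalent to unital ampli in the same cone} to obtain a unital transportable amplimorphism $\chi' \in \Amp$ localized in the same cone as $\chi$, together with a partial isometry $W \in (\chi \mid \chi')$ satisfying $W^*W = \chi'(\1)$ (which is a unit matrix since $\chi'$ is unital) and $WW^* = \chi(\1)$. This reduces the problem to the unital case.

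Then I would set $\rho_\chi := \nu_{\chi'}$, where $\nu_{\chi'}(O) = \Iso_{\chi'}\, \chi'(O)\, \Iso_{\chi'}^*$ is the endomorphism constructed in the paragraphs above the lemma. By the discussion there, $\nu_{\chi'}$ is a unital $\ast$-endomorphism of $\caB$, localized in $\Lambda_{\chi'}$, and transportable (the latter being verified explicitly via the intertwiner correspondence $T \mapsto t_T$), so $\rho_\chi \in \DHR$. To finish, I would combine the two equivalences by setting $T := W\, \Iso_{\chi'}^* \in (\chi \mid \rho_\chi)$: the intertwining property follows at once from $W \in (\chi \mid \chi')$ and $\Iso_{\chi'} \in (\nu_{\chi'} \mid \chi')$, while $T^*T = \Iso_{\chi'}\, W^*W\, \Iso_{\chi'}^* = \Iso_{\chi'}\, \chi'(\1)\, \Iso_{\chi'}^* = \rho_\chi(\1)$ and $TT^* = W\, \Iso_{\chi'}^*\, \Iso_{\chi'}\, W^* = WW^* = \chi(\1)$ show that $T$ is an equivalence of $\chi$ and $\rho_\chi$.

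There is no substantive obstacle here; every ingredient that actually requires work — Haag duality, the properly infinite nature of the cone algebras used in Lemma \ref{lem:projectors are infinite}, the construction of $\chi'$ in the same cone (Corollary \ref{cor:ampli is equivalent to unital ampli in the same cone}), and the verification that $\nu_{\chi'}$ lies in $\DHR$ — has already been dispensed with in the preceding material.
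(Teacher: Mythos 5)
Your proof is correct and follows exactly the route the paper takes: reduce to the unital case via Corollary~\ref{cor:ampli is equivalent to unital ampli in the same cone}, then pass to the endomorphism $\nu_{\chi'}$ via the isometry row vector $\Iso_{\chi'}$, and compose the two equivalences. The paper leaves the composition implicit while you write out the combined intertwiner $W\,\Iso_{\chi'}^*$ and verify the partial-isometry conditions; this is the same argument, just made explicit.
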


Even though we do not need it to prove Theorem \ref{thm:main result}, we can now easily obtain the following proposition which says that the localized and transportable amplimorphisms are equivalent to the endomorphisms studied in~\cite{Ogata2021}.
\begin{proposition}
\label{prop:braided monoidal equivalence of Amp and DHR}
    $\DHR$ and $\Amp$ are equivalent as braided $\rm C^*$-tensor categories.
\end{proposition}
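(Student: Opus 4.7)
The plan is to exhibit the inclusion functor $I\colon \DHR \hookrightarrow \Amp$ as a braided $\rm C^*$-tensor equivalence. Three things need to be checked: full faithfulness, essential surjectivity, and compatibility with the braided monoidal structure. All three will be essentially immediate from results already established in the paper.

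First I would note that $\DHR$ is by definition a full subcategory of $\Amp$, so the inclusion is fully faithful. For essential surjectivity, Lemma \ref{lem:ampli is equivalent to endo} gives, for any $\chi \in \Amp$, an endomorphism $\rho_\chi \in \DHR$ together with an equivalence in $\Amp$ between $\chi$ and $\rho_\chi$. One must observe that an equivalence of amplimorphisms in the sense of~\eqref{eq:intertwiners of amplis defined} (a partial isometry $U$ with $U^*U = \chi'(\1)$ and $UU^* = \chi(\1)$) is precisely an isomorphism in the $\rm C^*$-category $\Amp$, because $\chi(\1)$ is the identity of $\chi$ as a morphism. Hence $\chi$ and $\rho_\chi$ are isomorphic in $\Amp$, and $I$ is essentially surjective.

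For the braided monoidal structure I would invoke the fact, already recorded at the end of Sections \ref{subsec:monoidal structure} and \ref{subsec:braiding}, that $\DHR$ is closed under the monoidal product of $\Amp$, contains the monoidal unit (the identity endomorphism, which has degree one), and inherits the braiding of $\Amp$. Thus the inclusion $I$ is a strict braided $\rm C^*$-tensor functor. Combined with full faithfulness and essential surjectivity, this yields the braided $\rm C^*$-tensor equivalence. There is no genuine obstacle here; the only point requiring a moment of care is the translation between the partial-isometry notion of equivalence of amplimorphisms and the categorical notion of isomorphism, which is settled once one recognizes $\chi(\1)$ as the identity on $\chi$.
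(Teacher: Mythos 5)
Your proof is correct and follows essentially the same route as the paper: use the inclusion functor, note it is fully faithful and strictly braided monoidal since $\DHR$ is a full braided $\rm C^*$-tensor subcategory, and obtain essential surjectivity from Lemma~\ref{lem:ampli is equivalent to endo}. Your extra remark translating the partial-isometry notion of equivalence of amplimorphisms into a categorical isomorphism (via $\chi(\1)$ being the identity morphism on $\chi$) is a useful clarification that the paper leaves implicit.
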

\begin{proof}
    Let $F : \DHR \rightarrow \Amp$ be the embedding functor. Clearly $F$ is linear, fully faithful, braided monoidal, and respects the $*$-structure. It remains to check that $F$ is essentially surjective, but this is immediate from Lemma \ref{lem:ampli is equivalent to endo}.
\end{proof}


 \section{Amplimorphisms from ribbon operators} \label{sec:amplimorphism from ribbon operators}

In this section we construct for each half-infinite ribbon $\rho$ a full subcategory $\Amp_{\rho}$ of $\Amp$ whose objects are constructed as limits of certain `ribbon operators' taking unitary representations of $\caD(G)$ as input. (See Appendix \ref{app:ribbon operators} for the definition and basic properties of ribbons and ribbon operators).
From the equivalence of the localized and transportable amplimorphisms to DHR endomorphisms, this amounts to explicitly constructing examples of representations that satisfy the superselection criterion.
More importantly, we can also define the intertwiners as (weak operator) limits of elements in the quasi-local algebra.
In the notation of~\cite{Ogata2021}, this amounts to finding explicit examples of the maps $T$ defined there, as well as how they act on the intertwiners.

The very concrete description of $\Amp_{\rho}$ and its intertwiners will allow us to identify the braiding and fusion in this category.
We will use this to show in Section \ref{subsec:Amp_rho and Rep} that the categories $\Amp_{\rho}$ are equivalent to $\Rep_f\, \caD(G)$ as braided $\rm C^*$-tensor categories, and in Section \ref{subsec:Amp_rho and Amp_f} that they are equivalent to the whole of $\Amp_f$, thus establishing the equivalence of $\Amp_f$ and $\Rep_f\, \caD(G)$ as braided $\rm C^*$-tensor categories.

\subsection{Finite ribbon multiplets}

Throughout the rest of this manuscript the tensor product $\otimes$ of two matrices over $\caA$ will always mean the usual matrix tensor product, while the tensor product $\otimes$ of an element of $\caA$ with a matrix over $\C$ means the amplifying tensor product, yielding a matrix over $\caA$.

\begin{definition} \label{def:simple ribbon operators}
	For any $n$-dimensional unitary representation $D$ of $\caD(G)$ and any ribbon $\rho$ define $\bF^D_{\rho} \in M_n(\caA)$ by
	\begin{equation}
		\bF_{\rho}^D = \sum_{g, h} \, F_{\rho}^{g, h} \otimes D \big( g, h \big).
	\end{equation}
\end{definition}

\begin{proposition} \label{prop:properties of simple ribbon operators}
	Let $\rho$ be a ribbon such that $s_i = \partial_i \rho$, $i = 1, 2$ have distinct vertices and faces, and let $D$ be an $n$-dimensional unital unitary representation of $\caD(G)$.
	\begin{enumerate}[label=(\roman*)]
		\item We have
            \begin{equation}
                \bF_{\rho}^{D} \cdot ( \bF_{\rho}^{D} )^* = (\bF_{\rho}^{D} )^* \cdot \bF_{\rho}^{D} = \1_n.
            \end{equation}
            In other words, $\bF_{\rho}^{D}$ is a unitary element of $M_n(\caA)$. \label{ribbon prop:unitarity}
		
            \item We have $\bF_{\bar \rho}^{D} = ( \bF_{\rho}^D )^*$. \label{ribbon prop:adjoint}
        
            \item \label{ribbon prop:sum and product} Let $D_1, D_2$ be unitary representations of $\caD(G)$. The direct sum and product of 
            ribbon operators $\bF_{\rho}^{D_1}$ and $\bF_{\rho}^{D_2}$ satisfy
    	\begin{equation}
    		\bF_{\rho}^{D_1} \oplus \bF_{\rho}^{D_2} =  \bF_{\rho}^{D_1 \oplus D_2}, \quad \bF_{\rho}^{D_1} \otimes \bF_{\rho}^{D_2} = \bF_{\rho}^{D_1 \times D_2}
            \end{equation}
            where the direct sum and tensor product on the left hand sides are the usual direct sum and tensor product of matrices (with $\caA$-valued components), and $D_1 \times D_2$ is the monoidal product of the two representations (see Appendix~\ref{app:introduction to D(G)}).
            
		\item If $\rho = \rho_1\rho_2$ then
            \begin{equation}
                \bF_{\rho}^D = \bF_{\rho_1}^D \cdot \bF_{\rho_2}^D.
            \end{equation} \label{ribbon prop:concatenation}
            
		\item If $t \in (D_1 | D_2)$ then
            \begin{equation}
                 \bF_{\rho}^{D_1} (\1 \otimes t) = (\1 \otimes t) \bF_{\rho}^{D_2}, \quad (\bF_{\rho}^{D_1})^* (\1 \otimes t) = (\1 \otimes t) (\bF_{\rho}^{D_2})^* .
            \end{equation} \label{ribbon prop:intertwiners}
		\item If $\rho_1$ and $\rho_2$ are positive ribbons with common initial site $s_0$ as in Figure \ref{fig:braiding positive ribbons}, then
            \begin{equation} \label{eq:braiding of positive ribbon multiplets}
                \bF_{\rho_2}^{D_2} \otimes \bF_{\rho_1}^{D_1} = (\I \otimes B(D_1, D_2)) \cdot ( \bF_{\rho_1}^{D_1} \otimes \bF_{\rho_2}^{D_2} ) \cdot (\I \otimes P_{12}).
            \end{equation} \label{ribbon prop:braiding}
            where $B(-, -)$ is the braiding on $\Rep_f \caD(G)$, and $P_{12}$ interchanges the factors in the tensor product of the representation spaces of $D_1$ and $D_2$ (see Appendix \ref{app:introduction to D(G)}).
	\end{enumerate}
\end{proposition}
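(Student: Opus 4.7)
The plan is to prove each item by expanding the definition of $\bF_\rho^D$ and invoking, on the $F_\rho^{g,h}$ side, the ribbon-operator identities collected in Appendix \ref{app:ribbon operators}, and on the $D$ side, the corresponding Hopf-$*$-algebra operations of $\caD(G)$ recalled in Appendix \ref{app:introduction to D(G)} (multiplication, coproduct, antipode, star, and universal R-matrix). The hypothesis that $\partial_1\rho$ and $\partial_2\rho$ have distinct vertices and faces ensures that $(g,h)\mapsto F_\rho^{g,h}$ is a genuine $*$-representation of $\caD(G)$ (rather than a quotient thereof), so the Hopf-algebra calculus on $\caD(G)$ can be transported freely to identities for the $F_\rho^{g,h}$. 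I would establish the items in the order (iii), (v), (ii), (iv), (i), (vi), roughly from most formal to most technical.

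Items (iii) and (v) are essentially formal. The direct-sum half of (iii) is immediate from linearity of the definition in $D$. The tensor-product half is a componentwise expansion of the matrix tensor product: after applying the ribbon-operator multiplication rule of the form $F_\rho^{g_1,h_1}F_\rho^{g_2,h_2}=\delta_{h_1,h_2}\,F_\rho^{g_1g_2,h_1}$ to collapse the $h$-indices, the remaining single sum over $(g,h)$ matches the coproduct expression $(D_1\times D_2)(g,h) = \sum_{g_1g_2=g}D_1(g_1,h)\otimes D_2(g_2,h)$, identifying the result as $\bF_\rho^{D_1\times D_2}$. Item (v) is a one-line manipulation: substitute the intertwining $D_1(g,h)t=tD_2(g,h)$ into the defining sum and factor $(\1\otimes t)$ outside. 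Item (ii) then follows by applying the appendix identity $(F_\rho^{g,h})^{*}=F_{\bar\rho}^{S(g,h)}$ (with $S$ the antipode of $\caD(G)$) together with the fact that unitarity of $D$ gives $D(g,h)^{*}=D(S(g,h))$, and relabelling the summation index $(g,h)\mapsto S(g,h)$.

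For (iv) one expands $\bF_{\rho_1}^D\cdot\bF_{\rho_2}^D$ into a double sum over $(g_1,h_1),(g_2,h_2)$. The appendix provides a concatenation rule for $F^{g,h}$ that is exactly the coproduct of $\caD(G)$ pulled back through the ribbon operators; combined with the fact that $D$ is an algebra homomorphism on $\caD(G)$, the double sum collapses to $\sum_{g,h}F_{\rho_1\rho_2}^{g,h}\otimes D(g,h)$, which is $\bF_{\rho_1\rho_2}^D=\bF_\rho^D$. Item (i) is then obtained by combining (ii) and (iv): writing $\bF_\rho^D(\bF_\rho^D)^{*}=\bF_\rho^D\bF_{\bar\rho}^D=\bF_{\rho\bar\rho}^D$, one closes the argument by a direct check that on a back-and-forth ribbon the operator $\bF^D$ reduces to $\1_n$ (alternatively, one expands once and closes the sum using the multiplication rule together with unitarity of $D$, without appealing to (ii) and (iv) at all).

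The main technical content is (vi); this is also the identity that will ultimately supply the braiding morphism on $\Amp_\rho$. The appendix provides a half-braiding identity for scalar ribbon operators on two positive ribbons sharing an initial site $s_0$, of the form
\begin{equation*}
 F_{\rho_2}^{g_2,h_2}\,F_{\rho_1}^{g_1,h_1}=\sum_{g_i',h_i'} R^{(g_1',h_1'),(g_2',h_2')}_{(g_1,h_1),(g_2,h_2)}\,F_{\rho_1}^{g_1',h_1'}\,F_{\rho_2}^{g_2',h_2'},
\end{equation*}
whose coefficients are matrix elements of conjugation by the universal R-matrix of $\caD(G)$. Substituting this into the expansion of $\bF_{\rho_2}^{D_2}\otimes\bF_{\rho_1}^{D_1}$ and recognising the R-sum as the braiding morphism $B(D_1,D_2)=P_{12}\circ(D_1\otimes D_2)(R)$ of $\Rep_f\caD(G)$ (whichever convention is fixed in Appendix \ref{app:introduction to D(G)}) yields the stated identity. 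I expect the main obstacle to be bookkeeping rather than substance: the order of the matrix tensor factors, the placement of the flip $P_{12}$, and the distinction between the two tensor-product conventions on $M_{n_1}(\C)\otimes M_{n_2}(\C)$ must all be tracked carefully so that the ribbon-side R-matrix sum matches the braiding convention chosen for $\Rep_f\caD(G)$.
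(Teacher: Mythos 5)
Your overall plan — expand $\bF_\rho^D$ in the basis $F_\rho^{g,h}\otimes D(g,h)$, apply the scalar ribbon identities from Appendix B, and close the sums using the Hopf-$*$-algebra operations on $\caD(G)$ — is exactly the intended route, and the paper itself says no more than this (``by straightforward computations using Eqs.\ (B.4), (B.5), (B.6), and the braid relation (B.7)''). Your order (iii), (v), (ii), (iv), (i), (vi) and the individual manipulations for (iii), (iv), (v), (vi) are correct modulo bookkeeping, which you acknowledge. Three points should be corrected before a careful write-up, however.

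First, the framing claim that ``the hypothesis ensures $(g,h)\mapsto F_\rho^{g,h}$ is a genuine $*$-representation of $\caD(G)$'' is false: the ribbon multiplication $F_\rho^{g_1,h_1}F_\rho^{g_2,h_2}=\delta_{h_1,h_2}F_\rho^{g_1g_2,h_1}$ generates the algebra $\C[G]\otimes\C(G)$, not the \emph{twisted} product $\mu$ of $\caD(G)$, so the map is \emph{not} a $\caD(G)$-algebra homomorphism. What the hypothesis on distinct endpoints actually buys is linear independence of the $F_\rho^{g,h}$; the $\caD(G)$ structure enters through the coproduct, antipode, $*$, and $R$-matrix, not through the on-ribbon multiplication. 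This misstatement is harmless here because your actual computations use the ribbon multiplication rule, not the $\caD(G)$ one, but it should not be left standing.

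Second, in your proof of (ii) you use $(F_\rho^{g,h})^*=F_{\bar\rho}^{S(g,h)}$ and $D(g,h)^*=D(S(g,h))$. Both formulas are wrong and should involve the $*$-operation of $\caD(G)$, not its antipode: the reversal identity (B.6) reads $F_\rho^{g,h}=F_{\bar\rho}^{S(g,h)}$ \emph{without} a star, and combining it with $(F_\rho^{g,h})^*=F_\rho^{\bar g,h}$ gives $(F_\rho^{g,h})^*=F_{\bar\rho}^{(g,h)^*}$ where $(g,h)^*=(\bar hgh,\bar h)\neq S(g,h)=(\bar h\bar gh,\bar h)$. Likewise unitarity of $D$ gives $D(g,h)^*=D((g,h)^*)$, not $D(S(g,h))$. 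Your two errors are \emph{consistent} (both replace $*$ by $S$), and since both $*$ and $S$ are involutive bijections on the basis of $\caD(G)$ the relabeling in the sum still works, so the final identity is correct — but the intermediate formulas as stated are false.

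Third, your preferred route to (i) via $\bF_\rho^D(\bF_\rho^D)^*=\bF_\rho^D\bF_{\bar\rho}^D=\bF_{\rho\bar\rho}^D$ does not parse: $\rho\bar\rho$ traverses every edge of $\rho$ twice, so it is not a ribbon, and $\bF_{\rho\bar\rho}^D$ is undefined. Item (iv) applies only when the concatenation is again a ribbon. You should use the direct alternative you sketch: expand once, collapse the $h$-index with the ribbon multiplication rule and the $g$-index with unitarity of $D$ (which forces $g_1=g_2$), and then invoke $\sum_h F_\rho^{e,h}=\1$ and $\sum_g D(g,e)=D(\eta(1))=\1_n$ to finish.
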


\begin{proof}
    By straightforward computations using Eqs. \eqref{eq:ribbon multiplication and adjoint}, \eqref{eq:ribbon reversal}, \eqref{eq:sum to identity}, and using the braid relation \eqref{eq:braiding positive ribbons} to obtain item \ref{ribbon prop:braiding}.
\end{proof}

\begin{figure}[!ht]
\centering
\includegraphics[width = 0.4\textwidth]{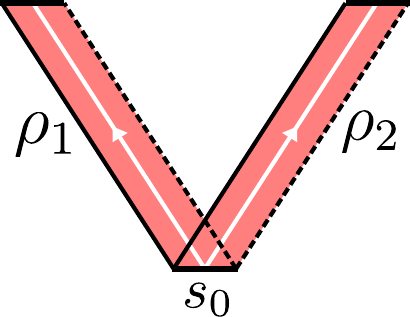}
\caption{Braiding positive ribbon operators, both having the same starting site $s_0$.}
\label{fig:braiding positive ribbons}
\end{figure}

\subsection{Amplimorphisms of the quasi-local algebra from ribbon multiplets}

\subsubsection{Construction}

For any finite ribbon $\rho$ and any $n$-dimensional unitary representation $D$ of $\caD(G)$, define linear maps $\mu_{\rho}^{D} : \caA \rightarrow M_n(\caA) \simeq \caA \otimes M_n(\C)$ by
\begin{equation}
	\mu_{\rho}^{D}(O) := \bF_{\rho}^{D} \cdot (O \otimes \I_n) \cdot (\bF_{\rho}^{D})^*.
\end{equation}
Note that by Proposition~\ref{prop:properties of simple ribbon operators} it follows directly that $\mu_\rho^D$ is a $*$-homomorphism.

A half-infinite ribbon $\rho = \{\tau_n\}_{n = 1}^{\infty}$ is a sequence of triangles labelled by $n \in \N$ such that $\partial_1 \tau_n = \partial_0 \tau_{n+1}$ for all $n\in \bbN$ and such that no edge of the lattice belongs to more than one of these triangles.

For any half-infinite ribbon $\rho = \{\tau_n\}$, denote by $\rho_n$ the ribbon consisting of the first $n$ triangles of $\rho$ and by $\rho_{>n} = \rho \setminus \rho_n$ the half-infinite ribbon obtained from $\rho$ by omitting the first $n$ triangles. Then a standard argument using Proposition~\ref{prop:properties of simple ribbon operators}\ref{ribbon prop:concatenation} shows the following limiting maps are well defined.
\begin{definition} \label{def:irreducible amplimorphism}
	For any half-infinite ribbon $\rho$ and any $n$-dimensional unitary representation $D$ of $\caD(G)$, define a linear map $\mu_{\rho}^{D} : \caA \rightarrow M_n(\caA)$ by
	\begin{equation}
		\mu_{\rho}^{D}(O) := \lim_{n \uparrow \infty} \, \mu_{\rho_n}^{D}(O).
	\end{equation}
\end{definition}

We have:
\begin{lemma}[Lemma 5.2 of \cite{Naaijkens2015}] \label{lem:ampli properties}
	The map $\mu_{\rho}^{D}:\caA \rightarrow M_n(\caA)$ is a unital *-homomorphism. \ie it is an amplimorphism of $\caA$ of degree $n$. Moreover, if the support of $O \in \cstar$ is disjoint from the support of $\rho$ then $\mu_{\rho}^{D}(O) = O \otimes \I_{n}$. For any $O \in \caA^{\loc}$ we have $\mu_{\rho}^{D}(O) = \mu_{\rho_n}^{D}(O)$ for all $n$ large enough. 
\end{lemma}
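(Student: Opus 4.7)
The plan is to reduce everything to one concrete observation: if $\rho'$ is a finite ribbon whose edge-set is disjoint from the support of a local observable $O \in \caA^{\loc}$, then each scalar ribbon operator $F_{\rho'}^{g,h}$ commutes with $O$, so every matrix entry of $\bF_{\rho'}^D$ commutes with $O$. Combined with unitarity of $\bF_{\rho'}^D$ (Proposition~\ref{prop:properties of simple ribbon operators}\ref{ribbon prop:unitarity}), this gives
\begin{equation*}
    \mu_{\rho'}^D(O) \;=\; \bF_{\rho'}^D (O \otimes \I_n) (\bF_{\rho'}^D)^* \;=\; (O \otimes \I_n)\, \bF_{\rho'}^D (\bF_{\rho'}^D)^* \;=\; O \otimes \I_n.
\end{equation*}

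First I would prove the stabilization claim. Given $O \in \caA^{\loc}$, pick $N$ large enough that the tail $\rho_{>N}$ is disjoint from the support of $O$; this is possible since only finitely many triangles of $\rho$ can meet a finite edge set. For $n \geq N$, write $\rho_n = \rho_N \cdot \rho_n'$ with $\rho_n' := \rho_n \setminus \rho_N$ disjoint from the support of $O$. The concatenation identity (Proposition~\ref{prop:properties of simple ribbon operators}\ref{ribbon prop:concatenation}) factors $\bF_{\rho_n}^D = \bF_{\rho_N}^D \cdot \bF_{\rho_n'}^D$. Commuting $\bF_{\rho_n'}^D$ past $O \otimes \I_n$ and cancelling $\bF_{\rho_n'}^D (\bF_{\rho_n'}^D)^* = \I_n$ yields $\mu_{\rho_n}^D(O) = \mu_{\rho_N}^D(O)$. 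This proves the third claim and in particular shows that the limit exists on $\caA^{\loc}$.

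Next I would extend by continuity to all of $\caA$. Each $\mu_{\rho_n}^D$ is conjugation by a unitary, hence a unital $*$-homomorphism that is contractive. A standard $\varepsilon/3$-argument, approximating any $O \in \caA$ by $O_\varepsilon \in \caA^{\loc}$ and using $\|\mu_{\rho_n}^D(O) - \mu_{\rho_n}^D(O_\varepsilon)\| \leq \|O - O_\varepsilon\|$ uniformly in $n$, shows that $\{\mu_{\rho_n}^D(O)\}$ is Cauchy in $M_n(\caA)$. The norm-limit $\mu_\rho^D(O)$ thus exists, and since multiplication, involution and addition are norm-continuous, $\mu_\rho^D$ inherits the unital $*$-homomorphism property from its finite approximants. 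The disjoint-support claim is then immediate: if the support of $O$ does not meet $\rho$, the key observation applied to each $\rho_n$ already gives $\mu_{\rho_n}^D(O) = O \otimes \I_n$ at every finite stage, and this persists in the limit.

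The main obstacle is really bookkeeping about supports: one needs that $F_\rho^{g,h}$ is supported exactly on the edges traversed by $\rho$ (a standard fact about ribbon operators, recorded in the appendix) and that distinct triangles of a half-infinite ribbon contribute disjoint edges (which is built into the definition recalled just before the lemma). With these in hand, every step above is mechanical, and no input beyond Proposition~\ref{prop:properties of simple ribbon operators} is required.
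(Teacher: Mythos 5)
Your proof is correct and spells out exactly the "standard argument using the concatenation property" that the paper alludes to before stating this lemma (which the paper itself only cites from Naaijkens (2015) without proof). The stabilization step via $\bF_{\rho_n}^D = \bF_{\rho_N}^D \cdot \bF_{\rho_n'}^D$ and the observation that ribbon operators on the tail commute with a local $O$, followed by the norm-density argument, is the intended route and is carried out correctly.
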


For each site in the model, it is possible to define an action $\gamma : \caD(G) \to \operatorname{Aut}(\alg{A})$ of the quantum double Hopf algebra.
The amplimorphisms constructed here transform covariantly with respect to this action.
These transformation properties (and of the ribbon multiplets themselves under this action) are essentially what connects these amplimorphisms to representations of $\caD(G)$.
For our purposes it is not necessary to spell out the details, and we refer the interested reader to~\cite{HamdanThesis}.

\subsubsection{Direct sum and tensor product}
The direct sum and tensor product of amplimorphisms of $\cstar$ are defined in the same way as amplimorphisms of $\caB$. We have for all $O \in \cstar$,
\begin{equation}
	(\mu_1 \times \mu_2)^{u_1 u_2, v_1 v_2}(O) = \mu_1^{u_1 v_1} \big(  \mu_2^{u_2 v_2}(O)   \big),
\end{equation}
and the direct sum of $\mu_1 : \cstar \rightarrow M_m(\cstar)$ and $\mu_2 : \cstar \rightarrow M_n(\cstar)$ is the amplimorphism $\mu_1 \oplus \mu_2 : \cstar \rightarrow M_{m + n}(\cstar)$ that maps $O \in \cstar$ to the block diagonal matrix with blocks $\mu_1(O)$ and $\mu_2(O)$.

\begin{lemma} \label{lem:sum and product of ribbon amplimorphisms}
	If $\rho$ is a finite or half-infinite ribbon then
	\begin{equation}
		\mu_{\rho}^{D_1} \oplus \mu_{\rho}^{D_2} = \mu_{\rho}^{D_1 \oplus D_2}, \quad \mu_{\rho}^{D_1} \times \mu_{\rho}^{D_2} = \mu_{\rho}^{D_1 \times D_2}.
	\end{equation}
\end{lemma}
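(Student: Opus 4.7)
The plan is to prove both identities first for \emph{finite} ribbons as a direct calculation from Proposition~\ref{prop:properties of simple ribbon operators}\ref{ribbon prop:sum and product} (which gives the analogous statements at the level of ribbon multiplets), and then extend to half-infinite ribbons by a standard density argument. I do not anticipate any real obstacle: everything is routine matrix bookkeeping.

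For a finite ribbon $\rho$, $\mu_\rho^D$ is defined by conjugation with $\bF_\rho^D$, so the two identities reduce to statements about conjugating by $\bF_\rho^{D_1 \oplus D_2}$ and $\bF_\rho^{D_1 \times D_2}$. The direct sum identity is immediate: $\bF_\rho^{D_1 \oplus D_2} = \bF_\rho^{D_1} \oplus \bF_\rho^{D_2}$ is block diagonal in $M_{n_1 + n_2}(\caA)$, so block multiplication gives $\mu_\rho^{D_1 \oplus D_2}(O) = \mu_\rho^{D_1}(O) \oplus \mu_\rho^{D_2}(O)$ directly. For the tensor product, using $\bF_\rho^{D_1 \times D_2} = \bF_\rho^{D_1} \otimes \bF_\rho^{D_2}$, whose components are the $\caA$-products $(\bF_\rho^{D_1})^{u_1 k_1}(\bF_\rho^{D_2})^{u_2 k_2}$, I would unpack both $\mu_\rho^{D_1 \times D_2}(O)^{(u_1 u_2),(v_1 v_2)}$ and $\mu_\rho^{D_1, u_1 v_1}\bigl(\mu_\rho^{D_2, u_2 v_2}(O)\bigr)$ into explicit sums over $k_1, k_2$, and observe that both reduce to
\[
    \sum_{k_1, k_2} (\bF_\rho^{D_1})^{u_1 k_1}(\bF_\rho^{D_2})^{u_2 k_2}\, O\, ((\bF_\rho^{D_2})^{v_2 k_2})^*((\bF_\rho^{D_1})^{v_1 k_1})^*.
\]

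To handle a half-infinite ribbon $\rho = \{\tau_n\}$, I would use Lemma~\ref{lem:ampli properties}: for any $O \in \caA^{\loc}$ one has $\mu_\rho^D(O) = \mu_{\rho_n}^D(O)$ as soon as $n$ is large enough that $\rho_{>n}$ is disjoint from the support of $O$. Crucially the cutoff depends only on $O$ and $\rho$, not on $D$, so a single $n$ works simultaneously for $D_1$, $D_2$, $D_1 \oplus D_2$, and $D_1 \times D_2$. Applying the finite-ribbon case to $\rho_n$ then gives both identities on all of $\caA^{\loc}$. Since each $\mu_\rho^D$ is a $*$-homomorphism and hence norm continuous, and $\caA^{\loc}$ is dense in $\caA$, both identities extend to all of $\caA$. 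The only point requiring any care is keeping track of the product ordering in the tensor-product computation, but this is dictated by the form of Proposition~\ref{prop:properties of simple ribbon operators}\ref{ribbon prop:sum and product}; apart from that there is nothing subtle here.
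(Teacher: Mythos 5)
Your proof is correct and follows essentially the same route as the paper: direct sum via block-diagonal conjugation, tensor product via the explicit componentwise calculation using $\bF_\rho^{D_1 \times D_2} = \bF_\rho^{D_1} \otimes \bF_\rho^{D_2}$, and the half-infinite case by reducing to a finite truncation on local operators and extending by density. Your observation that the truncation index in Lemma~\ref{lem:ampli properties} is independent of $D$ makes the final step slightly more explicit than the paper's terse ``take the limit,'' but it is the same argument.
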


\begin{proof}
	First consider the case where  $\rho$ is a finite ribbon.
    For ease of notation we omit the subscripts $\rho$ in the following.
    For any $O \in \caA$ we have
	\begin{align*}
		( \mu^{D_1} \oplus \mu^{D_2} )(O) &= \mu^{D_1}(O) \oplus \mu^{D_2}(O) = \bF^{D_1} (O \otimes \I_{n_1}) (\bF^{D_1})^* \oplus \bF^{D_2} (O \otimes \I_{n_2}) ( \bF^{D_2} )^* \\
						  &= \big( \bF^{D_1} \oplus \bF^{D_2} \big) \, (O \otimes \I_{n_1 + n_2} ) \, \big( \bF^{D_1} \oplus \bF^{D_2} \big)^* = \mu^{D_1 \oplus D_2}(O),
	\end{align*}
	where the last step uses item~\ref{ribbon prop:sum and product} of Proposition \ref{prop:properties of simple ribbon operators}.

	For the product, we compute componentwise
	\begin{align*}
		(\mu^{D_1} \times \mu^{D_2})(O)^{u_1 u_2; v_1 v_2} &= \mu^{D_1; u_1 v_1} \big(  \mu^{D_2; u_2 v_2}(O) \big) = \sum_{w_2} \, \mu^{D_1 ; u_1 v_1} \left( \bF^{D_2 ; u_2 w_2} \, O \, (\bF^{D_2 ; v_2 w_2})^* \right) \\
								   &= \sum_{w_1, w_2} \, \bF^{D_1; u_1 w_1} \, \bF^{D_2; u_2 w_2} \, O \, (\bF^{D_2; v_2 w_2})^* \, (\bF^{D_1;v_1 w_1})^* \\
								   &= \sum_{w_1, w_2} \, ( \bF^{D_1} \times \bF^{D_2} )^{u_1 u_2; w_1 w_2} \, O \, (( \bF^{D_1} \times \bF^{D_2} )^*)^{w_1 w_2 ; v_1 v_2} \\
								   &= \big( \bF^{D_1 \times D_2} \, (O \otimes \I_{n_1 n_2}) \, (\bF^{D_1 \times D_2})^* \big)^{u_1 u_2; v_1 v_2} = \mu^{D_1 \times D_2}(O)^{u_1 u_2; v_1 v_2}
	\end{align*}
	where the next to last step again uses item~\ref{ribbon prop:sum and product} of Proposition \ref{prop:properties of simple ribbon operators}.

    If $\rho$ is half-infinite, then the claim follows from the finite case by taking the limit of $\mu_{\rho_n}^D$.
\end{proof}

\subsubsection{Transportability}

We would like to extend the $\mu_{\rho}^{D}$ to amplimorphisms of the allowed algebra $\caB$. To this end, we must first establish their transportability.

We begin with a basic lemma which shows in particular that if $\rho$ and $\rho'$ coincide eventually, then $\mu_{\rho}^{D}$ and $\mu_{\rho'}^{D}$ are unitarily equivalent. Recall that if $\rho$ is a half-infinite ribbon, $\rho_n$ denotes the finite ribbon consisting of the first $n$ triangles of $\rho$, and $\rho_{> n}$ denotes the half-infinite ribbon obtained from $\rho$ by removing its first $n$ triangles. In particular, $\rho = \rho_n \rho_{>n}$.

\begin{lemma} \label{lem:ribbon shortening intertwiner}
	Let $\rho$ be a half-infinite positive ribbon and let $D$ be an $n$-dimensional unitary representation of $\caD(G)$. Then
	\begin{equation}
		\mu_{\rho}^{D} = \Ad[ \bF_{\rho_n}^{D} ] \circ \mu_{\rho_{>n}}^{D}
	\end{equation}
	for any $n \in \N$.
\end{lemma}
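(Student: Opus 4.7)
The proof is a direct application of the concatenation identity from Proposition \ref{prop:properties of simple ribbon operators}\ref{ribbon prop:concatenation}, combined with passing to the limit in the definition of $\mu_\rho^D$ as a half-infinite ribbon amplimorphism.

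The plan is as follows. For any $m > n$, decompose the finite ribbon $\rho_m = \rho_n \cdot (\rho_{>n})_{m-n}$, where $(\rho_{>n})_{m-n}$ denotes the first $m-n$ triangles of the half-infinite ribbon $\rho_{>n}$. By item \ref{ribbon prop:concatenation} of Proposition \ref{prop:properties of simple ribbon operators}, this gives the factorization
\begin{equation*}
    \bF_{\rho_m}^D = \bF_{\rho_n}^D \cdot \bF_{(\rho_{>n})_{m-n}}^D.
\end{equation*}
For any $O \in \caA$, substituting this into the definition of $\mu_{\rho_m}^D(O) = \bF_{\rho_m}^D (O \otimes \I_n) (\bF_{\rho_m}^D)^*$ yields
\begin{equation*}
    \mu_{\rho_m}^D(O) = \bF_{\rho_n}^D \cdot \mu_{(\rho_{>n})_{m-n}}^D(O) \cdot (\bF_{\rho_n}^D)^*.
\end{equation*}

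Next, I would take the limit $m \to \infty$ on both sides. The left-hand side converges to $\mu_\rho^D(O)$ by Definition~\ref{def:irreducible amplimorphism}. For the right-hand side, $\bF_{\rho_n}^D$ and $(\bF_{\rho_n}^D)^*$ are fixed elements of $M_n(\caA)$, and Lemma~\ref{lem:ampli properties} (applied to $\rho_{>n}$) tells us $\mu_{(\rho_{>n})_{m-n}}^D(O) \to \mu_{\rho_{>n}}^D(O)$ in norm. Since conjugation by a bounded element is norm-continuous, the right-hand side converges to $\bF_{\rho_n}^D \cdot \mu_{\rho_{>n}}^D(O) \cdot (\bF_{\rho_n}^D)^* = \Ad[\bF_{\rho_n}^D] \circ \mu_{\rho_{>n}}^D(O)$. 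Since $O \in \caA$ was arbitrary (and since Lemma~\ref{lem:ampli properties} guarantees the limits define bona fide amplimorphisms), the claimed identity of maps follows.

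There is no real obstacle here; the content of the lemma is entirely the concatenation property \ref{ribbon prop:concatenation}, and the only mild point is to verify that reindexing $(\rho_{>n})_{m-n}$ in place of $\rho_m$ in the limit is legitimate, which is immediate since both truncations exhaust the same half-infinite ribbon $\rho_{>n}$ as $m \to \infty$.
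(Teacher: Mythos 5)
Your proof is correct and matches the route the paper intends (the paper's own proof is just a one-line pointer to the definitions, Lemma~\ref{lem:ampli properties}, and Proposition~\ref{prop:properties of simple ribbon operators}\ref{ribbon prop:concatenation}). One small imprecision: the convergence $\mu_{(\rho_{>n})_{m-n}}^D(O) \to \mu_{\rho_{>n}}^D(O)$ follows from Definition~\ref{def:irreducible amplimorphism} (the definition of the half-infinite amplimorphism as a limit) rather than from Lemma~\ref{lem:ampli properties}; otherwise the argument is exactly right.
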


\begin{proof}
	This follows immediately from the definitions, Lemma \ref{lem:ampli properties}, and Proposition \ref{prop:properties of simple ribbon operators}.
\end{proof}

Since the $\bF_{\rho_n}^D$ are unitary operators, this establishes transportability over a finite distance.
To construct more general intertwiners, we need to use a limiting procedure.
\begin{definition} \label{def:bridge}
    Let $\rho$ and $\rho'$ be two half-infinite ribbons. A sequence of finite ribbons $\{ \xi_n \}_{n \in \N}$ is said to be a \emph{bridge} from $\rho$ to $\rho'$ if for each $n$ the concatenations $\sigma_n = \rho_n \xi_n \bar \rho'_n$ are finite ribbons and the bridges $\xi_n$ are eventually supported outside any ball. We call $\{ \sigma_n \}$ the intertwining sequence of the bridge $\{ \xi_n \}.$

    We say a half-infinite ribbon $\rho$ is `good' if it is supported in a cone $\Lambda$ and for any other cone $\Lambda'$ that is disjoint from $\Lambda$, there is a half-infinite ribbon $\rho'$ and a bridge from $\rho$ to $\rho'$. Note that any cone contains plenty of good half-infinite ribbons, both positive and negative ones.
\end{definition}

\begin{lemma} \label{lem:construction of transporters}
	Let $\rho$ be a half-infinite positive ribbon and let $\rho'$ be half-infinite negative ribbon both supported in a cone $\Lambda$ and with initial sites $s, s'$ respectively. Suppose there is a bridge from $\rho$ to $\rho'$ with intertwining sequence $\{ \sigma_m = \rho_m \xi_m \overline \rho'_m \}$ all supported in $\Lambda$. Let $D$ be an $n$-dimensional unitary representation of $\caD(G)$. Then there is a unitary $U \in M_n(\caR(\Lambda))$ such that
	\begin{equation} \label{eq:transporter property}
		(\pi_0 \otimes \id_n) \circ \mu_{\rho'}^{D} = \Ad[ U ] \circ (\pi_0 \otimes \id_n) \circ \mu_{\rho}^{D}.
	\end{equation}
\end{lemma}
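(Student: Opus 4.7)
The plan is to construct $U$ as a weak-$*$ accumulation point of the sequence of finite-ribbon unitaries
\[
U_m := (\pi_0 \otimes \id_n)(\bF_{\sigma_m}^D).
\]
Each $\sigma_m = \rho_m \xi_m \bar\rho'_m$ is a finite ribbon with distinct endpoints $s, s'$ (the initial sites of $\rho$ and $\rho'$) and is supported in $\Lambda$ by hypothesis. Proposition~\ref{prop:properties of simple ribbon operators}\ref{ribbon prop:unitarity} then makes $U_m$ unitary, and $U_m \in M_n(\caR(\Lambda))$. Weak-$*$ compactness of the closed unit ball of the von Neumann algebra $M_n(\caR(\Lambda))$ supplies a subnet with limit $U$ in that ball.

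The heart of the argument is a finite-ribbon intertwining valid at each sufficiently late stage $m$: for any $O \in \caA^{\loc}$ and $m$ large enough, I claim
\[
(\pi_0 \otimes \id_n)(\mu_{\rho'}^D(O)) = U_m \, (\pi_0 \otimes \id_n)(\mu_\rho^D(O)) \, U_m^*.
\]
For such $m$, Lemma~\ref{lem:ampli properties} reduces $\mu_\rho^D(O)$ to $\mu_{\rho_m}^D(O)$ and $\mu_{\rho'}^D(O)$ to $\mu_{\rho'_m}^D(O)$. The decomposition $\bF_{\sigma_m}^D = \bF_{\rho_m}^D \bF_{\xi_m}^D (\bF_{\rho'_m}^D)^*$, obtained from items~\ref{ribbon prop:adjoint} and~\ref{ribbon prop:concatenation} of Proposition~\ref{prop:properties of simple ribbon operators}, together with the eventual support of $\xi_m$ outside any bounded region (Definition~\ref{def:bridge}), allows the bridge factors $\bF_{\xi_m}^D$ to be commuted past $\bF_{\rho_m}^D$, $\bF_{\rho'_m}^D$, and $O \otimes \1_n$ and then cancelled by unitarity. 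The remaining identity is a statement comparing two finite ribbon operators with distinct starting sites; it holds in $\pi_0$ thanks to the ground-state invariance properties of the ribbon algebra collected in Appendix~\ref{app:ribbon operators}, in particular that the discrepancy between $\bF_{\rho_m}^D$ and $\bF_{\rho'_m}^D$ in the region away from $\supp O$ can be absorbed into star and plaquette operators annihilating the ground state. Passing to the weak-$*$ limit along the chosen subnet and extending by norm density then yields the desired intertwining of $\pi_0 \circ \mu_\rho^D$ and $\pi_0 \circ \mu_{\rho'}^D$ through $U$ on all of $\caA$.

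The main obstacle is showing that $U$ is actually \emph{unitary}, since weak-$*$ limits of unitaries are in general only contractions. My plan is to leverage a stronger form of the computation above: on vectors of the form $(\pi_0 \otimes \id_n)(\mu_\rho^D(A))(\Omega \oplus \cdots \oplus \Omega)$ with $A \in \caA^{\loc}$, the sequence $U_m$ should in fact converge in the strong operator topology, because the bridge contribution $\bF_{\xi_m}^D$ becomes effectively trivial when paired with its conjugate from $U_m^*$ on such vectors once $\xi_m$ has moved off to infinity. Strong convergence on such a total set, combined with the uniform identity $U_m^* U_m = U_m U_m^* = \1_n$, forces $U^* U = U U^* = \1_n$. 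Should a gap persist, the proper infiniteness of $M_n(\caR(\Lambda))$ (Lemma~\ref{lem:projectors are infinite}) allows me to replace a limiting partial isometry by a full unitary in $M_n(\caR(\Lambda))$ implementing the same adjoint action, completing the construction.
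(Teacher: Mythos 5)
Your high-level plan---take $U$ to be a limit of the ribbon unitaries $(\pi_0\otimes\id_n)(\bF_{\sigma_m}^D)$ (or rather $\bF_{\bar\sigma_m}^D$; note that $\Ad[\bF_{\sigma_m}^D]$ would intertwine in the wrong direction), prove a finite-$m$ intertwining identity, and pass to the limit---is the same skeleton as the paper's. But the step at the heart of the argument is not carried out correctly, and the mechanism you propose would not work.

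You want to establish the finite-$m$ identity by factoring $\bF_{\sigma_m}^D = \bF_{\rho_m}^D\,\bF_{\xi_m}^D\,(\bF_{\rho'_m}^D)^*$ and then ``commuting the bridge factors past $\bF_{\rho_m}^D$, $\bF_{\rho'_m}^D$, and $O\otimes\1_n$.'' The bridge $\xi_m$ shares a site with $\rho_m$ and with $\bar\rho'_m$ by construction, so the corresponding ribbon multiplets do \emph{not} commute; see the braid relation \eqref{eq:braiding positive ribbons}, which is precisely the obstruction. Only commuting past $O\otimes\1_n$ is legitimate (for $m$ large enough that $\xi_m$ is off the support of $O$). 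Furthermore, the ``remaining identity'' you then want to establish using star and plaquette operators annihilating the ground state cannot deliver what is needed: the intertwining $\Ad[U_m]\circ\mu_\rho^D(O)=\mu_{\rho'}^D(O)$ must hold as an \emph{operator} identity on $\caH_0$, not merely as an expectation in $\Omega$. The paper achieves this by introducing the auxiliary half-infinite ribbon $\rho^{(m)} = \rho'_m\,\overline{\xi_m}\,\rho_{>m}$ and proving the exact, purely algebraic identity $\mu_{\rho^{(m)}}^D = \Ad[\bF_{\bar\sigma_m}^D]\circ\mu_\rho^D$ (Eq.~\eqref{eq:finite intertwining}): here the only cancellation is of the adjacent pair $(\bF_{\rho_m}^D)^*\bF_{\rho_m}^D=\1$, no commutation of operators with shared endpoints is invoked. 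Locality (Lemma~\ref{lem:ampli properties}) then gives $\mu_{\rho^{(m)}}^D(O)=\mu_{\rho'}^D(O)$ for $m$ large. The auxiliary ribbon is the missing idea in your proposal.

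On unitarity: starting from a weak-$*$ accumulation point is, as you say yourself, a dead end, and the fallback of replacing a partial isometry by a unitary via proper infiniteness of $M_n(\caR(\Lambda))$ does not work, because the replacement would no longer implement the same adjoint action. Your better suggestion---strong convergence on a dense set of vectors of the form $A\Omega$ with $A$ local, followed by the abstract Lemma~\ref{lem:abstract unitarity}---is indeed what the paper does (Lemma~\ref{lem:unitary transporters}). However, the mechanism there is not that ``the bridge contribution becomes trivial when paired with its conjugate from $U_m^*$''; it is that the action of a ribbon operator on the ground state depends only on the ribbon's endpoints, so $F_{\widehat\xi_n}^{\cdot,\cdot}\Omega$ is literally independent of $n$ once the bridges have drifted past the support of $A$. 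You should also prove $U_m^* A\Omega$ converges separately (adjoints of ribbon operators are again ribbon operators on the same ribbon), since the adjoint is not continuous in the strong operator topology.
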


\begin{proof}
	Consider the family of half-infinite ribbons $\rho^{(m)} = \rho'_m \overline{\xi_m} \rho_{>m}$, see Figure \ref{fig:bridge}. We first show that
	\begin{equation} \label{eq:finite intertwining}
		\mu_{\rho^{(m)}}^{D} = \Ad[ \bF_{\bar \sigma_m}^{D}  ] \circ \mu_{\rho}^{D}.
	\end{equation}
	Indeed, by Proposition \ref{prop:properties of simple ribbon operators} we have $\bF_{\bar \sigma_m}^{D} = \big( \bF_{\rho_m}^{D} \cdot \bF_{\xi_m}^{D} \cdot \bF_{\overline \rho'_m}^{D}  \big)^* = \bF_{\rho'_m}^{D} \cdot \bF_{\overline{\xi_m}}^{D} \cdot ( \bF_{\rho_m}^{D} )^*$ so for any $O \in \caA^{\loc}$ we have
	\begin{align*}
		\big( \Ad[ \bF_{\bar \sigma_m}^{D} ] \circ \mu_{\rho}^{D} \big)(O) &= \lim_{N \uparrow \infty} \, \Ad \left[ \bF_{\rho'_m}^{D} \cdot \bF_{\overline{\xi_m}}^{D} \cdot (\bF_{\rho_m}^{D})^* \cdot \bF_{\rho_m}^{D} \cdot \bF_{(\rho_{>m})_N}^{D} \right] (O \otimes \I_n). \\
    \intertext{Now we use unitarity to get}
         &= \lim_{N \uparrow \infty} \, \Ad \left[ \bF_{\rho'_m}^{D} \cdot \bF_{\overline{\xi_m}}^{D} \cdot \bF_{(\rho_{>m})_N}^{D} \right] (O \otimes \I_n) = \mu_{\rho^{(m)}}(O)
	\end{align*}
	as required. 
	
	\begin{figure}[!t]
	\centering
	\includegraphics[width = 0.4\textwidth]{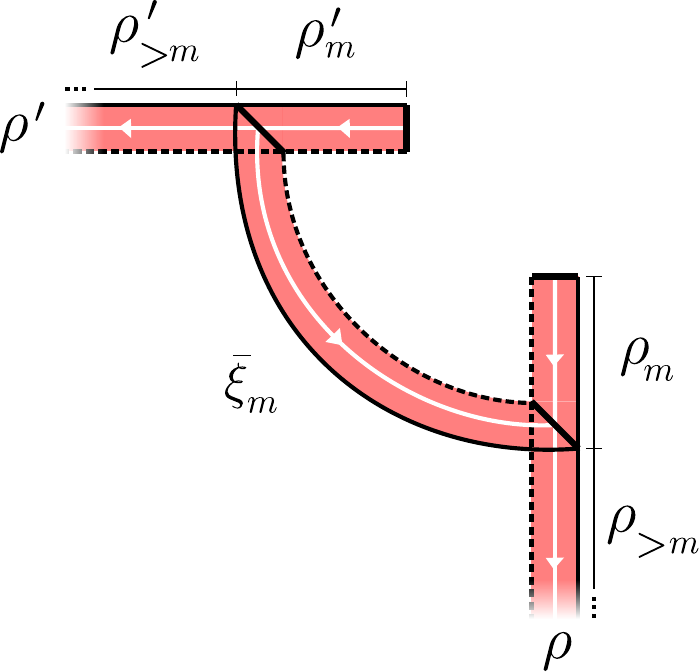}
	\caption{The finite ribbon $\overline{\xi_m}$ is a bridge from ribbon $\rho_m'$ to $\rho_m$.} 
    \label{fig:bridge}
	\end{figure}

	By Lemma \ref{lem:unitary transporters} the components of the image of  $\bF_{\bar \sigma_n}^{D}$ under $\pi_0 \otimes \id_n$ converge in the strong-* topology, and therefore so does the full image of $\bF_{\bar \sigma_n}^{D}$. Denote the limit by $U$. Since the $\bF_{\bar \sigma_n}^{D}$ are all unitary (Proposition \ref{prop:properties of simple ribbon operators}) it follows from Lemma \ref{lem:abstract unitarity} that $U$ is unitary. Since all the $\bF_{\bar \sigma_n}^{D}$ are supported in the cone $\Lambda$, it follows that $U \in M_n(\caR(\Lambda))$.
    
    Let $O \in \caA^{\loc}$. Then
    \begin{align*}
        U \cdot (\pi_0 \otimes \id_n) \big( \mu^D_{\rho}(O) \big) &= \lim_{n \uparrow \infty} \, (\pi_0 \otimes \id_n) \big( \bF_{\bar \sigma_n}^{D} \cdot \mu^D_{\rho}(O) \big) \\
        &= \lim_{n \uparrow \infty} \, (\pi_0 \otimes \id_n) \big(  \mu^D_{\rho^{(n)}}(O) \cdot \bF_{\bar \sigma_n}^{D}  \big) \\
        &= (\pi_0 \otimes \id_n) \big( \mu^D_{\rho'}(O) \big) \cdot U
    \end{align*}
    where we used componentwise continuity of multiplication in the strong operator topology in the first equality, Eq.~\eqref{eq:finite intertwining} to obtain the second equality, and the fact that $\mu^D_{\rho^{(n)}}(O) = \mu^D_{\rho'}(O)$ for $n$ large enough and again componentwise continuity of multiplication to obtain the last equality. Since $\caA^{\loc}$ is dense in $\caA$, we conclude that Eq. \eqref{eq:transporter property} holds, which completes the proof.
\end{proof}

\begin{remark}
This answers a question that was left open in~\cite{Naaijkens2015}, namely the construction of unitary charge transporters that transport charges between two cones, and not just over a finite distance.
Note that Lemma~\ref{lem:construction of transporters} implies that the representation $(\pi_0 \otimes \id_n) \circ \mu_{\rho}^D$ satisfies a variant of the superselection criterion, where we have $(\pi_0 \otimes \id_n) \circ \mu_{\rho}^D \upharpoonright \caA_{\Lambda^c} \cong n \cdot \pi_0 \upharpoonright \caA_{\Lambda^c}$.
That is, instead of unitary equivalence as in~\eqref{eq:sselect}, we have \emph{quasi-}equivalence.
As we shall see shortly, in the case at hand the two notions can be seen to coincide.
\end{remark}

\subsection{Amplimorphisms of the allowed algebra from ribbon multiplets}

The transportability of the $\mu_{\rho}^D$ established above in Lemma \ref{lem:construction of transporters} allows us to extend these amplimorphsisms to localized and transportable amplimorphisms of the allowed algebra $\caB$.

\begin{proposition} \label{prop:Amp is nonempty}
	Let $\rho$ be a good half-infinite positive ribbon that is contained in an allowed cone $\Lambda$, then there exists a unique amplimorphism $\chi_{\rho}^{D} : \caB \rightarrow M_n(\caB)$ whose restriction to $\caR(\Lambda)$ is weakly continuous, and satisfies 
	\begin{equation}
		\chi_{\rho}^{D} \circ \pi_0(O) = (\pi_0 \otimes \id_n) \circ \mu_{\rho}^{D}(O).
	\end{equation}
    for all $O \in \cstar$. Moreover, $\chi_{\rho}^{D}$ is localized in $\Lambda$ and is transportable. It is therefore an object of $\Amp$.
\end{proposition}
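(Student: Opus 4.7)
The plan is to define $\chi_\rho^D$ first on $\pi_0(\caA)$ via the required equation (well-defined since $\pi_0$ is faithful), then normally extend to each allowed cone algebra $\caR(\tilde\Lambda)$ with $\tilde\Lambda \supset \Lambda$, and finally take the norm closure to obtain the full $\ast$-homomorphism on $\caB$. The goodness of $\rho$ and Lemma \ref{lem:construction of transporters} together let us realize each such cone-algebra restriction as conjugation by an explicit unitary in $M_n(\caB)$, making normality manifest and the extension canonical.

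\textbf{Extension to $\caR(\tilde\Lambda)$.} Fix an allowed $\tilde\Lambda \supset \Lambda$. Using goodness of $\rho$ together with directedness of allowed cones, choose a half-infinite ribbon $\rho'$ inside a cone disjoint from $\tilde\Lambda$ together with a bridge from $\rho$ to $\rho'$, all supported in a common allowed cone $\widehat\Lambda \supset \tilde\Lambda$. Applying Lemma \ref{lem:construction of transporters} inside $\widehat\Lambda$ produces a unitary $U \in M_n(\caR(\widehat\Lambda)) \subset M_n(\caB)$ with $(\pi_0 \otimes \id)\circ \mu_{\rho'}^D = \Ad[U] \circ (\pi_0 \otimes \id) \circ \mu_\rho^D$ on $\caA$. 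Because $\rho'$ is disjoint from $\tilde\Lambda$, Lemma \ref{lem:ampli properties} gives $\mu_{\rho'}^D(O) = O \otimes \I_n$ for $O \in \caA_{\tilde\Lambda}^{\loc}$, hence
\[
(\pi_0 \otimes \id)\bigl(\mu_\rho^D(O)\bigr) \;=\; U^{*}(\pi_0(O) \otimes \I_n) U, \qquad O \in \caA_{\tilde\Lambda}^{\loc}.
\]
The right-hand side is weakly continuous in $\pi_0(O)$, and $\pi_0(\caA_{\tilde\Lambda}^{\loc})$ is weakly dense in $\caR(\tilde\Lambda)$ by Kaplansky density, so we define $\chi_\rho^D(x) := U^{*}(x \otimes \I_n) U$ for $x \in \caR(\tilde\Lambda)$. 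This is a normal $\ast$-homomorphism into $M_n(\caB)$, and any two normal maps agreeing on the weakly dense subalgebra $\pi_0(\caA_{\tilde\Lambda}^{\loc})$ must coincide on $\caR(\tilde\Lambda)$, so the extension is independent of the auxiliary choices $\rho', U, \widehat\Lambda$.

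\textbf{Globalization, localization, transportability, uniqueness.} The set of allowed cones $\tilde\Lambda \supset \Lambda$ is directed, and by directedness every allowed $\Lambda'$ is contained in some such $\tilde\Lambda$; hence $\bigcup_{\tilde\Lambda \supset \Lambda} \caR(\tilde\Lambda)$ is norm-dense in $\caB$. The uniqueness-of-normal-extension principle just used forces the cone-by-cone definitions to agree on overlaps (reduce to a common larger allowed cone), so they assemble into a $\ast$-homomorphism that extends by norm continuity to $\chi_\rho^D : \caB \to M_n(\caB)$. Localization in $\Lambda$ is then immediate: for $O \in \caA_{\Lambda^c}^{\loc}$, Lemma \ref{lem:ampli properties} gives $\mu_\rho^D(O) = O \otimes \I_n$, hence $\chi_\rho^D(\pi_0(O)) = \chi_\rho^D(\1)(\pi_0(O) \otimes \I_n)$, and this extends to $\pi_0(\caA_{\Lambda^c})$ by norm continuity. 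For transportability, given any allowed cone $\Lambda'$ choose a good ribbon $\rho'' \subset \Lambda'$; the same construction yields $\chi_{\rho''}^D$ localized in $\Lambda'$, and Lemma \ref{lem:construction of transporters} (applied in a common allowed cone containing $\Lambda$, $\Lambda'$, and a bridge from $\rho$ to $\rho''$) provides a unitary in $M_n(\caB)$ implementing the required equivalence in $\Amp$. Uniqueness of $\chi_\rho^D$ with the stated properties follows from the same normal-extension principle. The principal technical obstacle is precisely this consistency of unitary-implemented extensions across different cone algebras, which we have reduced to uniqueness of normal extensions from a weakly dense $\ast$-subalgebra—a standard fact once the unitary description is in hand.
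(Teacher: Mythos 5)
Your proof is correct and follows essentially the same route as the paper: extend $\mu_\rho^D$ cone by cone via conjugation by the transporter unitaries from Lemma \ref{lem:construction of transporters}, invoke uniqueness of normal extension from the weakly dense local subalgebra for consistency, and then deduce localization and transportability from the properties of $\mu_\rho^D$. The only cosmetic difference is that the paper fixes a \emph{forbidden} cone disjoint from $\Lambda$ as the target for transport and re-uses the forbidden-cone trick to verify localization, whereas you choose an arbitrary cone disjoint from the generic allowed cone $\tilde\Lambda$ and verify localization directly from Lemma \ref{lem:ampli properties} and norm continuity; both variants are valid.
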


\begin{proof}
    Recall that $\caB$ is a direct limit of cone algebras $\caR(\Lambda)$.
    Note that $\mu_\rho^D$ restricts to an amplimorphism $\cstar[\Lambda] \to M_n(\cstar[\Lambda])$.
    We show that we can extend this (on both sides) to $\caR(\Lambda)$.
    This construction is compatible with the direct structure on the set of allowed cones, and hence defines an amplimorphism of $\caB$.

    To see that we can extend $\mu_\rho^D$ (restricted to $\cstar[\Lambda]$) to $\chi_\rho^D : \caR( \Lambda) \rightarrow M_n(\caR(\Lambda))$,
    note first that for every $\Lambda$ we have the existence of a forbidden cone $\widehat \Lambda$ disjoint from $\Lambda$.
    Since $\rho$ is good and by Lemma \ref{lem:construction of transporters}, we have that $\mu_\rho^D \simeq \mu_{\hat \rho}^D$ where $\hat \rho$ is localized in $\widehat \Lambda$.
    Let $U$ be the unitary implementing this equivalence.
    By locality we have that for all $O \in \cstar[ \Lambda]$, it holds that $\mu_{\widehat \rho}^D(O) = O \otimes \I_n$.
    
    Define $\chi_\rho^D(O) := \Ad[U](O \otimes \I_n)$ for all $O \in \caR(\Lambda)$.
    By construction, it follows that $\chi_\rho^D(O) = \mu_\rho^D(O)$ for all $O \in \cstar[\Lambda]$.
    Let $O \in \caR(\Lambda)$. 
    Then there exist $\cstar[\Lambda] \ni O_\lambda \to O$ weakly since $\cstar[\Lambda]$ is weak-operator dense in $\caR(\Lambda)$.
    Hence we have
    \[
        \lim_\lambda \mu_\rho^D(O_\lambda) = \lim_\lambda \Ad[U](O_\lambda \otimes \I_n) = \Ad[U](O \otimes \I_n) = \chi_\rho^D(O),
    \]
    where all limits are in the weak operator topology and we used that  $\Ad[U]$ is weakly continuous.
    Hence, $\chi_\rho^D$ is uniquely determined by $\mu_\rho^D$.
    This action on $\caR(\Lambda)$ is independent of the choice of forbidden cone $\widehat \Lambda$, so the extensions to $\caR(\widetilde \Lambda)$ for different cones are consistent with each other. These actions therefore define a *-homomorphism $\chi_\rho^D$ on all of $\caB$.

    Now consider some $O \in \cstar[\Lambda^c]^{\loc}$. Then there is a forbidden cone $\widehat \Lambda$, disjoint from $\Lambda$ and such that $O \in \caA_{\widehat \Lambda^c}$. Let $\mu_{\hat \rho}^D$ and $U$ be as above.
    We have 
    \begin{align}
        \chi_\rho^D(O) = U(O \otimes \I_n) U^* = U \mu_{\hat \rho}^D (O) U^* = \mu_\rho^D(O) = O \otimes \I_n.
    \end{align}
    Since this holds for any $O \in \caA_{\Lambda^c}^{\loc}$, we find that $\chi_\rho^D$ is localized in $\Lambda$.

    Now consider an allowed cone $\widetilde \Lambda$. Using transportability of $\mu_\rho^D$ (Lemma~\ref{lem:construction of transporters}) we have that there exists some $\mu_{\widetilde{\rho}}^D \simeq \mu_\rho^D$ localized in $\widetilde \Lambda$. Uniquely extend $\mu_{\widetilde\rho}^D$ to $\chi_{\widetilde \rho}^D$ as above. Then any unitary intertwiner from $\mu_{\widetilde \rho}^D$ to $\mu_{\rho}^D$ is an equivalence between $\chi_\rho^D$ and $\chi^D_{\widetilde\rho}$, showing that $\chi_\rho^D$ is indeed transportable. 
\end{proof}

This proposition allows the following definition.
\begin{definition} \label{def:amplis from ribbon operators}
    Let $\rho$ be a good half-infinite ribbon and $D$ a unitary representation of $\caD(G)$. Then we denote by $\chi_{\rho}^D$ the unique amplimorphism of $\caB$ that satsifies the properties of Proposition~\ref{prop:Amp is nonempty}.
\end{definition}

\begin{lemma} \label{lem:sum and product of amplimorphism representations}
	For any good half-infinite ribbon $\rho$ supported in an allowed cone we have
	\begin{equation}
		\chi_{\rho}^{D_1} \oplus \chi_{\rho}^{D_2} = \chi_{\rho}^{D_1 \oplus D_2}, \quad \chi_{\rho}^{D_1} \times \chi_{\rho}^{D_2} = \chi_{\rho}^{D_1 \times D_2}. 
	\end{equation}
\end{lemma}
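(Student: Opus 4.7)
The plan is to reduce each identity to its counterpart for the amplimorphisms $\mu_\rho^D$ of $\cstar$ already established in Lemma \ref{lem:sum and product of ribbon amplimorphisms}, and then invoke the uniqueness part of Proposition \ref{prop:Amp is nonempty}. Fix an allowed cone $\Lambda$ containing $\rho$, so that each $\chi_\rho^{D_i}$ is an amplimorphism of $\caB$ localized in $\Lambda$ and the claimed identities are between amplimorphisms of $\caB$ of matching degree.

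For the direct sum, the amplimorphism $\chi_\rho^{D_1} \oplus \chi_\rho^{D_2}$ sends $O \in \caB$ to the block diagonal matrix with blocks $\chi_\rho^{D_i}(O)$. Its restriction to $\caR(\Lambda)$ is weakly continuous because each $\chi_\rho^{D_i}|_{\caR(\Lambda)}$ is. Moreover, for any $O \in \cstar$, a direct computation using the defining property $\chi_\rho^{D_i}(\pi_0(O)) = (\pi_0 \otimes \id)(\mu_\rho^{D_i}(O))$ from Proposition \ref{prop:Amp is nonempty}, together with the direct-sum identity of Lemma \ref{lem:sum and product of ribbon amplimorphisms}, yields
\begin{equation*}
(\chi_\rho^{D_1} \oplus \chi_\rho^{D_2})(\pi_0(O)) = (\pi_0 \otimes \id)(\mu_\rho^{D_1 \oplus D_2}(O)).
\end{equation*}
The uniqueness in Proposition \ref{prop:Amp is nonempty} then identifies $\chi_\rho^{D_1} \oplus \chi_\rho^{D_2}$ with $\chi_\rho^{D_1 \oplus D_2}$.

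For the tensor product, recall that $\chi_\rho^{D_1} \times \chi_\rho^{D_2} = (\chi_\rho^{D_1} \otimes \I_{n_2}) \circ \chi_\rho^{D_2}$, where the outer factor acts entrywise. By Lemma \ref{lem:localized amplimorphisms amplify locally}, $\chi_\rho^{D_2}$ maps $\caR(\Lambda)$ into $M_{n_2}(\caR(\Lambda))$, and the entrywise extension of $\chi_\rho^{D_1}|_{\caR(\Lambda)}$ is weakly continuous on $M_{n_2}(\caR(\Lambda))$. Hence $\chi_\rho^{D_1} \times \chi_\rho^{D_2}$ is weakly continuous on $\caR(\Lambda)$. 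A componentwise calculation mirroring the one in the proof of Lemma \ref{lem:sum and product of ribbon amplimorphisms}, using the intertwining property of the $\chi_\rho^{D_i}$ and the product formula for the $\mu_\rho^{D_i}$, shows that $\chi_\rho^{D_1} \times \chi_\rho^{D_2}$ agrees with $\chi_\rho^{D_1 \times D_2}$ on $\pi_0(\cstar)$, so uniqueness in Proposition \ref{prop:Amp is nonempty} again yields equality on $\caB$.

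The main obstacle is the weak continuity of the composite $\chi_\rho^{D_1} \times \chi_\rho^{D_2}$ on $\caR(\Lambda)$, which is the only nontrivial input beyond the finite-ribbon identities; it relies crucially on Lemma \ref{lem:localized amplimorphisms amplify locally} to keep the iterated action of $\chi_\rho^{D_1}$ inside the cone algebra where its weak continuity was originally established in Proposition \ref{prop:Amp is nonempty}.
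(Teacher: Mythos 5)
Your proof is correct and follows essentially the same route as the paper, which states that the claim ``follows immediately from Lemma \ref{lem:sum and product of ribbon amplimorphisms} and the uniqueness of the $\chi_{\rho}^D$ as extensions of the $\mu_{\rho}^D$.'' You are simply making explicit the two inputs the paper's uniqueness clause requires — weak continuity on $\caR(\Lambda)$ and agreement with $(\pi_0\otimes\id)\circ\mu_\rho^{D}$ on $\pi_0(\cstar)$ — and correctly noting that Lemma \ref{lem:localized amplimorphisms amplify locally} is what keeps the iterated map inside $M_{n_1n_2}(\caR(\Lambda))$ so that weak continuity of the composite can be checked; this is a fair reading of what the paper leaves implicit.
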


\begin{proof}
	Follows immediately from Lemma \ref{lem:sum and product of ribbon amplimorphisms} and the uniqueness of the $\chi_{\rho}^D$ as extensions of the $\mu_{\rho}^D$.
\end{proof}

\subsection{Braided monoidal subcategory of \texorpdfstring{$\Amp$}{} on a fixed ribbon} \label{subsec:subcategory of amplimorphisms on a fixed ribbon}

We will call a half-infinite ribbon $\rho$ allowed if it is supported in some allowed cone. Let $\rho$ be a positive good allowed half-infinite ribbon and let $\Amp_{\rho}$ be the full subcategory of $\Amp$ whose objects are the localized and transportable amplimorphsisms $\chi_{\rho}^D$ for arbitrary unitary representations $D$. Lemma \ref{lem:sum and product of amplimorphism representations} shows that this subcategory is closed under direct sums and tensor products, so $\Amp_{\rho}$ is a full monoidal subcategory of $\Amp$. Being closed under the tensor product, the subcategory $\Amp_{\rho}$ inherits the braiding of $\Amp$ defined in Section \ref{subsec:braiding}. Finally, it follows from Proposition \ref{prop:T to t} below that $\Amp_{\rho}$ has subobjects, so it is in fact a full braided $\rm C^*$-tensor subcategory of $\Amp$.



\section{Simple objects of \texorpdfstring{$\Amp$}{Amp}} \label{sec:simples of Amp}

In the previous section we constructed full subcategories $\Amp_{\rho}$ of $\Amp$ whose objects are constructed from unitary representations of $\caD(G)$. These subcategories will play a crucial role in establishing the equivalence of $\Amp_f$ and $\Rep_f \caD(G)$. 

In order to do this we must first establish that the amplimorphisms $\chi_{\rho}^D$ are finite, so that they belong to $\Amp_f$. Then we must show that $\chi_{\rho}^D$ is a simple object whenever $D$ is an irreducible representation. Conversely, we must show that any simple object of $\Amp$ is equivalent to an amplimorphism $\chi^D_{\rho}$ for some irreducible representation $D$. In this section we prove these facts by appealing to the classification of irreducible anyon sectors of Kitaev's quantum double models achieved in \cite{bols2023classificationanyonsectorskitaevs}, which we first review.

\subsection{Classification of irreducible anyon sectors}

\begin{definition} \label{def:anyon representation}
    A *-representation $\pi : \caA \rightarrow \caB(\caH)$ is said to satisfy the superselection criterion with respect to the representation $\pi_0$ if for any cone $\Lambda$ there is a unitary $U : \caH_0 \rightarrow \caH$ such that
    \begin{equation*}
        \pi(O) = U \pi_0(O) U^*
    \end{equation*}
    for all $O \in \caA_{\Lambda^c}$. If $\pi$ is moreover irreducible, then we call $\pi$ an \emph{anyon representation}.
\end{definition}

The following theorem follows directly from Theorem 2.4 and Proposition 5.19 of \cite{bols2023classificationanyonsectorskitaevs}.
\begin{theorem}[\cite{bols2023classificationanyonsectorskitaevs}] \label{thm:anyon classification}
    Let $\rho$ be a good half-infinite ribbon. The representations $\chi_{\rho}^D \circ \pi_0$ are anyon representations if and only if $D$ is irreducible. Two such anyon representations $\chi_{\rho}^{D_1} \circ \pi_0$ and $\chi_{\rho}^{D_2} \circ \pi_0$ are unitarily equivalent (disjoint) whenever the irreducible representations $D_1$ and $D_2$ are equivalent (disjoint).

    Moreover, any anyon representation $\pi$ is unitarily equivalent to $\chi_{\rho}^D \circ \pi_0$ for some irreducible representation $D$.
 \end{theorem}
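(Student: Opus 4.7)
The plan is to deduce this theorem by translating the classification result of \cite{bols2023classificationanyonsectorskitaevs} into the amplimorphism language developed in this paper, combined with the structural properties of $\chi_\rho^D$ already established in Sections \ref{sec:amplimorphism from ribbon operators} and \ref{sec:braided tensor structure}. The core observation is that $\chi_\rho^D \circ \pi_0$, viewed as a representation of $\caA$ on $\caH_0 \otimes \bbC^n$ (which is isomorphic to $\caH_0$), is essentially the representation constructed from the ribbon-operator multiplets analyzed in the cited work, packaged here into an amplimorphism.

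First I would verify that $\chi_\rho^D \circ \pi_0$ satisfies the superselection criterion. Fix a cone $\Lambda$. By Proposition \ref{prop:Amp is nonempty} the amplimorphism $\chi_\rho^D$ is localized in some allowed cone $\Lambda_\rho$ containing $\rho$, and is transportable. By transportability (Corollary \ref{cor:ampli is equivalent to unital ampli in the same cone} if necessary), choose an equivalent $\chi'$ localized in a cone $\Lambda'$ disjoint from $\Lambda$, with equivalence $U \in (\chi_\rho^D\mid\chi')$. Then on $\pi_0(\caA_\Lambda) \subset \pi_0(\caA_{\Lambda'^c})$ we have $\chi'(\pi_0(O)) = \chi'(\1)(\pi_0(O)\otimes\1_n)$, and conjugating by $U$ realizes the required unitary equivalence with an amplification of $\pi_0$ (which, since $\caH_0$ is infinite dimensional, is unitarily equivalent to $\pi_0$ itself via a chosen isometry such as the one provided by $\Iso_{\chi_\rho^D}$).

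Next I would address the irreducibility and equivalence statements by invoking the cited classification. For the \emph{if} direction, the explicit amplimorphisms $\chi_\rho^D$ for irreducible $D$ are built from exactly the ribbon-operator multiplets used in \cite[Thm.\ 2.4, Prop.\ 5.19]{bols2023classificationanyonsectorskitaevs} to construct the anyon representations labelled by irreducible $\caD(G)$-representations, so irreducibility of $\chi_\rho^D \circ \pi_0$ transfers directly. For the \emph{only if} direction, if $D \cong D_1 \oplus D_2$ nontrivially, then Lemma \ref{lem:sum and product of amplimorphism representations} gives $\chi_\rho^D \cong \chi_\rho^{D_1} \oplus \chi_\rho^{D_2}$, which manifestly produces a reducible representation. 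Unitary equivalence of $D_1$ and $D_2$ via $u$ yields the intertwiner $\1 \otimes u \in (\chi_\rho^{D_1}\mid\chi_\rho^{D_2})$ which, composed with the isometries identifying $\caH_0 \otimes \bbC^{n_i}$ with $\caH_0$, intertwines the representations; disjointness follows directly from the disjointness part of the cited classification.

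Completeness (that every anyon representation is unitarily equivalent to some $\chi_\rho^D \circ \pi_0$) is the substantive input from \cite{bols2023classificationanyonsectorskitaevs} and is simply quoted: their classification exhausts all irreducible representations satisfying the superselection criterion, and each such class is represented by a ribbon-operator construction that coincides with our $\chi_\rho^D$ for the appropriate irreducible $D$. The main obstacle I anticipate is bookkeeping around the translation between representations (as used in the cited paper) and amplimorphisms of $\caA$ (as used here) --- in particular making precise the identification between intertwiners of representations on $\caH_0^n$ and matrix-valued operators in $(\chi_\rho^{D_1}\mid\chi_\rho^{D_2})$ --- but since both constructions are built from the same ribbon operators, this is straightforward once the notational dictionary is fixed.
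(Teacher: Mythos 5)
Your proposal is correct and takes essentially the same approach as the paper: the paper simply asserts that this theorem "follows directly from Theorem 2.4 and Proposition 5.19 of \cite{bols2023classificationanyonsectorskitaevs}," and your argument is a correct fleshing-out of exactly that translation — verifying the superselection criterion via transportability of $\chi_\rho^D$, importing irreducibility, disjointness, and completeness from the cited classification, and handling the representation/amplimorphism bookkeeping via the isometries identifying $\caH_0\otimes\bbC^n$ with $\caH_0$.
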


\subsection{Simple amplimorphisms}

Fix a good allowed half-infinite ribbon $\rho$.

\begin{proposition} \label{prop:simple objects from irreducible representations}
    Let $D_1$ and $D_2$ be irreducible representation of $\caD(G)$. Then the amplimorphisms $\chi_{\rho}^{D_1}$ and $\chi_{\rho}^{D_2}$ are simple objects of $\Amp$. If they are equivalent, then the representations $D_1$ and $D_2$ must be equivalent.
\end{proposition}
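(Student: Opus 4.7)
The plan is to reduce both statements to the classification result (Theorem \ref{thm:anyon classification}) by passing from the amplimorphism $\chi_\rho^D$ to the representation $\chi_\rho^D \circ \pi_0$ of $\caA$ on $\caH_0 \otimes \C^n$ (where $n = \dim D$), and then comparing intertwiner spaces at the two levels.

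First I would observe that $\chi_\rho^D$ is unital. Indeed, $\mu_\rho^D$ is unital by Lemma~\ref{lem:ampli properties}, and from the defining relation $\chi_\rho^D \circ \pi_0 = (\pi_0 \otimes \id) \circ \mu_\rho^D$ in Proposition~\ref{prop:Amp is nonempty} together with the weakly continuous extension, it follows that $\chi_\rho^D(\1) = \1_n$. Hence the intertwiner space $(\chi_\rho^D | \chi_\rho^D)$ is simply the relative commutant of $\chi_\rho^D(\caB)$ in $M_n(\caB(\caH_0))$. For simplicity, the key step is the inclusion
\begin{equation*}
    (\chi_\rho^D | \chi_\rho^D) = \chi_\rho^D(\caB)' \subseteq (\chi_\rho^D \circ \pi_0(\caA))',
\end{equation*}
which holds because $\pi_0(\caA) \subseteq \caB$. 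Now by Theorem~\ref{thm:anyon classification}, when $D$ is irreducible the representation $\chi_\rho^D \circ \pi_0$ is an anyon representation, hence irreducible, so its commutant is $\C \1_n$. Combined with $\1_n \in (\chi_\rho^D | \chi_\rho^D)$ from unitality, this gives $(\chi_\rho^D | \chi_\rho^D) = \C \1_n$, so $\chi_\rho^D$ is simple.

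For the second statement, suppose $\chi_\rho^{D_1} \sim \chi_\rho^{D_2}$. Since both amplimorphisms are unital, any equivalence is a unitary $U \in (\chi_\rho^{D_1} | \chi_\rho^{D_2})$ with $U^*U = \1_{n_2}$ and $UU^* = \1_{n_1}$; in particular $n_1 = n_2$. Restricting the intertwining relation to elements of the form $\pi_0(O)$ for $O \in \caA$ shows that $U$ implements a unitary equivalence between the representations $\chi_\rho^{D_1}\circ\pi_0$ and $\chi_\rho^{D_2}\circ\pi_0$ of $\caA$ on $\caH_0\otimes\C^n$. Theorem~\ref{thm:anyon classification} then forces $D_1$ and $D_2$ to be equivalent as representations of $\caD(G)$.

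The only conceptual subtlety is that the commutant of $\chi_\rho^D(\caB)$ is a priori smaller than the commutant of $\chi_\rho^D \circ \pi_0(\caA)$, since $\caB$ is strictly larger than $\pi_0(\caA)$; but this inclusion goes in the favorable direction, so irreducibility of the $\caA$-representation is enough to conclude. All remaining work is bookkeeping — checking unitality, using $\pi_0(\caA) \subseteq \caB$, and invoking the classification theorem in both directions.
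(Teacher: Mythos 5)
Your proof is correct and takes essentially the same route as the paper: pass from the amplimorphism $\chi_\rho^D$ to the representation $\chi_\rho^D\circ\pi_0$ of $\caA$, then use the inclusion of intertwiner spaces and invoke Theorem~\ref{thm:anyon classification} in both directions; your remarks on unitality of $\chi_\rho^D$ and the inclusion $(\chi_\rho^D|\chi_\rho^D)\subseteq(\chi_\rho^D\circ\pi_0(\caA))'$ simply make explicit what the paper takes for granted. One small slip: the aside ``in particular $n_1=n_2$'' does not actually follow from $U^*U=\1_{n_2}$, $UU^*=\1_{n_1}$, since $\caH_0^{n_1}\cong\caH_0^{n_2}$ for any $n_1,n_2$; this claim is unused, and $n_1=n_2$ only follows a posteriori once Theorem~\ref{thm:anyon classification} gives $D_1\cong D_2$.
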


The converse to the second part, namely that $\chi_{\rho}^{D_1}$ and $\chi_{\rho}^{D_2}$ are equivalent if $D_1$ and $D_2$ are equivalent will be shown later in Proposition \ref{prop:t to T}. \\
 
\begin{proof}
    Suppose $\chi_{\rho}^{D_1}$ were not simple. Then there is a non-trivial orthogonal projector $p \in ( \chi_{\rho}^{D_1} | \chi_{\rho}^{D_1})$. Since $\chi_{\rho}^{D_1}$ is unital, this implies
    \begin{equation*}
        p \, \cdot \,  (\chi_{\rho}^{D_1} \circ \pi_0)(O) =  (\chi_{\rho}^{D_1} \circ \pi_0)(O) \, \cdot \,  p \quad \text{for all} \, O \in \caA.
    \end{equation*}
    But this shows that $p$ is in the commutant of the representation $\chi_{\rho}^{D_1} \circ \pi_0$. Since the latter representation is irreducible by Theorem \ref{thm:anyon classification}, $p$ cannot be a non-trivial projection. We conclude that $\chi_{\rho}^{D_1}$ is simple.

    Similarly, if $U \in (\chi_{\rho}^{D_2} | \chi_{\rho}^{D_1})$ is a unitary equivalence of unital amplimorphisms then $U$ is also a unitary intertwiner of representations $\chi_{\rho}^{D_1} \circ \pi_0$  and  $\chi_{\rho}^{D_2} \circ \pi_0$. By Theorem \ref{thm:anyon classification} such a $U$ can exists only if $D_1$ and $D_2$ are equivalent.
\end{proof}

\begin{proposition} \label{prop:charcterization of simple amplis}
    Any simple object of $\Amp$ is equivalent to $\chi_{\rho}^D$ for some irreducible representation $D$.
\end{proposition}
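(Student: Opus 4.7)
The plan is to associate to a simple $\chi \in \Amp$ an irreducible anyon representation $\chi \circ \pi_0$ of $\caA$, apply Theorem \ref{thm:anyon classification} to identify it with $\chi_{\rho}^D \circ \pi_0$ for some irreducible representation $D$ of $\caD(G)$, and then lift the resulting unitary equivalence of representations to an equivalence in $\Amp$. By Corollary \ref{cor:ampli is equivalent to unital ampli in the same cone} we may assume $\chi$ is unital of some degree $n$, localized in an allowed cone $\Lambda_{\chi}$; simplicity passes through this equivalence.

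I would first verify that $\chi \circ \pi_0 : \caA \to \caB(\C^n \otimes \caH_0)$ is an anyon representation. For the superselection criterion, given a cone $\Lambda$, transportability provides an amplimorphism $\chi'$ localized in an allowed cone disjoint from $\Lambda$ and a unitary $U \in (\chi|\chi')$, so that $\chi(\pi_0(O)) = U(\pi_0(O) \otimes \I_n)U^*$ for $O \in \caA_{\Lambda}$; the amplified representation $\pi_0 \otimes \id_n |_{\caA_\Lambda}$ is unitarily equivalent to $\pi_0|_{\caA_\Lambda}$ because both $\caR(\Lambda)$ and its commutant $\caR(\Lambda^c)$ are properly infinite factors (by Haag duality and \cite{Naaijkens2012}), which makes the representation stable under finite amplification. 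For irreducibility, simplicity of $\chi$ gives $\chi(\caB)' = \C \cdot \I_n$, and since $(\chi \circ \pi_0)(\caA) \subset \chi(\caB)$ the inclusion $(\chi \circ \pi_0)(\caA)' \supset \C \cdot \I_n$ is automatic. For the reverse inclusion, the key point is that $\chi$ restricts to a \emph{normal} $*$-homomorphism on each cone algebra $\caR(\widehat\Lambda)$ with $\widehat\Lambda \supset \Lambda_\chi$: using transportability one picks a $\chi'$ localized in an allowed cone disjoint from $\widehat\Lambda$ together with a unitary intertwiner $U \in (\chi|\chi')$, whereupon $\chi|_{\caR(\widehat\Lambda)}$ is realized as a spatially implemented map and is therefore weakly continuous. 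Hence $\chi(\pi_0(\caA_{\widehat\Lambda}))$ is weakly dense in $\chi(\caR(\widehat\Lambda))$; intersecting commutants over $\widehat\Lambda \supset \Lambda_\chi$ whose union is norm-dense in $\caB$ yields $(\chi \circ \pi_0)(\caA)' = \chi(\caB)' = \C \cdot \I_n$.

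Applying Theorem \ref{thm:anyon classification}, there exists an irreducible unitary representation $D$ of $\caD(G)$ and a unitary $W \in M_{n \times \dim D}(\caB(\caH_0))$ with $W \chi_{\rho}^D(\pi_0(O)) = \chi(\pi_0(O)) W$ for all $O \in \caA$. It remains to extend this identity from $\pi_0(\caA)$ to all of $\caB$. Using normality of both $\chi$ and $\chi_\rho^D$ on each $\caR(\widehat\Lambda)$, approximating $O \in \caR(\widehat\Lambda)$ weakly by $\pi_0(O_\alpha)$ with $O_\alpha \in \caA_{\widehat\Lambda}$ and invoking separate weak continuity of left and right multiplication by $W$ extends the relation to $\caR(\widehat\Lambda)$; norm density of $\bigcup_{\widehat\Lambda} \caR(\widehat\Lambda)$ in $\caB$ then gives the intertwining on all of $\caB$. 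Combined with unitarity of $W$ together with $\chi(\I) = \I_n$ and $\chi_\rho^D(\I) = \I_{\dim D}$, this yields $W \in (\chi | \chi_\rho^D)$ as a unitary equivalence, so $\chi \sim \chi_\rho^D$ as required. The principal obstacle is the normality argument, which underpins both the irreducibility of $\chi \circ \pi_0$ and the final extension of intertwiners; realizing restrictions to cone algebras as spatially implemented maps via transportability is the natural route and parallels standard arguments in the endomorphism-based DHR framework.
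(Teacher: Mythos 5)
Your proof is correct and follows the same overall strategy as the paper's: reduce $\chi$ to a nicer form, check that $\chi\circ\pi_0$ is an anyon representation, invoke Theorem~\ref{thm:anyon classification}, and lift the resulting unitary to an intertwiner of amplimorphisms. The one genuine point of departure is the initial reduction. The paper invokes Lemma~\ref{lem:ampli is equivalent to endo} to replace $\chi$ by a unital \emph{endomorphism} in $\DHR$, so $\chi\circ\pi_0$ acts on $\caH_0$ itself and the superselection criterion follows in one line from transportability. You only invoke Corollary~\ref{cor:ampli is equivalent to unital ampli in the same cone}, which leaves $\chi$ a unital amplimorphism of degree $n$, and then you must separately argue that $(\pi_0\otimes\id_n)|_{\caA_\Lambda}$ is unitarily equivalent to $\pi_0|_{\caA_\Lambda}$. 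Your argument for this — splitting $\I$ into $n$ equivalent projections inside the properly infinite $\caR(\Lambda^c)$ — is correct, but it is exactly the work that the isometries $\Iso(\Lambda,n)$ of Section~\ref{sec:equivalence of Amp and DHR} package into Lemma~\ref{lem:ampli is equivalent to endo}, so you end up redoing it in a special case. On the other hand, your explicit treatment of normality of $\chi$ on cone algebras via transport to a disjoint cone and spatial implementation is actually more careful than the paper's terse "since commutation is preserved under weak limits": both the irreducibility step and the final lifting of the unitary $W$ from $\pi_0(\caA)$ to all of $\caB$ rest on precisely this point, and spelling it out is a real improvement in rigor.
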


\begin{proof}
    Let $\chi$ be a simple amplimorphism of degree $n$. By Lemma \ref{lem:ampli is equivalent to endo} we can assume without loss of generality that $\chi$ is an endomorphism.
    
    Let us show that the *-representation $\chi \circ \pi_0 : \caA \rightarrow \caB(\caH_0)$ satisfies the superselection criterion, Definition \ref{def:anyon representation}. Let $\Lambda$ be a cone. By transportability there is an endomorphism $\chi' \in \DHR$ localized in $\Lambda^c$ such that $\chi \sim \chi'$. Let $U \in (\chi' | \chi)$ be a (necessarily unitary) equivalence. Then one has $(\chi \circ \pi_0)(O) = U^* \pi_0(O) U$ for any $O \in \caA_{\Lambda}$. Since $\Lambda$ was arbitrary, this shows that $\chi \circ \pi_0$ indeed statisfies the superselection criterion.

    We now use the assumption that $\chi$ is simple to show that $\chi \circ \pi_0$ is in fact an anyon representation. That is, we want to show that $\chi \circ \pi_0$ is irreducible. To obtain a contradiction, suppose $p \in \caB(\caH)$ is a non-trivial projection intertwining the representation $\chi \circ \pi_0$ with itself. Since commutation is preserved under weak limits, it follows that $p \in (\chi | \chi)$, contradicting simplicity of $\chi$. So $\chi \circ \pi_0$ is indeed an anyon representation.
    
    By Theorem \ref{thm:anyon classification} it follows that $\chi \circ \pi_0$ is unitarily equivalent as a $*$-representation of $\caA$ to $\chi_{\rho}^{D} \circ \pi_0$ for some irreducible representation $D$. Let $U$ be an intertwining unitary. It follows by continuity that in fact $U \in (\chi | \chi_{\rho}^D)$ is an equivalence of amplimorphisms, as required.
\end{proof}



\section{Equivalence of \texorpdfstring{$\Rep_f \caD(G)$}{RepD(G)}, \texorpdfstring{$\Amp_{\rho}$}{Amprho}, and \texorpdfstring{$\Amp_f$}{Ampf}} \label{sec:proof of main THM}

In this section we prove the remaining equivalences of categories needed to establish our main result, Theorem~\ref{thm:main result}.

\subsection{Equivalence of \texorpdfstring{$\Amp_{\rho}$}{} and \texorpdfstring{$\Rep_f \caD(G)$}{}} \label{subsec:Amp_rho and Rep}

Fix a good allowed half-infinite ribbon $\rho$. In this section we show that the category $\Amp_{\rho}$ introduced in Section \ref{subsec:subcategory of amplimorphisms on a fixed ribbon} is equivalent to $\Rep_f \caD(G)$, the category of finite dimensional unitary representations of $\caD(G)$.

\subsubsection{Monoidal equivalence} \label{subsec:monoidal equivalence}

Let us first show that for every intertwiner $t \in (D_1 | D_2 )$ of representations we can construct an intertwiner $T \in (\chi_{\rho}^{D_1} | \chi_{\rho}^{D_2})$ of amplimorphisms.

\begin{proposition} \label{prop:t to T}
	If $t \in (D_1 | D_2)$ then $T := \1 \otimes t \in (\chi_{\rho}^{D_1} | \chi_{\rho}^{D_2})$.
\end{proposition}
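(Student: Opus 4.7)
The plan is to first verify the intertwining property at the level of finite ribbon multiplets, then pass to the limit to obtain it on $\pi_0(\caA)$, and finally extend to all of $\caB$ via weak continuity.

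First I would unpack the condition $t \in (D_1 | D_2)$ through Proposition~\ref{prop:properties of simple ribbon operators}\ref{ribbon prop:intertwiners}, which gives $\bF_{\rho_n}^{D_1}(\1 \otimes t) = (\1 \otimes t)\bF_{\rho_n}^{D_2}$ and $(\bF_{\rho_n}^{D_1})^*(\1 \otimes t) = (\1 \otimes t)(\bF_{\rho_n}^{D_2})^*$ for every $n$. Setting $T_0 = \1 \otimes t \in M_{n_1 \times n_2}(\caA)$, a direct four-line computation using the definition $\mu_{\rho_n}^D(O) = \bF_{\rho_n}^D(O \otimes \1)(\bF_{\rho_n}^D)^*$ and the commutation of $(O \otimes \1)$ with $(\1 \otimes t)$ shows that
\begin{equation*}
\mu_{\rho_n}^{D_1}(O)\,T_0 \;=\; T_0\,\mu_{\rho_n}^{D_2}(O) \qquad \text{for all } O \in \caA.
\end{equation*}

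Next, for $O \in \caA^{\loc}$ Lemma~\ref{lem:ampli properties} tells us that $\mu_\rho^{D_i}(O) = \mu_{\rho_n}^{D_i}(O)$ for all sufficiently large $n$, so the identity passes to the half-infinite ribbon amplimorphisms on $\caA^{\loc}$. By norm-continuity of the $*$-homomorphisms $\mu_\rho^{D_i}$, and norm-continuity of left/right multiplication by $T_0$, the identity extends to all $O \in \caA$. Applying $\pi_0 \otimes \id$ and using the defining property of $\chi_\rho^D$ from Proposition~\ref{prop:Amp is nonempty} then yields
\begin{equation*}
\chi_\rho^{D_1}(\pi_0(O))\,T \;=\; T\,\chi_\rho^{D_2}(\pi_0(O)) \qquad \text{for all } O \in \caA,
\end{equation*}
where $T = \1 \otimes t \in M_{n_1 \times n_2}(\caB(\caH_0))$.

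The main obstacle is extending this from $\pi_0(\caA)$ to the full allowed algebra $\caB$. Let $\Lambda_\rho$ be an allowed cone on which $\chi_\rho^{D_i}$ are localized. I would handle two cases. For any allowed cone $\widetilde \Lambda$ containing $\Lambda_\rho$, the restrictions of $\chi_\rho^{D_i}$ to $\caR(\widetilde \Lambda)$ are weakly continuous by Proposition~\ref{prop:Amp is nonempty}, while $\pi_0(\caA_{\widetilde\Lambda})$ is weakly dense in $\caR(\widetilde \Lambda)$; since left and right multiplication by $T$ are weakly continuous, the intertwining relation extends to all of $\caR(\widetilde\Lambda)$. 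For an allowed cone $\Lambda'$ disjoint from $\Lambda_\rho$, localization of $\chi_\rho^{D_i}$ (together with Haag duality and weak continuity on $\caR(\Lambda_\rho^c)$) forces $\chi_\rho^{D_i}(O) = O \otimes \1_{n_i}$ on $\caR(\Lambda')$, and the intertwining $T(O \otimes \1_{n_2}) = (O \otimes \1_{n_1})T$ holds trivially because $T = \1 \otimes t$. Since every $O \in \caB$ is a norm-limit of elements of $\bigcup_{\Lambda}\caR(\Lambda)$ (allowed cones), and any allowed cone can be covered by these two cases (possibly after enlarging), the intertwining extends to $\caB$ by norm-continuity.

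Finally, since $\chi_\rho^{D_i}$ are unital by Lemma~\ref{lem:ampli properties}, the boundary conditions $\chi_\rho^{D_1}(\1)T = T = T\chi_\rho^{D_2}(\1)$ in the definition~\eqref{eq:intertwiners of amplis defined} of intertwiner spaces are automatic, completing the proof.
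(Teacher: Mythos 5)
Your proof is correct and follows essentially the same route as the paper's: verify the intertwining relation on local observables via Proposition~\ref{prop:properties of simple ribbon operators}\ref{ribbon prop:intertwiners} together with the finite-ribbon formula, then extend to all of $\caB$ using weak continuity of $\chi_\rho^{D_i}$ on cone algebras and density. One small remark: the case distinction between cones containing $\Lambda_\rho$ and cones disjoint from it is unnecessary — since the set of allowed cones is directed under inclusion (as noted in Section~\ref{sec:cone_algebra}), any $\caR(\Lambda')$ sits inside $\caR(\widetilde\Lambda)$ for some allowed $\widetilde\Lambda \supset \Lambda_\rho \cup \Lambda'$, so the first case already covers everything; also, "weak continuity on $\caR(\Lambda_\rho^c)$" is not available as stated, since $\Lambda_\rho^c$ is generically a forbidden cone and the weak continuity in Proposition~\ref{prop:Amp is nonempty} is only asserted on allowed cone algebras.
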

\begin{proof}
	For any $O \in \caA^{\loc}$ we have for all $n$ large enough (dropping $\pi_0$ from the notation)
    \[
    \begin{split}
		T \, \chi_{\rho}^{D_2}(O) &= T \, \mu_{\rho_n}^{D_2}(O) = (\1 \otimes t) \, \bF_{\rho_n}^{D_2} \, (O \otimes \I_n) \, (\bF_{\rho_n}^{D_2})^* \\
					  &= \bF_{\rho_n}^{D_1} \, (O \otimes \I_n) \, ( \bF_{\rho_n}^{D_1} )^* \, (\1 \otimes t) = \chi_{\rho}^{D_1}(O) \, T
                      \end{split}
    \]
	where we used item \ref{ribbon prop:intertwiners} of Proposition \ref{prop:properties of simple ribbon operators}. Let $\Lambda$ be an allowed cone containing $\rho$. Since $\cstar[\Lambda]^\loc$ is norm dense in $\cstar[\Lambda]$ which is in turn weakly dense in $\caR(\Lambda)$, using weak continuity of $\chi_\rho^{D_i}$ on cone algebras, this relation is true for all $O \in \caR(\Lambda)$. Since $\Lambda$ was an arbitrary allowed cone containig $\rho$, this relation holds for all $O \in \caB$. Thus $T \in (\chi_\rho^{D_1} | \chi_\rho^{D_2})$.
\end{proof}

Conversely, we want to show that all $T \in (\chi_{\rho}^{D_1} | \chi_{\rho}^{D_2})$ are of this form.
\begin{proposition} \label{prop:T to t}
	If $T \in (\chi_{\rho}^{D_1} | \chi_{\rho}^{D_2})$ then $T = \1 \otimes t$ for some $t \in (D_1 | D_2)$. In particular, the amplimorphisms $\chi_{\rho}^{D}$ are finite so $\Amp_{\rho}$ is a full $\rm C^*$-subcategory of $\Amp_f$.
\end{proposition}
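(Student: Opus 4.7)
My plan: Proposition~\ref{prop:t to T} already supplies the injective linear map $t \mapsto \1 \otimes t$ from $(D_1 \mid D_2)$ into $(\chi_\rho^{D_1} \mid \chi_\rho^{D_2})$, so the real content of the statement is surjectivity. I will prove this by recognising $T$ as an intertwiner of Hilbert-space representations of $\caA$ and invoking Theorem~\ref{thm:anyon classification} together with Schur's lemma.

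First, I would regard any $T \in (\chi_\rho^{D_1} \mid \chi_\rho^{D_2})$ as a bounded operator $\caH_0 \otimes \C^{n_2} \to \caH_0 \otimes \C^{n_1}$ via the standard identification $M_{n_1 \times n_2}(\caB(\caH_0)) \cong \caB(\caH_0 \otimes \C^{n_2}, \caH_0 \otimes \C^{n_1})$, where $n_i = \dim D_i$. Applying the defining intertwiner relation to elements of the form $O = \pi_0(A)$ with $A \in \caA \subset \caB$ immediately shows that $T$ intertwines the $\ast$-representations $\chi_\rho^{D_2} \circ \pi_0$ and $\chi_\rho^{D_1} \circ \pi_0$ of $\caA$.

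Next, decompose $D_i \cong \bigoplus_\alpha D_\alpha \otimes \C^{m_\alpha^{(i)}}$ into isotypic components, where $D_\alpha$ runs over a chosen set of representatives of irreducible unitary representations of $\caD(G)$ with underlying spaces $V_\alpha$. By Lemma~\ref{lem:sum and product of amplimorphism representations} (applied iteratively), the representations $\chi_\rho^{D_i} \circ \pi_0$ decompose as $\bigoplus_\alpha (\chi_\rho^{D_\alpha} \circ \pi_0) \otimes \I_{m_\alpha^{(i)}}$ on $\bigoplus_\alpha (\caH_0 \otimes V_\alpha) \otimes \C^{m_\alpha^{(i)}}$. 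Theorem~\ref{thm:anyon classification} guarantees that each $\chi_\rho^{D_\alpha} \circ \pi_0$ is irreducible and that distinct $\alpha$ yield mutually disjoint anyon representations of $\caA$. The standard multiplicity-space form of Schur's lemma then forces
\[
    T = \bigoplus_\alpha \I_{\caH_0 \otimes V_\alpha} \otimes A_\alpha
\]
for some linear maps $A_\alpha : \C^{m_\alpha^{(2)}} \to \C^{m_\alpha^{(1)}}$. Setting $t := \bigoplus_\alpha \I_{V_\alpha} \otimes A_\alpha : V_{D_2} \to V_{D_1}$ produces a $\caD(G)$-intertwiner with $T = \1 \otimes t$, establishing surjectivity. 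Finiteness of $\chi_\rho^D$ and the statement $\Amp_\rho \subset \Amp_f$ then follow at once from $\dim(\chi_\rho^D \mid \chi_\rho^D) = \dim(D \mid D) < \infty$.

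The only delicate step is the multiplicity-space Schur-lemma decomposition, since the irreducible components of $\chi_\rho^{D_i} \circ \pi_0$ live on tensor factors ``$\caH_0 \otimes V_\alpha$'' that mix the GNS Hilbert space with representation spaces of $\caD(G)$. However, once the isotypic structure is made explicit this is the standard fact that an intertwiner between $\sigma \otimes \I_m$ and $\sigma \otimes \I_{m'}$, with $\sigma$ an irreducible $*$-representation on a Hilbert space, is necessarily of the form $\I_\sigma \otimes B$; no input beyond Theorem~\ref{thm:anyon classification} is required. Note also that there is no circularity hidden in passing from an $\caA$-intertwiner back to a $\caB$-intertwiner: once we know $T = \1 \otimes t$, Proposition~\ref{prop:t to T} automatically upgrades this to a $\caB$-intertwiner.
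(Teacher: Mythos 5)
Your proof is correct and follows essentially the same route as the paper's: both decompose $D_1, D_2$ into irreducibles, use Theorem~\ref{thm:anyon classification} (in the paper via Proposition~\ref{prop:simple objects from irreducible representations}) together with Schur's lemma to pin down the block structure of the intertwiner, and then conjugate back through the decomposition unitaries. Your version phrases the blockwise Schur argument directly at the level of $*$-representations of $\alg{A}$ where the paper speaks of intertwiners of amplimorphisms, but the substance is identical.
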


\begin{proof}	
    Decompose $D_1$ and $D_2$ into direct sums of irreducibles (\cf Appendix \ref{app:introduction to D(G)}):
	\begin{equation}
		D_i \simeq \tilde D_i := \bigoplus_{r \in I} \, N_r^i \, \cdot \, D^{(r)},
	\end{equation}
	where $I$ is the finite set of equivalence classes of irreducible representations of $\caD(G)$ and $D^{(r)}$ is a representation in class $r$. Let $u_i \in ( D_i | \tilde D_i )$ be the unitaries implementing these equivalences. It follows from Proposition \ref{prop:t to T} that $U_i = (\1 \otimes u_i) \in (\chi_{\rho}^{D_i} | \chi_{\rho}^{\tilde D_i})$ and therefore $\tilde T := U_1^* T U_2 \in (\chi_{\rho}^{\tilde D_1} | \chi_{\rho}^{\tilde D_2})$.

    By Proposition \ref{prop:simple objects from irreducible representations},  $\{\chi_\rho^{D^{(r)}}\}_{r \in I}$ are disjoint simple objects of $\Amp_{\rho}$. Since the $\tilde D_i$ are direct sums of these it follows from Lemma \ref{lem:sum and product of amplimorphism representations} that the matrix blocks of $\tilde T$ mapping a $\chi_{\rho}^{D_r}$ subspace to a $\chi_{\rho}^{D_{r'}}$ are actually intertwiners of these amplimorphisms. It follows that the matrix blocks of $\tilde T$ corresponding to maps between copies of the same $\chi_{\rho}^{D_r}$ are multiples of the identity, and the other matrix blocks vanish, \ie $\tilde T = \1 \otimes \tilde t$ where
	\begin{equation}
		\tilde t = \bigoplus_{r} \tilde t_r \otimes \I_{n_r}
	\end{equation}
	with $\tilde t_r \in \Mat_{N^1_r \times N^2_r}(\C)$. Any such matrix $\tilde t$ belongs to $(\tilde D_1 | \tilde D_2)$. Since $u_i \in (D_i | \tilde D_i)$ it follows that $t = u_1 \tilde t u_2^* \in (D_1 | D_2)$. Now,
	\begin{equation}
		T = U_1 \tilde T U_2^* = (\1 \otimes u_1) (\1 \otimes \tilde t) (\1 \otimes u_2^*) = \1 \otimes t,
	\end{equation}
	which proves the claim.
\end{proof}

The two preceding propositions show that there is an isomorphsim between $(D_1 | D_2)$ and $( \chi_{\rho}^{D_1} | \chi_{\rho}^{D_2} )$ for all unitary representations $D_1, D_2$. We can use this isomorphisms to construct a monoidal equivalence between $\Rep_f \caD(G)$ and $\Amp_{\rho}$.

Consider the functor $F : \Rep_f \caD(G) \rightarrow \Amp_{\rho}$ which maps any unitary representation $D$ to the amplimorphism $\chi_{\rho}^{D}$, and maps any $t \in (D_1 | D_2)$ to $\1 \otimes t$. It follows from Proposition \ref{prop:t to T} that $F$ is indeed a functor. In fact, $F$ is linear and respects the $*$-structure. Moreover: 

\begin{proposition} \label{prop:monoidal equivalence}
	The functor $F : \Rep_f \caD(G) \rightarrow \Amp_{\rho}$ is a monoidal equivalence. In particular, $\Rep_f \caD(G)$ and $\Amp_{\rho}$ are equivalent as $\rm C^*$-tensor categories.
\end{proposition}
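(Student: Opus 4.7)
The plan is to verify the three standard ingredients of a $\rm C^*$-tensor equivalence for the functor $F$: fully faithfulness, essential surjectivity, and a coherent monoidal structure. Everything needed has already been assembled in the preceding sections, so the proof is largely a matter of stitching them together.

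First, I would observe that $F$ is well-defined on morphisms by Proposition \ref{prop:t to T}, is manifestly linear, and respects the $*$-structure because $(\1 \otimes t)^* = \1 \otimes t^*$. Fully faithfulness is then the combined content of Propositions \ref{prop:t to T} and \ref{prop:T to t}: the map $t \mapsto \1 \otimes t$ from $(D_1 | D_2)$ to $(\chi_\rho^{D_1} | \chi_\rho^{D_2})$ is injective (since $\1 \otimes t = 0$ forces $t = 0$) and surjective by Proposition \ref{prop:T to t}. Essential surjectivity is immediate from the very definition of $\Amp_{\rho}$, whose objects are by fiat the amplimorphisms $\chi_\rho^D$ for arbitrary finite dimensional unitary representations $D$ of $\caD(G)$.

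For the monoidal structure, the key input is Lemma \ref{lem:sum and product of amplimorphism representations}, which gives the strict equality $\chi_\rho^{D_1} \times \chi_\rho^{D_2} = \chi_\rho^{D_1 \times D_2}$. This allows us to take the tensorator $F(D_1) \times F(D_2) \to F(D_1 \times D_2)$ to be the identity intertwiner. For the unit constraint, I would argue that $F$ sends the trivial representation of $\caD(G)$ to the identity amplimorphism on $\caB$: using Definition \ref{def:simple ribbon operators} together with the relation $\sum_{g,h} F_\rho^{g,h} = \1$ (Eq. \eqref{eq:sum to identity}), the ribbon multiplet $\bF_\rho^{\mathrm{triv}}$ equals $\1$, hence $\mu_{\rho_n}^{\mathrm{triv}} = \id_{\caA}$ for every $n$, and therefore $\chi_\rho^{\mathrm{triv}} = \id_{\caB}$ by the uniqueness clause of Proposition \ref{prop:Amp is nonempty}. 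Coherence of the tensorator with the associator and unitors then reduces to checking equalities of intertwiners of the form $\1 \otimes t$ for the structure morphisms $t$ of $\Rep_f \caD(G)$, which holds because $F$ preserves composition and is faithful.

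The only step with any genuine content is the compatibility of $F$ with the tensor product on morphisms, i.e.\ verifying $F(t_1 \otimes t_2) = F(t_1) \times F(t_2)$ for $t_i \in (D_i | D_i')$. Unpacking the definition~\eqref{eq:tensor of intertwiners} of the monoidal product of intertwiners in $\Amp$ and comparing with the obvious tensor product on $\Rep_f \caD(G)$ should give this directly, since $\1 \otimes t_1$ commutes with $\chi_\rho^{D_1'}$-images and the mixed components vanish. I do not anticipate a serious obstacle: the hardest piece, namely the bijection of hom-spaces, is already established, and all remaining verifications are formal. Having assembled these ingredients, the standard argument (e.g.\ as in the equivalence criterion of \cite{neshveyev2013compact}) concludes that $F$ is an equivalence of $\rm C^*$-tensor categories.
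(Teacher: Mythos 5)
Your plan matches the paper's proof in all essentials: the monoidal structure on objects comes from Lemma~\ref{lem:sum and product of amplimorphism representations}, the bijection on hom-spaces from Propositions~\ref{prop:t to T} and~\ref{prop:T to t}, and essential surjectivity (indeed strict bijectivity on objects) is immediate from the definition of $\Amp_\rho$, yielding an isomorphism of categories exactly as the paper observes. One small slip in your unit-constraint check: Eq.~\eqref{eq:sum to identity} states $\sum_k F_\rho^{e,k}=\1$, not $\sum_{g,h}F_\rho^{g,h}=\1$ (the latter is false, since the $L_\rho^h$ are unitaries rather than a resolution of the identity); what you actually want is that the counit $\ep(g,h)=\delta_{g,1}$ kills the $g$-sum, so $\bF_\rho^{\mathrm{triv}}=\sum_h F_\rho^{1,h}=\1$, which does follow from the equation as stated.
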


\begin{proof}
	Using Lemma \ref{lem:sum and product of amplimorphism representations} we find
	\begin{equation}
		F(D_1) \times F(D_2) = \chi_{\rho}^{D_1} \times \chi_{\rho}^{D_2} = \chi_{\rho}^{D_1 \times D_2} = F( D_1 \times D_2 ).
	\end{equation}
	Let $\id_{D_1, D_2} : F(D_1) \otimes F(D_2) \rightarrow F( D_1 \times D_2 )$ be the identity maps. Strict monoidality of $F$ means that the $\id_{D_1, D_2}$ form a natural transformation between functors $\times \circ (F, F) : \Rep_f \caD(G) \times \Rep_f \caD(G) \rightarrow \Amp_{\rho}$ and $F \circ \times :   \Rep_f \caD(G) \times \Rep_f \caD(G) \rightarrow \Amp_{\rho}$. Since  $\Amp_{\rho}$ is strict, this boils down to $F(t) \times F(t') = F(t \times t')$ for any $t \in (D_1 | D_2)$ and any $t' \in (D_1' | D_2')$, but this follows immediately from the definitions (recall in particular the definition in equation~\eqref{eq:tensor of intertwiners} of the tensor product of intertwiners of amplimorphisms).

	To see that $F$ is an equivalence of categories we note that $F$ is in fact an isomorphism, \ie $F$ is invertible with inverse $F^{-1}$ given on objects by $F^{-1}( \chi_{\rho}^D ) = D$ and on morphisms $T \in (\chi_{\rho}^{D_1} | \chi_{\rho}^{D_2})$ by $F^{-1}(T) = t$ with $t$ the unique intertwiner $t \in (D_1 | D_2)$ such that $T = \1 \otimes t$, \cf Proposition \ref{prop:T to t}.
\end{proof}

\subsubsection{Braided monoidal equivalence}

As remarked in Section \ref{subsec:subcategory of amplimorphisms on a fixed ribbon}, the subcategory $\Amp_{\rho}$ inherits the braiding of $\Amp$ defined in Section \ref{subsec:braiding}. Let us now compute the braiding between objects of $\Amp_{\rho}$ explicitly. 

In order to compute $\ep( \chi_{\rho}^{D_1} , \chi_{\rho}^{D_2} )$ we fix good negative half-infinite ribbons $\rho_{L}$ and $\rho_R$ as in Figure \ref{fig:explicit braiding}. By the proof of Lemma \ref{lem:construction of transporters} there are unitaries $U \in (\chi_{\rho_R}^{D_1} | \chi_{\rho}^{D_1})$ and $V \in (\chi_{\rho_{L}}^{D_2} | \chi_{\rho}^{D_2})$ that are limits in the strong-* operator topology of unitary sequences $U_n = \bF_{\overline \sigma_{R, n}}^{D_1}$ and $V_n = \bF_{\overline \sigma_{L, n}}^{D_2}$ with ribbons $\sigma_{L, n} = \overline \rho_{L, n} \xi_{L, n} \rho_n$ and $\sigma_{R, n} = \overline \rho_{R, n} \xi_{R, n} \rho_n$ as in Figure \ref{fig:explicit braiding}, so the ribbons $\{ 
\xi_{L/R, n} \}$ are bridges from $\rho$ to $\rho_{L, R}$.

\begin{figure}[!ht]
\centering
\includegraphics[width = 0.7\textwidth]{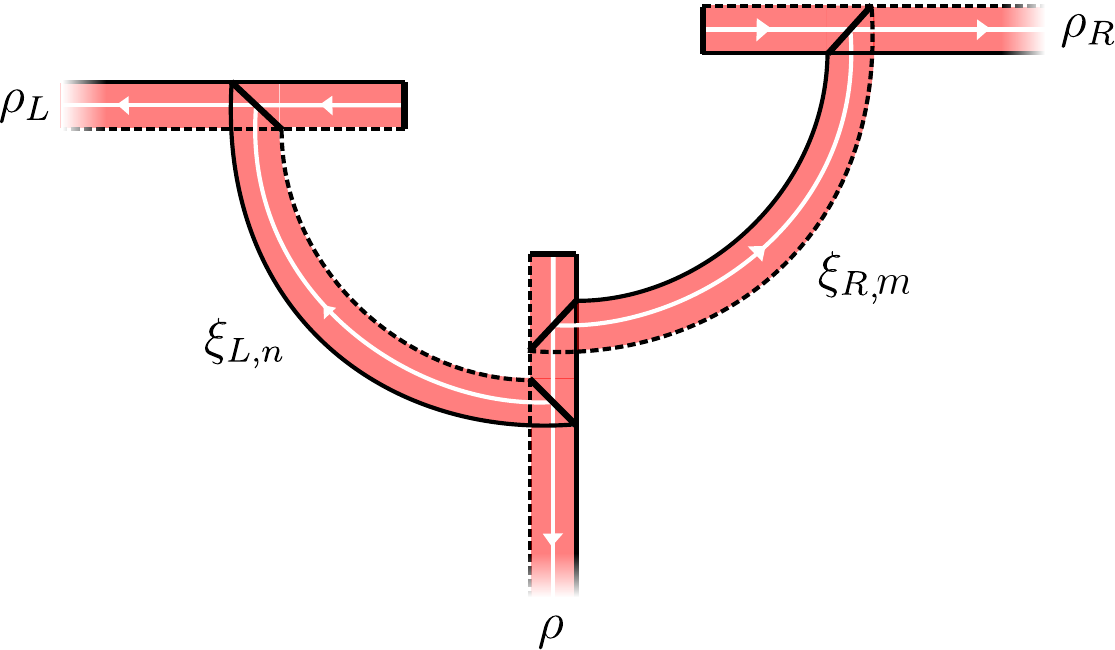}
\caption{The finite ribbon $\xi_{L,n}$ is a bridge from $\rho$ to $\rho_L$, and $\xi_{R,m}$ is a bridge from $\rho$ to $\rho_R$.}
\label{fig:explicit braiding}
\end{figure}

Let $\rho'_{L/R, n} = (\rho_{L/R})_{> n} \, \xi_{L/R, n} \, \rho_n$ and regard the unitaries $U_n$ and $V_n$ as intertwiners in $(\chi_{\rho_R}^{D_1} | \chi_{\rho'_{R, n}}^{D_1})$ and $(\chi_{\rho_L}^{D_2} | \chi_{\rho'_{L, n}}^{D_2})$ respectively, fix $l > 0$, and write $m = n + l$. Let $\zeta_{L/R, n} = \xi_{L/R, n} \overline \rho_{L/R, n}$ be such that $\sigma_{L, n} = \zeta_{L, n} \rho_n$ and $\sigma_{R, n} = \zeta_{R, n} \rho_n$.
Recall the braiding defined in equation~\eqref{eq:braiding of amplimorphisms defined}.
Noting that since all amplimorphisms are unital, the operator $P_{12}$ below used to define the braiding does not depend on $n$ or $m$, we have
\begin{align*}
	( V_{n+l}^* \times U_{n}^* )& \cdot P_{12} \cdot ( U_{n} \times V_{n+l} ) \\
         &= (V_m^* \otimes \1) \, \chi_{\rho_L}^{D_2}( U_n^* ) \cdot P_{12} \cdot  \chi_{\rho_R}^{D_1}( V_m ) \, (U_n \otimes \1) \\
             &= (V_m^* \otimes \1) (\1 \otimes U_n^*) \cdot P_{12} \cdot (\1 \otimes V_m) (U_n \otimes \1) \\
             &= ( \bF_{\rho_m}^{D_2} \otimes \1) ( \bF_{\zeta_{L, m}}^{D_2} \otimes \1 )  ( \1 \otimes \bF_{\rho_n}^{D_1}) ( \1 \otimes \bF_{\zeta_{R, n}}^{D_1} )  \cdot P_{12} \\
             & \quad\quad \cdot   ( \1 \otimes \bF_{\zeta_{L, m}}^{D_2} )^* ( \1 \otimes \bF_{\rho_m}^{D_2} )^* ( \bF_{\zeta_{R, n}}^{D_1} \otimes \1 )^* ( \bF_{\rho_n}^{D_1} \otimes \1 )^*\\
             \intertext{since $\rho_n$ is disjoint from the ribbons $\zeta_{L, n}$ and $\zeta_{R, n}$, and using item~\ref{ribbon prop:adjoint}  of Proposition \ref{prop:properties of simple ribbon operators} this becomes}
             &= ( \bF^{D_2}_{\rho_m} \otimes \bF^{D_1}_{\rho_n} ) ( \bF_{\bar \rho_m}^{D_2} \otimes \bF_{\bar \rho_n}^{D_1} ) \\
             & \quad \quad \cdot ( \bF_{\zeta_{L, m}}^{D_2} \otimes \bF_{\zeta_{R, n}}^{D_1} ) \cdot P_{12} \cdot ( \bF_{\zeta_{R, n}}^{D_1} \otimes \bF_{\zeta_{L, m}}^{D_2} )^* \\
             \intertext{using items \ref{ribbon prop:adjoint} and \ref{ribbon prop:sum and product} of Proposition \ref{prop:properties of simple ribbon operators} and unitarity, we get rid of the ribbon multiplets on $\rho_n, \rho_m$. The ribbons $\zeta_{R, n}$ and $\zeta_{L, m}$ are configured like the ribbons $\rho_1$ and $\rho_2$ of Figure \ref{fig:braiding positive ribbons} so we can apply item \ref{ribbon prop:braiding} of Proposition \ref{prop:properties of simple ribbon operators} to obtain}
			&= B(D_1, D_2).
\end{align*}

Since multiplication of operators is jointly continuous in the strong operator topology on bounded sets we have that
\begin{equation*}
    \ep( \chi_{\rho}^{D_1}, \chi_{\rho}^{D_2} ) = (V^* \times U^*) \cdot P_{12} \cdot (U \times V) = \lim_{n \uparrow \infty} ( V_{n+l}^* \times U_{n}^* ) \cdot P_{12} \cdot ( U_{n} \times V_{n+l} ).
\end{equation*}

We have thus shown
\begin{lemma} \label{lem:explicit braiding}
	For any unitary representations $D_1$ and $D_2$ of $\caD(G)$ and any good positive half-infinite ribbon $\rho$ we have
	\begin{equation}
		\ep( \chi_{\rho}^{D_1}, \chi_{\rho}^{D_2} ) = B(D_1, D_2).
	\end{equation}
\end{lemma}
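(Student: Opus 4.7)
The plan is to compute $\ep(\chi_\rho^{D_1}, \chi_\rho^{D_2})$ by choosing the auxiliary transported amplimorphisms in \eqref{eq:braiding of amplimorphisms defined} to be of the form $\chi_{\rho_R}^{D_1}$ and $\chi_{\rho_L}^{D_2}$, where $\rho_R$ and $\rho_L$ are good half-infinite ribbons sitting in allowed cones to the right and left of $\rho$ in the braiding setup (Figure~\ref{fig:braiding setup}). Since the braiding does not depend on the choice of transports, this is permitted. The key input is Lemma \ref{lem:construction of transporters}, which supplies unitaries $U \in (\chi_{\rho_R}^{D_1} \mid \chi_\rho^{D_1})$ and $V \in (\chi_{\rho_L}^{D_2} \mid \chi_\rho^{D_2})$ as strong-$*$ limits of finite ribbon multiplets $U_n = \bF_{\overline{\sigma_{R,n}}}^{D_1}$ and $V_m = \bF_{\overline{\sigma_{L,m}}}^{D_2}$, with $\sigma_{R/L, n} = \overline{\rho_{R/L, n}} \, \xi_{R/L, n} \, \rho_n$ built from bridges $\xi_{R/L, n}$.

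Using joint continuity of multiplication in the strong operator topology on bounded sets, I would then write
\begin{equation*}
    \ep(\chi_\rho^{D_1}, \chi_\rho^{D_2}) \;=\; \lim_{n \to \infty} \; (V_{n+l}^{*} \times U_n^{*}) \cdot P_{12} \cdot (U_n \times V_{n+l})
\end{equation*}
for any fixed $l > 0$, reducing everything to a finite computation in $M_{n_1 n_2}(\caA)$. Expanding the monoidal products of intertwiners using \eqref{eq:tensor of intertwiners} (noting that each amplimorphism agrees with its finite ribbon version on local operators for $n$ large), and then applying items \ref{ribbon prop:adjoint}, \ref{ribbon prop:sum and product}, and \ref{ribbon prop:concatenation} of Proposition \ref{prop:properties of simple ribbon operators}, the contributions of $\bF_{\rho_n}^{D_i}$ and $\bF_{\rho_m}^{D_i}$ on the common ribbon segment will cancel against their inverses. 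What remains is a product involving only the ribbon multiplets on $\zeta_{R, n} = \xi_{R, n} \overline{\rho_{R, n}}$ and $\zeta_{L, m} = \xi_{L, m} \overline{\rho_{L, m}}$, which by construction share a common initial site and sit on opposite sides of the forbidden direction.

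At that point the configuration matches exactly the hypothesis of item \ref{ribbon prop:braiding} of Proposition \ref{prop:properties of simple ribbon operators} (\cf Figure \ref{fig:braiding positive ribbons}), and a single application of the braid identity for positive ribbons based at a common site yields $B(D_1, D_2)$. Since the pre-limit expression is independent of $n$, the limit is trivial and the lemma follows. The main obstacle I anticipate is bookkeeping: tracking orientations so that reversed ribbons $\overline{\sigma_{L/R,n}}$ produce the correct $\bF_\zeta^{D}$ operators after the cancellations on $\rho_n,\rho_m$, and verifying that the bridges can indeed be chosen so that $\zeta_{R,n}$ and $\zeta_{L,m}$ are in the exact geometric configuration required for item \ref{ribbon prop:braiding}; once the picture is set up correctly, the algebraic manipulations are forced.
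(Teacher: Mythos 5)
Your proposal is correct and follows essentially the same route as the paper's proof: choose the transporters $U,V$ to be the explicit strong-$*$ limits of ribbon multiplets from Lemma \ref{lem:construction of transporters}, use joint strong continuity of multiplication on bounded sets to reduce to a finite pre-limit computation, cancel the ribbon multiplets on the common segment $\rho_n$, and apply item \ref{ribbon prop:braiding} of Proposition \ref{prop:properties of simple ribbon operators} to the resulting pair of positive ribbons sharing an initial site. The one thing the paper spells out that you compress is the intermediate step of expanding $U_n \times V_m$ via \eqref{eq:tensor of intertwiners} and using disjointness of the localization cones to replace $\chi_{\rho_L}^{D_2}(U_n^*)$ and $\chi_{\rho_R}^{D_1}(V_m)$ by plain tensor factors $\1 \otimes U_n^*$ and $\1 \otimes V_m$, which is what makes the limiting argument work termwise.
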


The following proposition now follows immediately:

\begin{proposition} \label{prop:braided equivalence of Amp_rho and Rep}
    The functor $F : \Rep_f \caD(G) \rightarrow \Amp_{\rho}$ is an equivalence of braided $\rm C^*$-tensor categories.
\end{proposition}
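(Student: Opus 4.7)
The heavy lifting has already been done: Proposition \ref{prop:monoidal equivalence} establishes that $F$ is an equivalence (in fact an isomorphism) of $\rm C^*$-tensor categories, and Lemma \ref{lem:explicit braiding} computes the braiding on $\Amp_{\rho}$ in explicit terms. So the only thing left to verify is that $F$ intertwines the two braidings, i.e., that for every pair $D_1, D_2 \in \Rep_f \caD(G)$ one has
\begin{equation*}
    F\bigl(B(D_1, D_2)\bigr) \;=\; \ep\bigl(F(D_1), F(D_2)\bigr).
\end{equation*}

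The plan is to unfold both sides and observe they coincide. By definition of $F$ on morphisms, the left-hand side is $\1 \otimes B(D_1, D_2)$, viewed as an element of $(\chi_{\rho}^{D_2 \times D_1} | \chi_{\rho}^{D_1 \times D_2}) = (\chi_{\rho}^{D_2} \times \chi_{\rho}^{D_1} | \chi_{\rho}^{D_1} \times \chi_{\rho}^{D_2})$, where the tensor product identification uses Lemma \ref{lem:sum and product of amplimorphism representations}. The right-hand side is exactly the braiding intertwiner $\ep(\chi_{\rho}^{D_1}, \chi_{\rho}^{D_2})$ computed in Lemma \ref{lem:explicit braiding}, which gives the operator $B(D_1, D_2)$ living in the same intertwiner space; by Proposition \ref{prop:T to t}, every such intertwiner is of the form $\1 \otimes (\text{matrix})$, so the statement of Lemma \ref{lem:explicit braiding} is exactly the identity $\ep(\chi_{\rho}^{D_1}, \chi_{\rho}^{D_2}) = \1 \otimes B(D_1, D_2)$.

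Combining these two observations gives the desired compatibility, and since $F$ was already shown to be a monoidal equivalence preserving the $*$-structure, it follows that $F$ is an equivalence of braided $\rm C^*$-tensor categories. There is no real obstacle here beyond bookkeeping: the only subtle point is the implicit identification in Lemma \ref{lem:explicit braiding} of the scalar-matrix $B(D_1, D_2)$ with its image $\1 \otimes B(D_1, D_2)$ in $M_{n_1 n_2}(\caB)$, which is the content of Proposition \ref{prop:T to t}.
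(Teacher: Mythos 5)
Your proposal follows exactly the same route as the paper's proof: invoke Proposition \ref{prop:monoidal equivalence} to reduce to checking that $F$ intertwines the braidings, then use $F(t)=\1\otimes t$ together with Lemma \ref{lem:explicit braiding} to get $F(B(D_1,D_2))=\1\otimes B(D_1,D_2)=\ep(\chi_\rho^{D_1},\chi_\rho^{D_2})=\ep(F(D_1),F(D_2))$. Your extra remark about the implicit $\1\otimes(\cdot)$ identification in Lemma \ref{lem:explicit braiding} is a fair observation but not a gap; it is the same bookkeeping the paper glosses over.
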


\begin{proof}
    By Proposition \ref{prop:monoidal equivalence} it suffices to check
    \begin{equation}
	F( B(D_1, D_2) ) = \1 \otimes B(D_1, D_2) = \ep( \chi_{\rho}^{D_1}, \chi_{\rho}^{D_2}  ) = \ep( F(D_1), F(D_2) ).
    \end{equation}
    for any two unitary representations $D_1, D_2$, where we used Lemma \ref{lem:explicit braiding} in the second step.
\end{proof}

\subsection{Equivalence of \texorpdfstring{$\Amp_\rho$}{Amprho} and \texorpdfstring{$\Amp_f$}{Ampf}} \label{subsec:Amp_rho and Amp_f}

Let us first note that $\Amp_f$ is semisimple:
\begin{proposition} \label{prop:amplimorphisms are direct sums of simples}
    Any amplimorphism $\chi \in \Amp_f$ is equivalent to a finite direct sum of irreducible amplimorphisms.
\end{proposition}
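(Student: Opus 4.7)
The plan is to exploit the finite-dimensional structure of $(\chi\mid\chi)$ and iteratively split off irreducible pieces using the subobjects construction of Proposition~\ref{prop:existence of subobjects for Amp}. Since $\chi\in\Amp_f$, the endomorphism algebra $(\chi\mid\chi)$ is a finite-dimensional $C^*$-algebra with unit $\chi(\I)$, hence by Artin--Wedderburn it is isomorphic to a finite direct sum of matrix algebras. Consequently we can write $\chi(\I)=\sum_{i=1}^{k} p_i$ where $p_1,\dots,p_k$ are pairwise orthogonal minimal projections of $(\chi\mid\chi)$ (that is, $p_i(\chi\mid\chi)p_i=\C\,p_i$).

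Next, I would proceed by induction on $k$. For $k=1$, the algebra $(\chi\mid\chi)=\C\,\chi(\I)$ is one-dimensional, so $\chi$ is already simple and there is nothing to prove. For $k>1$, I apply Proposition~\ref{prop:existence of subobjects for Amp} to the projection $p_1\in(\chi\mid\chi)$, obtaining localized and transportable amplimorphisms $\chi_1,\chi_1'\in\Amp$ together with partial isometries realising $\chi\cong\chi_1\oplus\chi_1'$. The same proposition guarantees that $\chi_1$ and $\chi_1'$ are finite since $\chi$ is, so both lie in $\Amp_f$.

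The key point is then to identify the endomorphism algebras of the pieces with corners of $(\chi\mid\chi)$. The construction in Proposition~\ref{prop:existence of subobjects for Amp} gives canonical $*$-isomorphisms $(\chi_1\mid\chi_1)\cong p_1(\chi\mid\chi)p_1=\C\,p_1$ (so $\chi_1$ is irreducible because its endomorphism algebra is one-dimensional), and $(\chi_1'\mid\chi_1')\cong (\chi(\I)-p_1)(\chi\mid\chi)(\chi(\I)-p_1)$, whose unit decomposes into the $k-1$ remaining minimal projections $p_2,\dots,p_k$. By the induction hypothesis, $\chi_1'$ is equivalent to a finite direct sum of irreducible amplimorphisms in $\Amp_f$, and combining with $\chi_1$ gives the desired decomposition of $\chi$.

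There is no real obstacle here beyond carefully quoting that the corners of $(\chi\mid\chi)$ cut out by the subobject projections are exactly the endomorphism algebras of the subobjects (which is built into the cited proposition) and that minimality of $p_1$ in the finite-dimensional ambient algebra forces the corresponding subobject to be simple. The argument is essentially a standard semisimplicity argument for $C^*$-tensor categories with finite-dimensional endomorphism spaces and subobjects.
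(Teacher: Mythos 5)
Your argument is correct and is precisely the content behind the paper's terse ``follows immediately'' proof: decompose the unit of the finite-dimensional $C^*$-algebra $(\chi\mid\chi)$ into mutually orthogonal minimal projections, peel off subobjects via Proposition~\ref{prop:existence of subobjects for Amp}, and identify $(\chi_1\mid\chi_1)\cong p_1(\chi\mid\chi)p_1$ to see that minimal projections yield simple summands. This identification is exactly the one already used in the paper's proof of Proposition~\ref{prop:existence of subobjects for Amp} to deduce finiteness of the pieces, so you are filling in the details of the same route rather than taking a different one.
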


\begin{proof}
    This follows immediately from Proposition \ref{prop:existence of subobjects for Amp} and the assumption that all objects of $\Amp_f$ are \emph{finite} amplimorphisms.
\end{proof}

\begin{proposition}
\label{prop:equivalence of Amp_rho and Amp}
    The categories $\Amp_\rho$ and $\Amp_f$ are equivalent as $\rm C^*$-categories. In particular, $\Amp_f$ is closed under the tensor product of $\Amp$, so that $\Amp_f$ is a full braided $\rm C^*$-tensor subcategory of $\Amp$.
\end{proposition}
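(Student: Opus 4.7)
The plan is to show that the inclusion functor $I : \Amp_\rho \hookrightarrow \Amp_f$ is an equivalence of $\rm C^*$-categories, from which the tensor-closedness of $\Amp_f$ inside $\Amp$ will be an easy corollary. First I would observe that $I$ is well-defined: by Proposition \ref{prop:T to t} every $\chi_\rho^D$ is finite and hence lies in $\Amp_f$. Since both $\Amp_\rho$ and $\Amp_f$ are full subcategories of $\Amp$, the functor $I$ is automatically fully faithful, linear, and $*$-preserving.

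The only nontrivial step is essential surjectivity. Given $\chi \in \Amp_f$, I would first apply semisimplicity (Proposition \ref{prop:amplimorphisms are direct sums of simples}) to write $\chi \sim \bigoplus_{i=1}^k \chi_i$ in $\Amp$ with each $\chi_i$ simple. The classification of simples (Proposition \ref{prop:charcterization of simple amplis}) then supplies irreducible unitary representations $D_1, \dots, D_k$ of $\caD(G)$ with $\chi_i \sim \chi_\rho^{D_i}$. Combining these equivalences and collecting the direct sum via Lemma \ref{lem:sum and product of amplimorphism representations} yields
\[
\chi \;\sim\; \bigoplus_{i=1}^k \chi_\rho^{D_i} \;=\; \chi_\rho^{D_1 \oplus \cdots \oplus D_k} \;\in\; \Amp_\rho,
\]
so $I$ is essentially surjective, and hence an equivalence of $\rm C^*$-categories.

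For the second assertion, take $\chi, \chi' \in \Amp_f$ and, by essential surjectivity, choose representations $D, D'$ with $\chi \sim \chi_\rho^D$ and $\chi' \sim \chi_\rho^{D'}$. Because $\times$ is a bifunctor on $\Amp$, it preserves the equivalence relation $\sim$; combined with Lemma \ref{lem:sum and product of amplimorphism representations} this gives
\[
\chi \times \chi' \;\sim\; \chi_\rho^D \times \chi_\rho^{D'} \;=\; \chi_\rho^{D \times D'} \;\in\; \Amp_\rho \subset \Amp_f.
\]
Since any equivalence $U \in (\chi | \chi')$ in $\Amp$ induces a $*$-isomorphism of endomorphism algebras via $S \mapsto USU^*$ (using $U^*U = \chi'(\1)$ and $UU^* = \chi(\1)$), finiteness is invariant under $\sim$, and so $\chi \times \chi' \in \Amp_f$. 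The monoidal unit is simple (hence finite) and so lies in $\Amp_f$, and the braiding of $\Amp$ restricts to give $\Amp_f$ the structure of a full braided $\rm C^*$-tensor subcategory of $\Amp$.

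The main subtlety I expect to need to justify carefully is the preservation of finiteness under the partial-isometry notion of equivalence in $\Amp$, which differs from the unitary notion because amplimorphisms need not be unital. Once one checks that $S \mapsto USU^*$ is indeed a $*$-isomorphism $(\chi'|\chi') \to (\chi|\chi)$ (the computation $(USU^*)(US'U^*) = US\chi'(\1)S'U^* = USS'U^*$ is immediate from $U^*U = \chi'(\1)$), the rest of the argument is purely a repackaging of semisimplicity and the classification of simple amplimorphisms.
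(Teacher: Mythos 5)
Your proof is correct and takes essentially the same route as the paper: both establish that the embedding $\Amp_\rho \hookrightarrow \Amp_f$ is an equivalence by citing Proposition~\ref{prop:T to t} for well-definedness and full faithfulness and Propositions~\ref{prop:amplimorphisms are direct sums of simples} and \ref{prop:charcterization of simple amplis} for essential surjectivity, then deduce tensor-closedness from Lemma~\ref{lem:sum and product of amplimorphism representations}. Your version is slightly more explicit than the paper's in spelling out the assembly of the direct sum and in verifying that finiteness is preserved under the (non-unital) partial-isometry equivalence via $S \mapsto USU^*$, but the logical content is the same.
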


\begin{proof}
    Recall Proposition \ref{prop:T to t} which shows that $\Amp_{\rho}$ is a full $\rm C^*$-subcategory of $\Amp_f$. Let us consider the functor $F : \Amp_{\rho} \rightarrow \Amp_f$ which embeds $\Amp_{\rho}$ into $\Amp_f$. We want to show that $F$ is an equivalence of $\rm C^*$-categories. Clearly, $F$ is linear, fully faithful, and respects the $*$-structure. The only thing that remains to be shown is that $F$ is essentially surjective, but this follows from Propositions \ref{prop:amplimorphisms are direct sums of simples} and \ref{prop:charcterization of simple amplis}.

    It follows that for any two amplimorphisms $\chi_1$ and $\chi_2$ of $\Amp_f$ there are representations $D_1$ and $D_2$ such that $\chi_1$ is equivalent to $\chi_{\rho}^{D_1}$ and $\chi_2$ is equivalent to $\chi_{\rho}^{D_2}$, and therefore $\chi_1 \times \chi_2$ is equivalent to $\chi_{\rho}^{D_1} \times \chi_{\rho}^{D_2} = \chi_{\rho}^{D_1 \times D_2}$ (see Lemma \ref{lem:sum and product of amplimorphism representations}). In particular, $\chi_1 \times \chi_2$ is finite (Proposition~\ref{prop:T to t}) and so $\Amp_f$ is closed under the tensor product. It is therefore a $\rm C^*$-tensor subcategory of $\Amp$, and inherits the braiding from $\Amp$.
\end{proof}

\begin{proposition}
\label{prop:braided monoidal equivalence of Amp_rho and Amp}
    The categories $\Amp_\rho$ and $\Amp_f$ are equivalent as braided $\rm C^*$-tensor categories.
\end{proposition}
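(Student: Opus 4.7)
The plan is to upgrade the $\rm C^*$-categorical equivalence from Proposition \ref{prop:equivalence of Amp_rho and Amp} to a braided monoidal equivalence. Consider the embedding functor $F : \Amp_\rho \rightarrow \Amp_f$. First I would observe that $F$ is strictly monoidal. Indeed, Lemma \ref{lem:sum and product of amplimorphism representations} gives $\chi_\rho^{D_1} \times \chi_\rho^{D_2} = \chi_\rho^{D_1 \times D_2}$, so $\Amp_\rho$ is closed under the tensor product of $\Amp$; together with the fact that the monoidal unit of $\Amp$ (which coincides with $\chi_\rho^{D_0}$ for the trivial representation $D_0$ of $\caD(G)$) lies in $\Amp_\rho$, this makes $\Amp_\rho$ a strict monoidal subcategory of $\Amp_f$, so $F(\chi_1 \times \chi_2) = F(\chi_1) \times F(\chi_2)$ on the nose.

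Next I would verify that $F$ preserves the braiding. By the last paragraph of the proof of Proposition \ref{prop:equivalence of Amp_rho and Amp}, the braiding on $\Amp_f$ is defined as the restriction of the braiding $\ep$ on $\Amp$, and by Section \ref{subsec:subcategory of amplimorphisms on a fixed ribbon} the same is true for $\Amp_\rho$. Therefore, for any $\chi_1, \chi_2 \in \Amp_\rho$, the identity $F(\ep(\chi_1, \chi_2)) = \ep(F(\chi_1), F(\chi_2))$ is immediate.

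Combining these two observations with Proposition \ref{prop:equivalence of Amp_rho and Amp}, which established that $F$ is linear, $*$-preserving, fully faithful, and essentially surjective, hence an equivalence of $\rm C^*$-categories, yields the desired braided monoidal equivalence. There is no further obstacle at this stage, because both subcategories carry braided monoidal structures that are by construction restrictions of a common one on $\Amp$; the substantive input was essential surjectivity, which was already handled via the classification of simple objects of $\Amp$ in Propositions \ref{prop:amplimorphisms are direct sums of simples} and \ref{prop:charcterization of simple amplis}.
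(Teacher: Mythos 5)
Your proposal is correct and takes essentially the same approach as the paper, which notes that both $\Amp_\rho$ and $\Amp_f$ are braided $\rm C^*$-tensor subcategories of $\Amp$ and that the embedding functor, already known to be a $\rm C^*$-equivalence by Proposition~\ref{prop:equivalence of Amp_rho and Amp}, is then clearly monoidal and braided. You have merely fleshed out the ``clearly'' --- verifying closure under $\times$, identifying the unit as $\chi_\rho^{D_0}$, and noting both braidings restrict from $\Amp$ --- all of which are accurate.
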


\begin{proof}
    From Proposition \ref{prop:equivalence of Amp_rho and Amp} the embedding functor $F : \Amp_{\rho} \rightarrow \Amp_f$ is an equivalence of $\rm C^*$-categories, and $\Amp_{\rho}$ and $\Amp_f$ are braided $\rm C^*$-tensor subcateogries of $\Amp$. Clearly $F$ is monoidal and braided, which proves the claim.
\end{proof}

\subsection{Proof of Theorem \ref{thm:main result}}
 \label{subsec:proof of Theorem}

Before proving the main theorem, we must first establish that $\DHR_f$ is closed under the tensor product and therefore inherits the braided $\rm C^*$-tensor structure of $\DHR$.
\begin{lemma} \label{lem:DHR_f is braided monoidal}
    The full subcategory $\DHR_f$ of $\DHR$ is closed under the tensor product. It is therefore a braided $\rm C^*$-tensor subcategory of $\DHR$ whith braiding inherited from $\DHR$.
\end{lemma}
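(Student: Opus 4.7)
The plan is to reduce the claim to what has already been proved for $\Amp_f$, exploiting the fact that $\DHR$ is a full subcategory of $\Amp$ whose tensor product is inherited from $\Amp$, together with the closure of $\Amp_f$ under the monoidal product established in Proposition~\ref{prop:equivalence of Amp_rho and Amp}.

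First I would fix $\nu_1, \nu_2 \in \DHR_f$ and observe that, since $\nu_1$ and $\nu_2$ are unital amplimorphisms of degree one, the formula~\eqref{eq:monoidal product} gives $(\nu_1 \times \nu_2)(O) = \nu_1(\nu_2(O))$, which is again a unital $*$-endomorphism of $\caB$. By the discussion in Section~\ref{subsec:monoidal structure}, it is localized and transportable, hence an object of $\DHR$. Thus the tensor product of $\DHR$-objects remains in $\DHR$; the only nontrivial point is finiteness of the endomorphism space.

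For this I would appeal to Proposition~\ref{prop:equivalence of Amp_rho and Amp}: since $\DHR_f \subset \Amp_f$ and $\Amp_f$ is closed under the tensor product of $\Amp$, the amplimorphism $\nu_1 \times \nu_2$ lies in $\Amp_f$, so $(\nu_1 \times \nu_2 \mid \nu_1 \times \nu_2)$ is finite-dimensional. Because $\DHR$ is a full subcategory of $\Amp$, this Hom-space coincides with the one computed in $\DHR$. Combined with the previous paragraph, this shows $\nu_1 \times \nu_2 \in \DHR_f$, so $\DHR_f$ is closed under the monoidal product. The monoidal unit is the identity endomorphism, which is trivially finite.

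The braiding and associators of $\DHR$ from Section~\ref{sec:braided tensor structure} then restrict to $\DHR_f$ without further work, since the intertwiners $\ep(\nu_1,\nu_2)$ already live in $(\nu_2 \times \nu_1 \mid \nu_1 \times \nu_2)$ which sits inside $\DHR_f$ by the above. No real obstacle is expected; the argument is essentially bookkeeping once Proposition~\ref{prop:equivalence of Amp_rho and Amp} is in hand, and the only thing worth emphasizing is the use of fullness of $\DHR \hookrightarrow \Amp$ to identify the Hom-spaces in which finiteness is being measured.
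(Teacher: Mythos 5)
Your proof is correct and follows essentially the same route as the paper: both reduce finiteness of $\nu_1 \times \nu_2$ to the closure of $\Amp_f$ under the monoidal product (Proposition~\ref{prop:equivalence of Amp_rho and Amp}), using that $\DHR$ is a full monoidal subcategory of $\Amp$. If anything, your version is a bit more direct, since the paper's invocation of Lemma~\ref{lem:ampli is equivalent to endo} to produce $\chi_i$ equivalent to $\nu_i$ is an unnecessary detour (one can simply take $\chi_i = \nu_i$, as you implicitly do), whereas you use fullness of $\DHR \hookrightarrow \Amp$ to identify the Hom-spaces directly.
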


\begin{proof}
    Let $\nu_1$ and $\nu_2$ be endomorphisms belonging to $\DHR_f$. By Lemma \ref{lem:ampli is equivalent to endo} there are amplimorphisms $\chi_1$ and $\chi_2$ belonging to $\Amp$ such that $\nu_1$ is equivalent to $\chi_1$ and $\nu_2$ is equivalent to $\chi_2$. Moreover, since $\nu_1$ and $\nu_2$ are finite, so are $\chi_1$ and $\chi_2$. \ie $\chi_1$ and $\chi_2$ belong to $\Amp_f$. It follows that $\nu_1 \times \nu_2$ is equivalent to $\chi_1 \times \chi_2$, which is finite by Proposition \ref{prop:equivalence of Amp_rho and Amp}. This shows that $\nu_1 \times \nu_2$ is finite and so $\DHR_f$ is closed under the tensor product.
\end{proof}

We now proceed to prove our main result, Theorem \ref{thm:main result}, which we restate here for convenience.

\begin{theorem}
    The categories $\Amp_f$, $\DHR_f$, and $\Rep_f \caD(G)$ are all equivalent as braided $\rm C^*$-tensor categories.
\end{theorem}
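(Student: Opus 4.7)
The plan is to assemble the main theorem from the equivalences already established. The first chain is immediate: by Proposition \ref{prop:braided equivalence of Amp_rho and Rep} we have $\Rep_f \caD(G) \simeq \Amp_\rho$ as braided $\rm C^*$-tensor categories, and by Proposition \ref{prop:braided monoidal equivalence of Amp_rho and Amp} we have $\Amp_\rho \simeq \Amp_f$ as braided $\rm C^*$-tensor categories. Composing these two functors yields $\Rep_f \caD(G) \simeq \Amp_f$, which handles two of the three categories.

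For the remaining equivalence $\DHR_f \simeq \Amp_f$, I would consider the inclusion functor $G : \DHR_f \hookrightarrow \Amp_f$. It is well-defined because $\DHR_f$ consists of finite endomorphisms, each of which in particular is a degree-one amplimorphism with finite-dimensional endomorphism algebra, hence an object of $\Amp_f$. Since $\DHR$ is a braided $\rm C^*$-tensor subcategory of $\Amp$ and $\DHR_f, \Amp_f$ inherit their structures from the ambient categories (using Lemma \ref{lem:DHR_f is braided monoidal} and Proposition \ref{prop:equivalence of Amp_rho and Amp}), the functor $G$ is braided monoidal and respects the $*$-structure. It is fully faithful because $\DHR_f$ is by definition a full subcategory of $\Amp$ and the morphism spaces agree.

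The only non-automatic step is essential surjectivity. Given $\chi \in \Amp_f$, Lemma \ref{lem:ampli is equivalent to endo} produces an endomorphism $\rho_\chi \in \DHR$ together with an equivalence $U \in (\chi | \rho_\chi)$, i.e.\ a partial isometry intertwining $\chi$ and $\rho_\chi$ and mapping their respective units to one another. Since conjugation by $U$ gives a $*$-isomorphism between $(\chi | \chi)$ and $(\rho_\chi | \rho_\chi)$, finiteness of $\chi$ forces finiteness of $\rho_\chi$, so $\rho_\chi \in \DHR_f$. Hence $G$ is essentially surjective, and therefore a braided $\rm C^*$-tensor equivalence $\DHR_f \simeq \Amp_f$.

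There is no real obstacle left at this stage; the theorem is a bookkeeping step that chains the prior equivalences $\Rep_f \caD(G) \simeq \Amp_\rho \simeq \Amp_f \simeq \DHR_f$. The only subtlety worth double-checking is that the equivalence of Lemma \ref{lem:ampli is equivalent to endo} really does transport the finiteness property in both directions, which follows because the equivalence of amplimorphisms induces a $*$-isomorphism of endomorphism algebras and finite-dimensionality is preserved under $*$-isomorphism.
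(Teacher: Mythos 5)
Your proof is correct and follows essentially the same route as the paper: combine the two established equivalences to get $\Rep_f\,\caD(G)\simeq\Amp_f$, then check that the inclusion $\DHR_f\hookrightarrow\Amp_f$ is a braided $\rm C^*$-tensor equivalence, with essential surjectivity coming from Lemma~\ref{lem:ampli is equivalent to endo}. The only difference is that you make explicit (via the $*$-isomorphism of endomorphism algebras induced by the equivalence) why the endomorphism $\rho_\chi$ produced by that lemma lands in $\DHR_f$ rather than merely $\DHR$; the paper leaves this point implicit, so your version is a slightly more careful rendering of the same argument.
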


\begin{proof}
    With Propositions \ref{prop:braided equivalence of Amp_rho and Rep} and \ref{prop:braided monoidal equivalence of Amp_rho and Amp} establishing the equivalence of $\Rep_f \caD(G)$ and $\Amp_f$, all that remains to be shown is the equivalence of $\DHR_f$ and $\Amp_f$ as braided $\rm C^*$-tensor categories.

    To see this, let $F : \DHR_f \rightarrow \Amp_f$ be the embedding functor. Clearly $F$ is linear, fully faithful, braided monoidal, and respects the $*$-structure. It remains to check that $F$ is essentially surjective, but this is immediate from Lemma \ref{lem:ampli is equivalent to endo}.
\end{proof}

\begin{remark}
As mentioned previously, we restrict to the category $\DHR_f$.
Since dualizable DHR endomorphisms are automatically finite (in our sense of the terminology) by~\cite{LongoRoberts97}, and all objects in the category $\Amp_f$ are dualizable, our results imply that the restriction of the category $\mathcal{O}_{\Lambda_0}$ (as defined by Ogata~\cite{Ogata2021}) to dualizable sectors (i.e., those who admit a conjugate) is precisely $\Rep_f \caD(G)$.
We do not expect that $\mathcal{O}_{\Lambda_0}$ has any objects which are not equivalent to (possibly infinite) direct sums of objects in $\Amp_f$.
For example, any simple direct summand of any such an object would be equivalent to a simple object in~$\Amp_f$.
\end{remark}


\section{Conclusions}

We explicitly characterized the category of anyon sectors for Kitaev's quantum double model for finite groups $G$.
As conjectured, the answer is that it is braided monoidally equivalent to $\Rep_f\,\caD(G)$.
This provides the first example where the category of anyon sectors is constructed explicitly for a model with non-abelian anyons.

The problem is tractable for the quantum double model largely because the Hamiltonian is of commuting projector type.
In general, we are interested in the whole quantum \emph{phase}.
The Hamiltonian of the quantum double model has a spectral gap in the thermodynamic limit, and roughly speaking another state is said to be in the same phase as the frustration free ground state $\omega_0$ of the quantum double model if they can be realised as ground states of a continuous path of gapped Hamiltonians.\footnote{Alternatively, it is possible to give a definition of a phase without referring to Hamiltonians at all, using e.g. finite depth quantum circuits or suitable locality preserving automorphisms.}
Using standard techniques (which we outline below) our results carry over to other states in the same gapped phase, which may no longer be ground states of a commuting projector Hamiltonian. 
One of the features of the quantum double model is that the physical features should be stable against small perturbations.
Indeed, the ground state has what is called local topological quantum order (LTQO)~\cite{Fiedler2014,Cui2020kitaevsquantum}.
This implies that sufficiently small local perturbations (even if applied throughout the system) do not close the spectral gap~\cite{michalakisz,brahami:2010}.

The result mentioned above implies that the ground states of the unperturbed and perturbed quantum double models can be related via an automorphism of $\alg{A}$ which is sufficiently local (meaning it satisfies a Lieb--Robinson type bound)~\cite{bachmannmns}.
Hence one can consider the \emph{phase} of a ground state as all states that can be connected via such a sufficiently local automorphism.
It turns out that the braided category of anyon sectors is an invariant of such a phase (that is, each state in the phase supports the same type of anyons).
This follows from the work of Ogata~\cite{Ogata2021} (see also~\cite{Ogata2021b} for a review), applied to the category $\DHR$ (or $\DHR_f$).
Alternatively, one can apply the approximation techniques developed there (necessary because one is forced to replace Haag duality by a weaker, approximate version) directly to the amplimorphisms constructed here. 

\appendix


\section{The quantum double of a finite group and its category of representations}
\label{app:introduction to D(G)}

Fix a finite group $G$. For any $g \in G$ we write $\bar g := g^{-1}$ for its inverse. We denote the unit of $G$ by $1 \in G$.
The quantum double algebra $\caD(G)$ of the finite group $G$ consists of formal $\C$-linear combinations of pairs of group elements $(g, h) \in G \times G$ equipped with product $\mu$, unit $\eta$, coproduct $\Delta$, counit $\ep$, and antipode $S$ defined by the linear extensions of
\begin{align*}
    \mu \big( (g_1, h_1), (g_2, h_2) \big) = \delta_{g_1, h_1 g_2 \bar h_1} (g_1, h_1 h_2), &\quad \Delta( g, h ) = \sum_{k \in G} (k, h) \otimes (\bar k g, h) \\
    \eta(1) = \sum_{k \in G} (k, 1), \quad \ep( g, h ) &= \delta_{g, 1} \quad S(g, h) = (\bar h \bar g h, \bar h),
\end{align*}
giving $\caD(G)$ the structure of a Hopf algebra. It is in fact a Hopf $*$-algebra with $(g, h)^* = (\bar h g h, \bar h)$, and is quasi-triangular with universal $R$-matrix
\begin{equation} \label{eq:R-matrix}
    R = \sum_{g, k \in G } (k, g) \otimes (g, 1).
\end{equation}

Let us recall some basic facts about the representation theory of $\caD(G)$ (see e.g.~\cite{Gould1993}) and establish notation. Denote by $\Rep_f \caD(G)$ the $\rm C^*$-category of finite dimensional unitary representations of $\caD(G)$, \ie representations $D$ such that $D(a^*) = D(a)^*$ for all $a \in \caD(G)$. We let $(D_2 | D_1)$ be the space of intertwiners from a representation $D_1$ to a representation $D_2$. We denote by $I$ the finite set of equivalence classes of irreducible representations and for each $i \in I$ we fix a representative $D^{(i)}$ from $i$. The algebra $\caD(G)$ is semisimple, from which it follows that any representation in $\Rep_f \caD(G)$ is equivalent to a direct sum of fintely many copies of the representatives $\{D^{(i)}\}_{i \in I}$.

The coproduct of $\caD(G)$ gives a monoidal product $\times$ of representations via
\[
(D_1 \times D_2)(a) := (D_1 \otimes D_2)(\Delta(a)),
\]
making $\Rep_f \caD(G)$ into a $\rm C^*$-tensor category. For $i, j \in I$ we have a unitary equivalence
\begin{equation*}
    D^{(i)} \times D^{(j)} \simeq \bigoplus_{k \in I} N_{ij}^k \cdot D^{(k)}
\end{equation*}
where the non-negative integers $N_{ij}^k$ stand for the multiplicity of $D^{(k)}$ in the direct sum.

The universal $R$-matrix of Eq. \eqref{eq:R-matrix} provides a braiding $B : \times \rightarrow \times^{\rm{op}}$ for $\Rep_f \caD(G)$ whose component maps are 
\begin{equation} \label{eq:braiding of Rep defined}
    B(D_1, D_2) := P_{12} \cdot (D_1 \times D_2) (R),
\end{equation}
where $P_{12}$ interchanges the factors in the tensor product of the representation spaces of $D_1$ and $D_2$. This makes $\Rep_f \caD(G)$ into a braided $\rm C^*$-tensor category.


\section{Ribbon operators}  \label{app:ribbon operators}
For the convenience of the reader, we recall the definition of ribbon operators and some of their properties, tailored to the triangular lattice we are using in this paper.
The material in this appendix is well-known, see e.g.~\cite{kitaev2003fault,Bombin2008,Yan2022} for more details.

\subsection{Triangles and ribbons}

Denote by $\Gamma^V, \Gamma^F$ the set of vertices and faces in $\Gamma$ respectively. An oriented edge $e \in \vec{\Gamma}^E$ may be identified with the pair of vertices $e = (v_0, v_1)$ where $v_0$ is the origin and $v_1$ the target of $e$. We write $\partial_0 e = v_0$ and $\partial_1 e = v_1$, and we have $\bar e = (v_1, v_0)$.

We say a vertex $v$ belongs to a face $f$ if $v$ sits on the boundary of $f$. A site $s = (v, f)$ is a pair of a vertex $v$ and a face $f$ such that $v$ belongs to $f$. We write $v(s) = v$ and $f(s) = f$.

Let $\bar\Gamma$ be the dual lattice to $\Gamma$. To each edge $e \in \Gamma^E$ we associate the oriented dual edge $e^*$ which crosses $e$ from right to left as follows : \includegraphics[width=0.6cm]{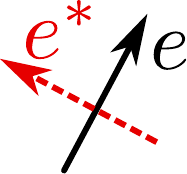}

A direct triangle $\tau = (s_0, s_1, e)$ consists of a pair of sites $s_0, s_1$ that share the same face, and an edge $e \in \Gamma^E$ connecting the vertices of $s_0$ and $s_1$. We write $\partial_0 \tau = s_0$ and $\partial_1 \tau = s_1$ for the initial and final sites of the direct triangle $\tau$, and $e_{\tau} = (v(s_0), v(s_1))$ for the oriented edge associated to $\tau$. The opposite triangle to $\tau$ is the direct triangle $\bar \tau = (s_1, s_0, e)$. Similarly, a dual triangle $\tau = (s_0, s_1, e)$ consists of a pair of sites $s_0, s_1$ that share the same vertex, and the edge $e$ whose associated dual edge $e^*$ connects the faces of $s_0$ and $s_1$. We write again $\partial_0 \tau = s_0$ and $\partial_1 \tau = s_1$, $e^*_{\tau} = (f(s_0), f(s_1))$ for the oriented dual edge associated to $\tau$, and define an opposite dual triangle $\bar \tau = (s_1, s_0, e)$.

A ribbon $\rho = \{ \tau_i \}_{i = 1}^l$ is a finite tuple of triangles such that $\partial_1 \tau_{i} = \partial_0 \tau_{i+1}$ for all $i = 1, \cdots, l-1$ and such that for each edge $e \in \Gamma^E$ there is at most one triangle $\tau_i$ for which $\tau_i = (\partial_0 \tau_i, \partial_1 \tau_i, e)$. We define $\partial_0 \rho = \partial_0 \tau_1$ and $\partial_1 \rho = \partial_1 \tau_l$. If $\rho$ consists of only direct triangles we say that $\rho$ is a direct ribbon, and if $\rho$ consists of only dual triangles we say ther $\rho$ is a dual ribbon. The empty ribbon is denoted by $\ep = \emptyset$.

A ribbon is positively oriented (positive for short) if the sites of all its direct triangles lie to the right of their edges along the orientation of $\rho$ and vice versa for its dual triangles. The ribbon is negatively oriented (negative) otherwise, \cf Figure \ref{fig:positive and negative ribbon}. 

\begin{figure}[!ht]
\centering
\includegraphics[width = 0.6\textwidth]{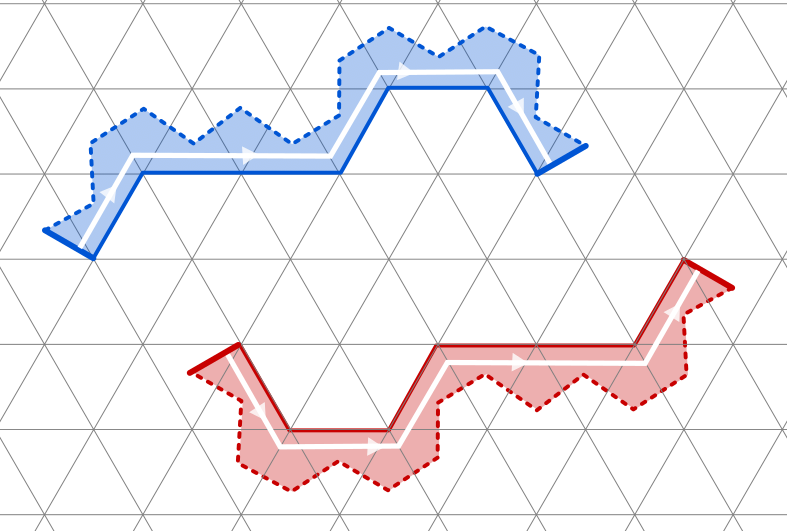}
\caption{An example of a positive ribbon (in red) and a negative ribbon (in blue).}
\label{fig:positive and negative ribbon}
\end{figure}

If we have two ribbons $\rho_1 = \{\tau_i\}_{i = 1}^{l_1}$ and $\rho_2 = \{ \tau_i \}_{i = l_1 + 1}^{l_1 + l_2}$ satisfying $\partial_1 \rho_1 = \partial_0 \rho_2$ then we can concatenate them to obtain a ribbon $\rho_1 \rho_2 = \{ \tau_i \}_{i = 1}^{l_1 + l_2}$. If $\rho_1$ and $\rho_2$ are non-empty then $\partial_0 \rho = \partial_0 \rho_1$ and $\partial_1 \rho = \partial_1 \rho_2$. The opposite ribbon to $\rho = \{\tau_i\}_{i = 1}^l$ is the ribbon $\bar \rho = \bar \tau_l \cdots \bar \tau_1$. If $\rho$ is positively oriented then $\bar \rho$ is negatively oriented and vice versa. The support of a ribbon $\rho = \{ \tau_i = (s_0^{(i)}, s_1^{(i)}, e_i)  \}_{i=1}^l$ is $\supp(\rho) := \{ e_i \}_{i = 1}^l.$

\subsection{Ribbon operators}

\subsubsection{Definitions and basic properties}

For each edge $e \in \Gamma^E$ we define the following operators on $\caH_e$:
\begin{equation}
	L_e^h := \sum_{g \in G} \, |hg \rangle \langle g|, \quad R_e^h := \sum_{g \in G} \, | g \bar h \rangle \langle g |, \quad T_e^g := | g \rangle \langle g|.
\end{equation}
The $L_e^h$ and $R_e^h$ are unitaries, implementing the left and right group action on $\caH_e$ respectively. The $T_e^g$ are projectors.

To each dual triangle $\tau = (s_0, s_1, e)$ we associate unitaries $L^h_{\tau}$ supported on the edge $e$ defined as follows. If $e^* = (f(s_0), f(s_1))$ and $v(s_0) = \partial_0 e$ then $L^h_{\tau} := L_e^h$. If $e^* = (f(s_0), f(s_1))$ and $v(s_0) = \partial_1 e$ then $L^h_{\tau} := R_e^{\bar h}$. If $e^* = (f(s_1), f(s_0))$ and $v(s_0) = \partial_0 e$ then $L^h_{\tau} := L_e^{\bar h}$. Finally, If $e^* = (f(s_1), f(s_0))$ and $v(s_0) = \partial_1 e$ then $L^h_{\tau} := R_e^h$. Similarly, to each direct triangle $\tau = (s_0, s_1, e)$ we associated projectors $T^g_\tau := T^g_{e}$ if $e = (v(s_0), v(s_1))$ and $T^g_{\tau} := T_e^{\bar g}$ if $e = (v(s_1), v(s_0))$.

To each finite ribbon $\rho$ we associate ribbon operators $F^{h, g}$ as follows. If $\ep$ is the trivial ribbon then $F_{\ep}^{h, g} := \delta_{g, 1} \I$. For ribbons composed of a single direct triangle $\tau$ we put $F_{\tau}^{h, g} = T_{\tau}^g$, and for ribbons composed of a single dual triangle $\tau$ we put $F_{\tau}^{h, g} = \delta_{g, 1} L_{\tau}^h$. The ribbon operators for general ribbons are defined inductively by the formula
\begin{equation}
\label{eq:ribbon_decomposition}
	F_{\rho_1 \rho_2}^{h, g} = \sum_{k \in G} \, F_{\rho_1}^{h, k} \, F_{\rho_2}^{\bar k h k, \bar k g}.
\end{equation}
The ribbon operator $F_{\rho}^{h, g}$ is supported on $\supp(\rho)$. Let us define
\begin{equation}
	T_{\rho}^g := F_{\rho}^{e, g}, \quad L_{\rho}^h := \sum_{g \in G} \, F_{\rho}^{h, g}.
\end{equation}
Then $F_{\rho}^{h, g} = L_{\rho}^h T_{\rho}^g = T_{\rho}^g L_{\rho}^h$. The multiplication and adjoint of ribbon operators is given by
\begin{equation} \label{eq:ribbon multiplication and adjoint}
	F_{\rho}^{h_1, g_1} \cdot F_{\rho}^{h_2, g_2} = \delta_{g_1, g_2}  F_{\rho}^{h_1 h_2, g_1}, \quad \big( F_{\rho}^{h, g} \big)^* = F_{\rho}^{\bar h, g},
\end{equation}
and reversing the orientation of a ribbon yields
\begin{equation} \label{eq:ribbon reversal}
	F_{\rho}^{h, g} = F_{\bar \rho}^{\bar g \bar h g, \bar g}.
\end{equation}
Note the natural appearance of the antipode of $\caD(G)$.

We also have the following property:
\begin{equation} \label{eq:sum to identity}
    \sum_k F_\rho^{e,k} = \I.
\end{equation}

If we have two positive ribbons $\rho_1$ and $\rho_2$ with common initial site as in Figure \ref{fig:braiding positive ribbons} then (\cf Eq.~(38) of \cite{kitaev2003fault}):
\begin{equation} \label{eq:braiding positive ribbons}
	F_{\rho_2}^{g_2, h_2} F_{\rho_1}^{g_1, h_1} = F_{\rho_1}^{g_1, h_1} F_{\rho_2}^{\bar g_1 g_2 g_1, \bar g_1 h_2}.
\end{equation}

\subsubsection{Gauge transformations and flux projectors} \label{subsubsec:gauge transformations and flux projections}

For any site $s$ there is a unique counterclockwise closed direct ribbon with end sites equal to $s$ which we denote by $\rho_{\triangle}(s)$. Similarly, there is a unique counterclockwise closed dual ribbon with end sites equal to $s$ which we denote by $\rho_{\star}(s)$. For any site $s$ we define gauge transformations $A_s^h$ and flux projectors $B_s^g$ by
\begin{equation} \label{eq:gauge transformations and flux projectors defined}
	A_s^h := L_{\rho_{\star}(s)}^{h, e}, \quad B_s^g := T_{\rho_{\triangle}(s)}^{e, g}.
\end{equation}

\begin{figure}[!ht]
    \centering
    \includegraphics[width=0.5\linewidth]{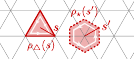}
    \caption{Example of $\rho_{\triangle}(s)$ and $\rho_\star(s')$.}
    \label{fig:enter-label}
\end{figure}

Let us define $U_s : \caD(G) \rightarrow \caA$ by
\begin{equation}
	U_s(g, h) := B_s^g A_s^h,
\end{equation}
extended linearly to $\caD(G)$. One easily checks that $U_s$ is an injective homomorphism of *-algebras, \ie the $B_s^{g} A_s^h$ span a representation of the quantum double algebra $\caD(G)$.

We note that the gauge transformations $A_s^h$ depend only on the vertex $v = v(s)$, so we may write $A_v^h := A_{s}^h$ for any site $s$ with $v = v(s)$. Moreover, the trivial flux projectors $B_s^e$ depend only on the face $f = f(s)$ so we write $B_f := B_s^e$ for any site $s$ with $f = f(s)$.

Finally, we define the projector onto states that are gauge invariant at the vertex $v$ by
\begin{equation}
	A_v := \frac{1}{\abs{G}} \sum_{h \in G} \, A_v^h.
\end{equation}
A straightforward calculation shows that this indeed is a projection.

\section{Convergence of transporters}

In this appendix we prove some technical lemmas needed to construct charge transporters.
The following Lemma, which we prove here for convenience, is well-known (c.f.~\cite[Prop. II.4.9]{TakesakiI}).

\begin{lemma} \label{lem:abstract unitarity}
    Let $\alg{A} \subset \mathfrak{B}(\mathcal{H})$ be a $*$-algebra acting on some Hilbert space $\mathcal{H}$.
    Suppose that $\mathcal{H}_0 \subset \mathcal{H}$ is a dense subset of vectors.
    Let $U_\lambda \in \alg{A}$ be a uniformly bounded net such that for each $\xi \in \mathcal{H}_0$ both $U_\lambda \xi$ and $U_\lambda^* \xi$ converge in the norm topology of $\mathcal{H}$.
    Then $U_\lambda$ converges to some $U \in \alg{A}''$ in the strong-$*$ operator topology.
    If moreover each $U_\lambda$ is unitary, then the limit $U$ is unitary as well.
\end{lemma}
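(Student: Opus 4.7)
The plan is to build the limit operator by first defining it on the dense subspace $\mathcal{H}_0$, extending by uniform boundedness, and then establishing strong-$*$ convergence, bicommutant membership, and unitarity in turn.

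First, for $\xi \in \mathcal{H}_0$ I set $U\xi := \lim_\lambda U_\lambda \xi$ and $V\xi := \lim_\lambda U_\lambda^* \xi$; both limits exist by hypothesis. Letting $M := \sup_\lambda \|U_\lambda\| < \infty$, I observe that $\|U\xi\|, \|V\xi\| \le M\|\xi\|$ on $\mathcal{H}_0$, so by density $U$ and $V$ extend uniquely to bounded linear operators on $\mathcal{H}$ with norm at most $M$.

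Next I would upgrade pointwise convergence on $\mathcal{H}_0$ to strong convergence on all of $\mathcal{H}$ by a routine $3\varepsilon$ argument: given $\xi \in \mathcal{H}$ and $\varepsilon > 0$, approximate by $\xi_0 \in \mathcal{H}_0$ and split $\|U_\lambda\xi - U\xi\|$ into three pieces, two of which are controlled uniformly by $M\|\xi - \xi_0\|$ and the remaining one by known convergence at $\xi_0$. Running the same argument for the adjoints shows $U_\lambda^* \to V$ strongly. Passing to the limit in the polarization identity $\langle U_\lambda\xi, \eta\rangle = \langle \xi, U_\lambda^*\eta\rangle$ then identifies $V = U^*$, so $U_\lambda \to U$ strongly-$*$.

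For membership in $\alg{A}''$: any $T \in \alg{A}'$ commutes with every $U_\lambda$, and the relation $U_\lambda T\xi = TU_\lambda\xi$ passes to the strong limit (using continuity of $T$ on the right and strong convergence on the left) to give $UT = TU$, whence $U \in \alg{A}''$. Finally, if each $U_\lambda$ is unitary then $U_\lambda^* U_\lambda = U_\lambda U_\lambda^* = I$, and I would invoke joint strong-operator continuity of multiplication on uniformly bounded nets to pass both identities to the limit, yielding $U^*U = UU^* = I$. The only subtle point in the whole argument is this last step: joint strong continuity fails in general but holds on bounded sets, and it requires strong convergence of \emph{both} $U_\lambda$ and $U_\lambda^*$ simultaneously — which is precisely why the hypothesis is stated with convergence of $U_\lambda^*\xi$ built in.
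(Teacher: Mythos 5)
Your proof is correct, and the spine of the argument (the $3\varepsilon$ density argument and the identification $V = U^*$ via inner products) is the same as the paper's. You diverge on the last two points, in ways that are both valid and roughly equally elementary. For $U \in \alg{A}''$, you show directly that $U$ commutes with every $T \in \alg{A}'$ by passing the relation $T U_\lambda \xi = U_\lambda T\xi$ to the strong limit; the paper instead invokes the fact that the norm ball of radius $M$ in $\alg{A}''$ is closed in the strong-$*$ operator topology. Your version avoids citing that closure fact and uses only the definition of the commutant, which is arguably more transparent. For unitarity, you pass $U_\lambda^* U_\lambda = I = U_\lambda U_\lambda^*$ through the limit using joint strong continuity of multiplication on bounded nets (correctly observing this needs strong-$*$, not just strong, convergence); the paper instead observes that $\|U\xi\| = \lim_\lambda \|U_\lambda \xi\| = \|\xi\|$ and similarly for $U^*$, so both $U$ and $U^*$ are isometries. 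Both routes are standard; yours makes the dependence on strong-$*$ convergence most explicit, while the paper's isometry argument is slightly shorter. One small terminological quibble: what you call the ``polarization identity'' is really just the defining adjoint relation $\langle U_\lambda\xi, \eta\rangle = \langle \xi, U_\lambda^*\eta\rangle$, but the argument itself is fine.
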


\begin{proof}
    Choose $\epsilon > 0$ and $\xi \in \mathcal{H}$.
    Then there is $\xi_0 \in \mathcal{H}_0$ such that $\| \xi - \xi_0 \| < \epsilon$.
    By assumption, there is $M > 0$ such that $\| U_\lambda \| < M$ for all $\lambda$.
    From this we get
    \[
        \| (U_\lambda - U_\mu) \xi \| = \| (U_\lambda - U_\mu) (\xi-\xi_0)  + (U_\lambda-U_\mu) \xi_0\| \leq 2 M \epsilon + \| (U_\lambda-U_\mu) \xi_0 \|.
    \]
    Since $U_\lambda \xi_0$ converges by assumption, it follows that $U_\lambda \xi$ is Cauchy. We can therefore define $U \xi := \lim_\lambda U_\lambda \xi$.
    From the construction it is clear that $U$ is linear, and because $\| U_\lambda \|$ is uniformly bounded, it follows that $U$ is a bounded operator. A similar argument gives us an operator $\tilde{U}^*$, defined via $\tilde{U}^* \xi = \lim_\lambda U_\lambda^* \xi$.
    
    For all $\xi,\eta \in \mathcal{H}_0$ we have
    \[
    \begin{split}
        \left| \langle \eta, (U^* - \tilde{U}^*) \xi \rangle \right| 
        &= 
        \left| \langle \eta, (U^*-U_\lambda^*) \xi \rangle + \langle \eta, (U_\lambda^*-\tilde{U}^*) \xi \rangle \right| \\
        &\leq 
        \| (U-U_\lambda) \eta \| \|\xi\| + \| \eta\| \| (U_\lambda^*-\tilde{U}^*) \xi \|.
    \end{split}
    \]
    Since the right hand side tends to zero, it follows that $\tilde{U}^* = U^*$, and hence strong convergence of $U_\lambda \to U$ and $U_\lambda^* \to U^*$ gives that $U_\lambda \to U$ in the strong-$*$ operator topology.
    Since the ball of radius $M$ in $\alg{A}''$ is closed in the strong-$*$ operator topology, it follows that $u \in \alg{A}''$.
    
    Finally, suppose that the $U_\lambda$ are unitary.
    By strong-$*$ operator convergence, we have
    \[
        \left\| U \xi \right\| = \lim_\lambda \| U_\lambda \xi \| = \|\xi\|, \quad\quad\quad\quad  \| U^* \xi \| = \lim_\lambda \| U_\lambda^* \xi \| = \|\xi\|
    \]
    for all $\xi \in \mathcal{H}$.
    Hence both $U$ and $U^*$ are isometries, and it follows that $U$ is unitary.
\end{proof}
Note that we need to assume that \emph{both} $U_\lambda \xi$ and $U_\lambda^* \xi$ converge. Since the adjoint is not continuous with respect to the strong operator topology, one does not follow from the other.

\begin{lemma} \label{lem:unitary transporters}
    Let $\rho_1$ be a half-infinite positive ribbon starting at the site $s_0$ and $\rho_2$ a half-infinite negative ribbon starting in $s_1$.
    Suppose that $\{\xi_n\}_{n \in \mathbb{N}}$ is a bridge from $\rho_1$ to $\rho_2$ in the sense of Definition~\ref{def:bridge}, and write $\sigma_n = \rho_{1,n} \xi_n \rho_{2,n}$ as in that definition.
    Finally, choose $g,h \in G$.
    Then $\pi_0(F_{\sigma_n}^{h,g})$ converges in the strong-* operator topology to an operator $F \in \pi_0(\alg{A})''$.
\end{lemma}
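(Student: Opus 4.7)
The plan is to verify the hypotheses of Lemma \ref{lem:abstract unitarity} for the net $U_n := \pi_0(F_{\sigma_n}^{h,g})$ inside $\pi_0(\alg{A})$. Uniform boundedness is immediate: every ribbon operator $F_\sigma^{h,g} = L_\sigma^h T_\sigma^g$ is a product of elementary operators $L_\tau^k$ (unitary) and $T_\tau^k$ (projection), so $\|F_{\sigma_n}^{h,g}\| \leq 1$ for all $n$. As the dense subspace I take $\pi_0(\caA^{\loc})\Omega \subset \caH_0$, which is dense by the GNS construction. By linearity and the uniform bound, it suffices to show that for each fixed $O \in \caA^{\loc}$ both $U_n \pi_0(O)\Omega$ and $U_n^*\pi_0(O)\Omega$ converge in norm.

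Fix such an $O$ supported on a finite set $B \subset \Gamma^E$. Using Definition \ref{def:bridge}, choose $N_0$ so large that for every $n \geq N_0$ the supports of $\xi_n$, $\rho_{1,>N_0}$ and $\rho_{2,>N_0}$ are all disjoint from $B$. Iterating the ribbon decomposition identity \eqref{eq:ribbon_decomposition} we write, for $n \geq N_0$,
\begin{equation*}
F_{\sigma_n}^{h,g} \;=\; \sum_{k_1, k_2 \in G} F_{\rho_{1,N_0}}^{h,\, k_1} \cdot F_{\alpha_n}^{\bar{k}_1 h k_1,\, \bar{k}_1 k_2} \cdot F_{\bar{\rho}_{2,N_0}}^{\bar{k}_2 h k_2,\, \bar{k}_2 g},
\end{equation*}
where $\alpha_n$ denotes the portion of $\sigma_n$ strictly between $\rho_{1,N_0}$ and $\bar{\rho}_{2,N_0}$. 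The two outer factors are independent of $n$, and because $\supp(\alpha_n) \cap B = \emptyset$, each $F_{\alpha_n}^{h',g'}$ commutes with $\pi_0(O)$. The problem therefore reduces to proving convergence of $\pi_0(F_{\alpha_n}^{h',g'})\Omega$ for every fixed $h', g' \in G$.

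This last step is the heart of the argument and uses the frustration-free nature of $\Omega$. All the $\alpha_n$ share the fixed endpoints $s_0' := \partial_1 \rho_{1,N_0}$ and $s_1' := \partial_1 \rho_{2,N_0}$, and for $m \geq n \geq N_0$ the concatenation $\alpha_n \bar{\alpha}_m$ is a closed contractible ribbon at $s_0'$ bounding a finite region of the plane. A standard deformation argument, moving the ribbon across one triangle at a time, shows that the operator attached to such a closed contractible ribbon differs from a multiple of the identity by a linear combination of terms containing factors of the form $A_v^h - \I$ or $B_s^g - \delta_{g,e}\I$ supported in the enclosed region. Since $A_v^h \Omega = \Omega$ (as $A_v \Omega = \Omega$ forces $\Omega$ to lie in the trivial gauge sector) and $B_s^g \Omega = \delta_{g,e}\Omega$, all such error terms annihilate $\Omega$. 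Combining this with \eqref{eq:ribbon_decomposition} and \eqref{eq:ribbon reversal} to re-express $F_{\alpha_n \bar{\alpha}_m}^{\cdot,\cdot}$ in terms of $F_{\alpha_n}$ and $F_{\alpha_m}$, one deduces that $\pi_0(F_{\alpha_n}^{h',g'})\Omega = \pi_0(F_{\alpha_m}^{h',g'})\Omega$ for all $m \geq n \geq N_0$, so the sequence is in fact eventually constant. The adjoint case is handled identically via $(F_\sigma^{h,g})^* = F_\sigma^{\bar{h}, g}$ from \eqref{eq:ribbon multiplication and adjoint}, and Lemma \ref{lem:abstract unitarity} then yields a strong-$*$ limit $F \in \pi_0(\alg{A})''$.

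The main technical hurdle lies in making the deformation argument of the previous paragraph rigorous in the non-abelian setting: one has to track the group-theoretic labels carefully as a ribbon is moved across each individual triangle, and verify that after summing against $k_1, k_2$ in the decomposition above the only surviving terms are those annihilated by $\Omega$. These identities are well-documented for Kitaev's quantum double model (see e.g. \cite{kitaev2003fault, Bombin2008}), but the explicit bookkeeping is somewhat delicate.
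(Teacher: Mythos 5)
Your proposal follows essentially the same route as the paper's proof: verify the hypotheses of Lemma~\ref{lem:abstract unitarity}; for a fixed local observable split $\sigma_n$ into fixed outer segments and an $n$-dependent middle segment disjoint from the observable's support, using the decomposition rule~\eqref{eq:ribbon_decomposition}; commute the middle factor past the observable; and reduce the claim to showing that the middle ribbon operator applied to $\Omega$ is eventually independent of $n$. The paper then simply cites the well-documented fact that the action of a ribbon operator on the frustration-free ground state depends only on the ribbon's endpoints, and is done.

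Where you diverge is in trying to re-derive that fact via the ``closed contractible ribbon'' $\alpha_n \bar{\alpha}_m$, and that step has a genuine gap. For $m > n \geq N_0$, the ribbons $\alpha_n$ and $\alpha_m$ share the segments $\rho_{1, N_0 < \, \cdot \, \leq n}$ and (the relevant portion of) $\rho_2$, so the concatenation $\alpha_n \bar{\alpha}_m$ retraces those edges and is therefore \emph{not} a ribbon in the sense of the paper's definition (no edge may belong to more than one triangle). The deformation argument you invoke requires a genuine ribbon to deform; as stated, the object you want to deform does not exist. This is repairable --- one can first cancel the shared initial and final segments of $\alpha_n$ and $\alpha_m$ using the concatenation and reversal rules \eqref{eq:ribbon multiplication and adjoint}--\eqref{eq:ribbon reversal}, reducing to comparing two ribbons that agree only at their endpoints, and then apply the endpoint-dependence fact --- or, more simply, drop the deformation argument altogether and cite the endpoint-dependence property directly, as the paper does. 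The remainder of your proof (uniform boundedness via $(F_\rho^{h,g})^*F_\rho^{h,g} = F_\rho^{e,g}$, density of $\pi_0(\caA^{\loc})\Omega$, handling the adjoint via $(F_\rho^{h,g})^* = F_\rho^{\bar h, g}$, and invoking Lemma~\ref{lem:abstract unitarity}) matches the paper and is sound.
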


\begin{proof}
    Recall that $(\pi_0, \mathcal{H}, \Omega)$ is the GNS representation for the frustration free ground state $\omega_0$ of the quantum double model. To ease the notation we omit $\pi_0$ on the operators.

    Let $A \in \caA^{\loc}$.
    Then there is some $k \in \mathbb{N}$ such that $\supp(A)^{+1} \cap \sigma_n$ is constant for all $n \geq k$, where the $+1$ superscript denotes a ``fattening'' of the set $\supp(A)$ by one site.
    For all $n \geq k$, write $\rho_{i,n\setminus k}$ for the (finite) ribbon $\rho_{i,n}$ with the first $k$ triangles removed, and define $\widehat{\xi}_n = \rho_{1,n \setminus k} \xi_n \rho_{2,n \setminus k}$.
    That is, $\sigma_n = \rho_{1,k} \widehat{\xi}_n \rho_{2,k}$.
    It follows from the choice of $k$ that $\supp(A)^{+1} \cap \widehat{\xi}_n = \emptyset$ for all $n \geq k$.
    Then, using the decomposition rule for ribbon operators, Eq.~\eqref{eq:ribbon_decomposition}, we have for all $n \geq k$ that
    \[
    \begin{split}
        F_{\sigma_n}^{h,g} A \Omega 
        &=
        \sum_{m_1, m_2 \in G} F_{\rho_{1,k}}^{h,m_1} F_{\widehat{\xi}_n}^{\bar m_1 h m_1, m_2}  F_{\rho_{2,k}}^{\overline{m_1 m_2} h (m_1 m_2), \overline{m_1 m_2} g} A \Omega \\
        &=
        \sum_{m_1, m_2 \in G} F_{\rho_{1,k}}^{h,m_1}   F_{\rho_{2,k}}^{\overline{m_1 m_2} h (m_1 m_2), \overline{m_1 m_2} g} A F_{\widehat{\xi}_n}^{ \bar m_1 h m_1, m_2} \Omega.
        \end{split}
    \]
    In the last step we used locality of the operators.
    Note that for $n,m \geq k$, the ribbons $\widehat{\xi}_n$ and $\widehat{\xi}_m$ have the same start and end points by construction.
    Since the action of ribbon operators on the ground state depends only on the endpoints of the ribbons (see e.g.~\cite{Bombin2008,HamdanThesis}) we have that $F_{\widehat{\xi_n}}^{\bar m_1 h m_1, m_2} \Omega = F_{\widehat{\xi_m}}^{\bar m_1 h m_1, m_2} \Omega$.
    It follows that the sequence $F_{\sigma_n}^{h,g} A \Omega$ converges in norm.
    Because the adjoint of a ribbon operator is again a ribbon operator (on the same ribbon, \cf \eqref{eq:ribbon multiplication and adjoint}), the argument above shows that $(F_{\sigma_n}^{h,g})^* A \Omega$ also converges in norm as $n \to \infty$.
    Note that for ribbon operators we have that $\| F_{\sigma_n}^{h,g} \| \leq 1$, regardless of $\sigma_n$.
    Hence by Lemma~\ref{lem:abstract unitarity}, the result follows.
\end{proof}


\setcounter{biburlnumpenalty}{9000}
\printbibliography

\end{document}